%% file: main.tex
\documentclass[sigconf, nonacm]{acmart}
\usepackage{braket}
\usepackage{csquotes}
\input{macros}

\begin{document}
\title[Quantum Computing for Query Containment of Conjunctive Queries]{Quantum Computing for Query Containment\break of Conjunctive Queries}

\author{Luisa Gerlach}
\email{luisa.gerlach@hu-berlin.de}
\orcid{0009-0002-9114-5856}
\affiliation{\institution{Humboldt-Universität zu Berlin}
	\city{Berlin}
	\country{Germany}
}

\author{Tobias Köppl}
\email{tobias.koeppl@fokus.fraunhofer.de}
\orcid{0000-0003-3548-2807}
\affiliation{\institution{Fraunhofer-Institut für Offene Kommunikationssysteme (FOKUS)}
	\city{Berlin}
	\country{Germany}
}

\author{René Zander}
\email{rene.zander@fokus.fraunhofer.de}
\orcid{0000-0003-0603-5637}
\affiliation{\institution{Fraunhofer-Institut für Offene Kommunikationssysteme (FOKUS)}
	\city{Berlin}
	\country{Germany}
}

\author{Nicole Schweikardt}
\email{schweikn@hu-berlin.de}
\orcid{0000-0001-5705-1675}
\affiliation{\institution{Humboldt-Universität zu Berlin}
	\city{Berlin}
	\country{Germany}
}

\author{Stefanie Scherzinger}
\email{stefanie.scherzinger@uni-passau.de}
\orcid{0000-0002-1960-6171}
\affiliation{\institution{Universität Passau}
	\city{Passau}
	\country{Germany}
}

\begin{abstract}
We address the problem of checking query containment, a foundational problem in database research. Although extensively studied in theory research, optimization opportunities arising from query containment are not fully leveraged in commercial database systems, due to the high computational complexity and sometimes even undecidability of the underlying decision problem. In this article, we present the first approach to applying quantum computing to the query containment problem for conjunctive queries under set semantics. We propose a novel formulation as an optimization problem that can be solved on gate-based quantum hardware, and in some cases directly maps to quantum annealers.
We formally prove this formulation to be correct and present a prototype implementation which we evaluate using simulator software as well as quantum devices. Our experiments successfully demonstrate that our approach is sound and scales within the current limitations of quantum hardware.
In doing so, we show that quantum optimization can effectively address this problem. 
Thereby, we contribute a new computational perspective on the query containment problem.
\end{abstract}

\maketitle

\begingroup\small\noindent\raggedright %
The source code, data, and/or other artifacts have been made available at \url{https://github.com/miniHive/QC4QCoCQ}.
\endgroup

\section{Introduction}

\input{introduction}

\section{Preliminaries}\label{sec:preliminaries}

\input{preliminaries}

\section{Optimization Problem for  $\QCP$}
\label{sec:theory}

\input{theory}

\section{System and Workflow}
\label{sec:system}

\input{system}

\section{Experiments}
\label{sec:experiments}
\input{experiments}

\section{Investigation of Energy Landscapes}
\input{investigation}\label{sec:investigations}

\section{Discussion}
\label{sec:discussion}

\input{discussion}

\section{Related Work}
\label{sec:related_work}
\input{related}

\section{Conclusion}\label{sec:conclusion}
\input{conclusion}

\begin{acks}
 Luisa Gerlach's contribution was 
 funded by the \grantsponsor{DFG-SFB-1404}{Deutsche For\-schungs\-gemeinschaft (DFG, German Research Foundation)}{https://gepris.dfg.de/gepris/projekt/414984028} – Project-ID \grantnum[https://fonda.hu-berlin.de]{DFG-SFB-1404}{414984028 -- CRC 1404 FONDA}.

We are grateful to D-Wave's Leap Quantum LaunchPad program. 

We acknowledge the use of IBM Quantum Credits for this work. The
views expressed are those of the authors, and do not reflect the official
policy or position of IBM or the IBM Quantum team.
\end{acks}

\bibliographystyle{ACM-Reference-Format}
\bibliography{literature}

\newpage
\appendix

\section*{APPENDIX}
 \ \\

\input{appendix_quantum_comp}

\input{appendix_theory}

\input{appendix_system}

\input{appendix_generated_families}
\input{appendix_energy_landscapes}

\end{document}

%% file: macros.tex
\usepackage[export]{adjustbox}

\usepackage{array}

\usepackage{stmaryrd}
\usepackage{xspace}
\usepackage{amsmath}
\usepackage{paralist}
\usepackage{varwidth}
\usepackage{algorithm}
\usepackage{algpseudocode}

\usepackage{textcomp} %

\usepackage{caption}    %
\usepackage{subcaption} %
\usepackage{wrapfig}

\usepackage{tikz}
\usetikzlibrary{automata, graphs,positioning,chains,arrows,snakes,decorations.pathmorphing}

\usetikzlibrary{arrows.meta}

\usepackage{booktabs}
\usepackage{multirow}

\usepackage[normalem]{ulem}

\renewcommand{\epsilon}{\varepsilon}

	\newtheorem{theorem}{Theorem}[section] 
	\newtheorem{lemma}[theorem]{Lemma}
	\newtheorem{corollary}[theorem]{Corollary}

	\theoremstyle{definition}
	\newtheorem{definition}[theorem]{Definition}

	\newtheorem{example}[theorem]{Example}

	\newtheorem{fact}[theorem]{Fact}
	
	\newenvironment{exampleWithEndmarker}{\begin{example}}{\end{example}}      
	
	\tikzset{
		defaultstyle/.style={
			>=stealth, 
			semithick, 
			auto,
			initial text= {},
			initial distance= {3mm},
			accepting distance= {3mm}}}

	\DeclareMathOperator*{\argmin}{\arg\min}            
	
	\newcommand{\nc}[1]{\newcommand{#1}}
	\newcommand{\rnc}[1]{\renewcommand{#1}}
	
	\nc{\myparagraph}[1]{\emph{#1.}}

	\rnc{\leq}{\ensuremath{\leqslant}}
	\rnc{\geq}{\ensuremath{\geqslant}}
	
	\rnc{\le}{\leq}
	\rnc{\ge}{\geq}
	
	\nc{\isdef}{\ensuremath{:=}}
	\nc{\deff}{\isdef}
	
	\nc{\myset}[1]{\ensuremath{\{#1\}}}
	\nc{\setsize}[1]{\ensuremath{|#1|}}
	\nc{\Setsize}[1]{\ensuremath{\big|#1\big|}}
	\nc{\setc}[2]{\set{#1 \, : \, #2}}
	\nc{\Setc}[2]{\Set{#1 \, : \, #2}}
	
	\nc{\aufgerundet}[1]{\ensuremath{\lceil #1 \rceil}}
	\nc{\abgerundet}[1]{\ensuremath{\lfloor #1 \rfloor}}
	
	\nc{\dcup}{\ensuremath{\dot\cup}}
	
	\nc{\ov}[1]{\ensuremath{\overline{#1}}}
	\nc{\tup}[1]{\ensuremath{\bar{#1}}}
	
	\nc{\NN}{\ensuremath{\mathbb{N}}}
	\nc{\NNpos}{\ensuremath{\NN_{\geq 1}}}
        \nc{\ZZ}{\ensuremath{\mathbb{Z}}}
	\nc{\RR}{\ensuremath{\mathbb{R}}}
	\nc{\RRpos}{\ensuremath{\RR_{\geq 0}}}
	\nc{\QQ}{\ensuremath{\mathbb{Q}}}
	\nc{\QQpos}{\ensuremath{\QQ_{> 0}}}
	
	\nc{\poly}{\operatorname{\textit{poly}}}
	\nc{\bigoh}{O}
	\nc{\bigOh}{\bigoh}
	\nc{\trans}{^{\,\mkern-1.5mu\mathsf{T}}}
	
	\nc{\img}[1]{\ensuremath{\textrm{\upshape img}(#1)}} %
	\nc{\dom}[1]{\ensuremath{\textrm{\upshape dom}(#1)}} %
	
	\nc{\emptytuple}{\ensuremath{()}}
	\nc{\emptyword}{\ensuremath{\varepsilon}}

	\nc{\und}{\ensuremath{\wedge}}
	\nc{\Und}{\ensuremath{\bigwedge}}
	\nc{\oder}{\ensuremath{\vee}}
	\nc{\Oder}{\ensuremath{\bigvee}}
	\nc{\nicht}{\ensuremath{\neg}}
	\nc{\impl}{\ensuremath{\to}}
	\nc{\gdw}{\ensuremath{\leftrightarrow}}
	
	\nc{\Semijoin}{\ensuremath{\ltimes}}
	\nc{\proj}{\ensuremath{\pi}}
	\nc{\select}{\ensuremath{\sigma}}
	
	\nc{\free}{\ensuremath{\textrm{\upshape free}}}
	\nc{\quant}{\ensuremath{\textrm{\upshape quant}}}
	\nc{\ar}{\ensuremath{\operatorname{ar}}}
	
	\nc{\sort}{\ensuremath{\textit{sort}}}

	\nc{\Structure}[1]{\ensuremath{\mathcal{#1}}}
	\nc{\A}{\Structure{A}}
	\nc{\B}{\Structure{B}}
	\nc{\C}{\Structure{C}}
	
	\nc{\isom}{\ensuremath{\cong}}
	
	\nc{\Domain}{\ensuremath{\textbf{dom}}}
	\nc{\Var}{\ensuremath{\textbf{var}}}
	\nc{\dbschema}{\ensuremath{\textbf{S}}}
	\nc{\schema}{\dbschema}
	
	\nc{\DB}{\ensuremath{\textbf{I}}} %
	\nc{\DBstrich}{\ensuremath{\textbf{I'}}} %

	\nc{\sql}[1]{\textsf{#1}}	
	\nc{\querycont}{\ensuremath{\sqsubseteq}}
	
	\nc{\QCP}{\ensuremath{\textup{QC-CQ}}}
	
	\nc{\eval}[2]{\ensuremath{#1(#2)}}
	\nc{\semantik}[1]{\ensuremath{\left\llbracket#1\right\rrbracket}}
	\nc{\sem}[1]{{\semantik{#1}}}
	
	\nc{\Vars}{\ensuremath{\textrm{\upshape vars}}}
	\nc{\Cons}{\ensuremath{\textrm{\upshape cons}}}
	\nc{\Atoms}{\ensuremath{\textrm{\upshape atoms}}}
	
	\nc{\myatom}{\ensuremath{\alpha}}
	\nc{\query}{\ensuremath{q}}
     	\nc{\queryi}{\ensuremath{q_i}}
     	\nc{\queryell}{\ensuremath{q_{\ell}}}        
	\nc{\queryOne}{\ensuremath{q_1}}
	\nc{\queryTwo}{\ensuremath{q_2}}
	\nc{\TableauOne}{\ensuremath{\textbf{T}_1}}
	\nc{\AnswerTupleOne}{\ensuremath{v_1}}
	\nc{\TableauTwo}{\ensuremath{\textbf{T}_2}}
	\nc{\AnswerTupleTwo}{\ensuremath{v_2}}
	\nc{\Tableaui}{\ensuremath{\textbf{T}_i}}
	\nc{\AnswerTuplei}{\ensuremath{v_i}}
	\nc{\Tableauell}{\ensuremath{\textbf{T}_{\ell}}}
	\nc{\AnswerTupleell}{\ensuremath{v_{\ell}}}

	\nc{\queryExampleOne}{\ensuremath{q_1}}
	\nc{\queryExampleTwo}{\ensuremath{q_2}}
	\nc{\TableauExampleOne}{\ensuremath{\textbf{T}_1}}
	\nc{\AnswerTupleExampleOne}{\ensuremath{v_1}}
	\nc{\TableauExampleTwo}{\ensuremath{\textbf{T}_2}}
	\nc{\AnswerTupleExampleTwo}{\ensuremath{v_2}}
	\nc{\TableauExampleCycle}{\ensuremath{\textbf{T}_3}}
	\nc{\AnswerTupleExampleCycle}{\ensuremath{()}}

	\nc{\Adom}{\ensuremath{\textrm{\upshape adom}}}
	\nc{\adom}[1]{\ensuremath{\Adom(#1)}} %
	
	\nc{\Ans}{\ensuremath{\textit{ans}}}
	
	\nc{\Hom}{\ensuremath{\textrm{Hom}}}
	\nc{\Embeddings}{\ensuremath{\textrm{Emb}}}
	
	\nc{\Tableau}{\ensuremath{\textbf{T}}}
	\nc{\AnswerTuple}{\ensuremath{v}}

	\nc{\Yes}{\ensuremath{\texttt{yes}}}
	\nc{\No}{\ensuremath{\texttt{no}}}
	
	\nc{\True}{\ensuremath{\texttt{true}}}
	\nc{\False}{\ensuremath{\texttt{false}}}
	
	\nc{\EOE}{\texttt{\upshape EOE}\xspace} %

	\nc{\valuation}{\ensuremath{\nu}}
	
	\nc{\Partition}{\ensuremath{P}}
	\nc{\Attr}[1]{\ensuremath{\textsf{\upshape #1}}}
	
	\nc{\SMOP}{\ensuremath{\text{\normalfont{\textsc{GMOP}}}}}
	\nc{\DreiSMOP}{\ensuremath{\text{\normalfont{\textsc{3GMOP}}}}}
	\nc{\DDreiSMOP}{\ensuremath{\text{\normalfont{\textsc{D-3GMOP}}}}}
	
	\nc{\DreiSAT}{\ensuremath{\text{\normalfont{\textsc{3SAT}}}}}
	\nc{\SDreiSAT}{\ensuremath{\text{\normalfont{\textsc{Special-3SAT}}}}}
	
	\nc{\SetOfClauses}{\ensuremath{\mathscr{C}}}
	\nc{\AnotherSetOfClauses}{\ensuremath{\mathscr{K}}}
	
	\nc{\BVars}[1]{\ensuremath{\textit{Vars}(#1)}}
	
	\nc{\Sol}[1]{\ensuremath {\textit{Sol}(#1)}}
	\nc{\OptSol}[1]{\ensuremath {\textit{Opt}(#1)}}
	\nc{\MaxVal}[1]{\ensuremath {\textit{max}(#1)}}
	
	\nc{\tvalid}{\ensuremath{t_{\textit{Sol}}}}
	
	\nc{\degree}[2]{\ensuremath{\textit{deg}_{#1}(#2)}}
	
	\nc{\Simpl}[1]{{\ensuremath{\hat{#1}}}}
	
	\nc{\fnonisol}{\ensuremath{f_{\textup{no-isol}}}}
	\nc{\ffan}{\ensuremath{f_{\textup{fan-shaped}}}}
	\nc{\favg}{\ensuremath{f_{\textup{avg}}}}
	
	\nc{\sfnonisol}{\ensuremath{\Simpl{f}_{\textup{no-isol}}}}
	\nc{\sffan}{\ensuremath{\Simpl{f}_{\textup{fan-shaped}}}}
	\nc{\sfavg}{\ensuremath{\Simpl{f}_{\textup{avg}}}}

	\nc{\aof}[1]{\ensuremath{\alpha_{#1}}} %
	\nc{\gof}[1]{\ensuremath{G_{#1}}} %
	\nc{\Xof}[1]{\ensuremath{X_{#1}}} %
	\nc{\GSol}[0]{\ensuremath{Sol^G(G)}} %
	\nc{\GOpt}[0]{\ensuremath{Sol^G_\text{Opt}(G)}} %
	\nc{\BSol}[0]{\ensuremath{Sol^B(G)}} %
	\nc{\BOpt}[0]{\ensuremath{Sol^B_\text{Opt}(G)}} %
	
	\nc{\punique}{\ensuremath{p_{\textup{fct}}}}
	\nc{\pac}{\ensuremath{p_{\textup{(3)}}}}
        
	\nc{\subs}{\ensuremath{\tilde{f}}}
	
	\usetikzlibrary{shapes.misc}

\newcommand{\ovxext}{\ensuremath{\ov{x}^{\textup{(e)}}}}
\newcommand{\xext}[1]{\ensuremath{
    x^{\textup{(e)}}_{#1}}}

\newcommand{\ovxsim}{\ensuremath{\ov{x}^{\textup{(s)}}}}
\newcommand{\xsim}[1]{\ensuremath{
    x^{\textup{(s)}}_{#1}}}

\newcommand{\generic}{\ensuremath{\textit{gen}}}   
\newcommand{\pgen}{\ensuremath{p_{\generic}}}
\newcommand{\dgen}{\ensuremath{{\myd}_{\generic}}}
\newcommand{\Cgen}{\ensuremath{C_{\generic}}}

\newcommand{\simplBtwo}{\ensuremath{B^{\text{(s)}}_2}}
\newcommand{\simplntwo}{\ensuremath{n^{\text{(s)}}_2}}

\newcommand{\simplified}{\ensuremath{\textit{sim}}}   
\newcommand{\psimpl}{\ensuremath{p_{\simplified}}}
\newcommand{\dsimpl}{\ensuremath{{\myd}_{\simplified}}}
\newcommand{\Csimpl}{\ensuremath{C_{\simplified}}}

\newcommand{\constrained}{\ensuremath{c}}

\newcommand{\pcgen}{\ensuremath{p^{\constrained}_{\generic}}}
\newcommand{\dcgen}{\ensuremath{{\myd}^{\constrained}_{\generic}}}
\newcommand{\Ccgen}{\ensuremath{C^{\constrained}_{\generic}}}

\newcommand{\pcsimpl}{\ensuremath{p^{\constrained}_{\simplified}}}
\newcommand{\dcsimpl}{\ensuremath{{\myd}^{\constrained}_{\simplified}}}
\newcommand{\Ccsimpl}{\ensuremath{C^{\constrained}_{\simplified}}}

\nc{\puniquesimpl}{\ensuremath{p^{\text{(s)}}_{\textup{fct}}}}
\nc{\pacsimpl}{\ensuremath{p^{\text{(s)}}_{\textup{(3)}}}}

\nc{\LosAngeles}{L.A.}%
\nc{\UnitedStates}{U.S.}%

\nc{\vvarx}{X}
\nc{\vvary}{Y}
\nc{\vvarz}{Z}
\nc{\vvarw}{W}
\nc{\cconstc}{c}

\nc{\myd}{d}

%% file: introduction.tex
The question of whether one query is contained in another is fundamental: it asks whether, for any possible input, the result of one query will always be a subset of the other.
Being able to decide query containment opens numerous opportunities in data management, in particular  query optimization, but also data integration, semantic caching, or access control.
Variations of the problem have been studied in database theory for decades~(cf.\ e.g.\ \cite{ahv-book,DBLP:books/cs/Maier83,DBLP:journals/pacmmod/MarcinkowskiO25}),
and for various query languages.
Yet, due to the high computational cost of deciding query containment (NP-complete for conjunctive queries with set semantics) or outright infeasibility (undecidable for relational algebra),
commercial database systems are restricted to using heuristics or rule-based rewrites.
In this work, we focus on the
query containment problem
for conjunctive queries under set semantics (\QCP)%
~\cite{ChandraMerlin}.
In particular, we investigate
whether this problem can be formulated for solving on quantum computers.

With access to commercially available quantum hardware,
the database research community has begun to explore which core database problems may benefit from offloading to quantum devices~\cite{DBLP:conf/vldb/YuanLCWYYL023,yuan2024quantumcomputingdatabasesoverview,DBLP:journals/pvldb/HaiHCLG25,DBLP:conf/sigmod/WinkerGUYLFM23,DBLP:journals/pvldb/CalikyilmazGGWPSAPG23}.
At this point, this line of research must remain very exploratory, as
today's early hardware prototypes are far from usable as co-processors in industrial-scale database systems. One limited resource is the number of available qubits or links (couplers) between them, often restricting experiments to solving toy examples. Further, today's systems suffer from high levels of hardware noise that accumulates with increasing circuit size, so that results for larger inputs are often erroneous. Given the early technological state-of-the-art, the goal cannot yet realistically be to demonstrate a quantum advantage.
Rather, we can ask how certain problems can be formulated  so that they may eventually be offloaded to quantum hardware, once the technology reaches the required maturity and performance. 
At the same time, joining this quest early rather than late allows the database research community to take an active role in shaping quantum hardware and software ecosystems, as they emerge.

Much of recent research in this field focuses on formulating instances of optimization problems in QUBO form~\cite{DBLP:conf/icde/0003HF24}, where the objective is to minimize polynomials of degree at most two. This enables execution on quantum annealers, which currently provide access to thousands of qubits. Optimization problems involving higher-degree polynomials are typically addressed using the QAOA algorithm \cite{farhi_quantum_2014}.
QAOA follows an annealing-inspired strategy but targets another hardware architecture:
gate-based quantum processing units (QPUs), which can perform universal computations, yet are currently more restricted in the number of available qubits.

In the context of query optimization, the database research community has mainly been exploring the important task of join ordering~\cite{dblp:journals/pacmmod/schonbergersm23,DBLP:journals/pvldb/LiuSS25,DBLP:conf/qce/FranzWGM24,DBLP:conf/q-data/LiuKSS25,10.1145/3579142.3594298,10.1145/3736393.3736690}, as well as plan selection in multi-query optimization~\cite{DBLP:journals/pvldb/Trummer016,DBLP:journals/access/FankhauserSFS23}. These problems fall into the category of \emph{cost-based query optimization}, where the aim is to 
choose among algebraically equivalent plans.
In contrast, being able to decide query containment is a tool for \emph{semantic query optimization}, and may allow to rewrite a given query into a simpler or smaller one. 
Yet to our knowledge, query containment has not yet been explored from a quantum computing perspective.
Unlike many of the aforementioned query optimization tasks, there are no established classical implementations in today’s relational database systems with which to compete.
This makes this problem particularly intriguing.
\smallskip

\myparagraph{Contributions}
This paper makes the following contributions.
\begin{enumerate}[1.]
    \item We present a novel formulation of \QCP, the query containment problem for conjunctive queries under set semantics, as the problem of minimizing a polynomial. While related work oftentimes introduces such polynomials on the level of intuition, we formally prove our formulation to be correct. 

    \item     
    Our problem formulation can be solved on quantum hardware. Specifically, polynomials with degree of up to~2 can be directly mapped to quantum annealers (which currently provide qubits in the range of thousands). For polynomials of arbitrary degree, we use QAOA on circuit-based QPUs (given enough qubits).

    \item For future integration into a query engine, it is vital that our approach to containment checking only errs on the safe side. Otherwise, a query optimizer risks producing incorrect results. In fact, our approach is guaranteed to never produce any false positives, as all positives come with a verifiable witness. 
    
    \item We present a 3-stage workflow to decide \QCP. During preprocessing, we perform a simplification step, which we prove to be correct. Our experiments show that simplification is highly effective: by reducing the number of variables and the degree of the polynomial, the problem can be solved more efficiently.

    \item Our experiments include simulators and real quantum hardware from leading vendors, using both quantum annealers and circuit-based QPUs.
      We evaluate soundness, scalability, and parameter sensitivity. In particular, we demonstrate the positive effects of using constraints in QAOA. 
    
    \item Our experiments indicate that there are instances of the containment problem that display energy landscapes that are particularly well-suited for our approach. This opens up a fresh perspective on the well-studied problem of query containment.
      
\end{enumerate}

\myparagraph{Structure} We review preliminaries on query containment and quantum computing (Sec.~\ref{sec:preliminaries}), before we derive our formalization for checking \QCP\ (Sec.~\ref{sec:theory}).
We present our workflow (Sec.~\ref{sec:system}) and its experimental evaluation (Sec.~\ref{sec:experiments}). We introduce a new perspective on containment problems by investigating their energy landscapes (Sec.~\ref{sec:investigations}), before we discuss our insights at a higher level of abstraction (Sec.~\ref{sec:discussion}). We review the related work (Sec.~\ref{sec:related_work}) and conclude (Sec.~\ref{sec:conclusion}).

%% file: preliminaries.tex
\input{prelim_query-containment}

\input{prelim_quantum_comp}

%% file: prelim_query-containment.tex
\subsection{Conjunctive Queries \& Query Containment}
\label{sec:prelims_qc}

We introduce conjunctive queries and the query containment problem, first by example and then formally. 
We assume set semantics, so relations and query results do not contain duplicate tuples.
In formalizations, we adopt notation from Abiteboul, Hull, and Vianu~\cite{ahv-book}.

\myparagraph{Conjunctive Queries}
We consider conjunctive queries that can be expressed by core
SQL queries of the form ``$\sql{select distinct}$ \ldots $\sql{from}$ \ldots
$\sql{where}$ \ldots'', whose \sql{where}-clause is a conjunction of
equalities, and we use the syntax of \emph{tableau queries} (cf.,
\cite{ahv-book}) to express these queries. The following
example is taken from \cite[Chapter~15]{ArenasEtAl_DBT-Book}.

\begin{example}\label{example:queriesSQLtoCQ}
Consider the schema consisting of three relations
\sql{Person[pid,pname,cid]}, \sql{Profession[pid,prname]},
\sql{City[cid,\allowbreak cname,\allowbreak country]} and the SQL
query $q_1$ shown in Figure~\ref{fig:sql-vs-tableau} on the left.

\begin{figure}[thb]
\centering

\makebox[\columnwidth]{%
\begin{minipage}[t]{0.55\columnwidth}
\vspace{0pt}

{\footnotesize\ttfamily
\begin{tabular}{@{}l@{}}
SELECT DISTINCT A.pname\\
FROM Person A, Profession B, City C\\
WHERE A.pid = B.pid\\
\hspace{1.4em}AND B.prname = \textquotesingle actor\textquotesingle\\
\hspace{1.4em}AND A.cid = C.cid\\
\hspace{1.4em}AND C.cname = \textquotesingle L.A.\textquotesingle\\
\hspace{1.4em}AND C.country = \textquotesingle U.S.\textquotesingle
\end{tabular}
}
\end{minipage}%
\hfill
\begin{minipage}[t]{0.42\columnwidth}
\vspace{0pt}

{\footnotesize
\setlength{\tabcolsep}{3pt}

\begin{tabular}{@{}l|>{\centering\arraybackslash}p{2em}
                 >{\centering\arraybackslash}p{3em}
                 >{\centering\arraybackslash}p{1.8em}@{}}
\sql{Person} & \sql{pid} & \sql{pname} & \sql{cid}\\
\hline
& $\vvarx_1$ & $\vvary_1$ & $\vvarz_1$
\end{tabular}

\vspace{0.6ex}

\begin{tabular}{@{}l|>{\centering\arraybackslash}p{2em}
                 >{\centering\arraybackslash}p{3.4em}@{}}
\sql{Profession} & \sql{pid} & \sql{prname}\\
\hline
& $\vvarx_1$ & $\sql{\textquotesingle actor\textquotesingle}$
\end{tabular}

\vspace{0.6ex}

\begin{tabular}{@{}l|>{\centering\arraybackslash}p{2em}
                 >{\centering\arraybackslash}p{3em}
                 >{\centering\arraybackslash}p{3em}@{}}
\sql{City} & \sql{cid} & \sql{cname} & \sql{country}\\
\hline
& $\vvarz_1$ & \sql{\textquotesingle L.A.\textquotesingle} & \sql{\textquotesingle U.S.\textquotesingle}
\end{tabular}
} %
\end{minipage}%
} %

\caption{SQL query $q_1$ (left) and relations (right).}
\label{fig:sql-vs-tableau}
\end{figure}

The same query can be expressed in the syntax of \emph{tableau
  queries} (a language similar to \emph{Query By
  Example (QBE)}~\cite{ahv-book}) %
via the \emph{answer tuple} $(\vvary_1)$ and
the three relations shown 
 on the right in Figure~\ref{fig:sql-vs-tableau}.

For answering this tableau query on a database instance, one
strives for \emph{valuations}, i.e., mappings $\valuation$ from the
variables $\{\vvarx_1,\vvary_1,\vvarz_1\}$ that occur in the query to the
database entries, such that tuple
$(\valuation(\vvarx_1),\valuation(\vvary_1),\valuation(\vvarz_1))$ belongs to relation $\sql{Person}$, %
$(\valuation(\vvarx_1),\sql{\textquotesingle actor\textquotesingle})$ belongs to relation
$\sql{Profession}$, and %
$(\valuation(\vvarz_1),\sql{\textquotesingle \LosAngeles\textquotesingle},\sql{\textquotesingle \UnitedStates\textquotesingle})$ belongs to relation $\sql{City}$.
Note how the tableau query uses shared variables, such as~$\vvarx_1$, to
declare joins (as familiar from query languages like Datalog).
The \emph{query result}
is the set of
values $\valuation(\vvary_1)$ for all such valuations~$\valuation$. Thus, the answer tuple is a means of declaring projections.
\end{example}

We now formalize these concepts.
We write $\ZZ$ for integers, $\NN$ 
for non-negative integers, and we let
$\NNpos\deff\NN\setminus\{0\}$.

The \emph{domain} of potential database entries is a fixed countably
infinite set $\Domain$; elements in $\Domain$ are called
\emph{constants}, and in concrete examples we will usually write them
in quotes \textquotesingle\ldots\textquotesingle.
A \emph{schema} is a finite set $\dbschema$
of \emph{relation names}, where each $R\in\dbschema$ is equipped
with a fixed \emph{arity} $\ar(R)\in\NN$.
A \emph{database instance} of schema $\dbschema$ (for short:
$\dbschema$-db) is a mapping $\DB$ that associates with each
$R\in\dbschema$ a finite relation $\DB(R)\subseteq \Domain^{\ar(R)}$.

The conjunctive queries considered in this paper are
defined as follows.
We let $\Var$ be 
an
infinite set of \emph{variables}
with $\Var\cap\Domain=\emptyset$.
A \emph{free tuple} is an expression of the form 
$(u_1,\ldots,u_r)$ with $r\in\NN$ and
$u_1,\ldots,u_r\in\Domain\cup\Var$.
For a free tuple $u=(u_1,\ldots,u_r)$ we let $\ar(u)\deff
r$ be the tuple's \emph{arity}, and we let
$\Vars(u)$
and
$\Cons(u)$
be the set of
variables and constants, resp., that occur in $u$. 
For a set~$S$ of free tuples, we let
$\Vars(S)$
and
$\Cons(S)$
be the set of variables and
constants, resp., that occur in some tuple in $S$.

A \emph{conjunctive query} of schema $\dbschema$ (for short:
$\dbschema$-CQ)
is of the form~$(\Tableau,\AnswerTuple)$ 
where $\Tableau$ is a mapping that associates with each
$R\in\dbschema$ a finite set $\Tableau(R)$ of free tuples of arity
$\ar(R)$, and $v$ is a free tuple with
$\Vars(v)\subseteq\Vars(\Tableau)\deff
\bigcup_{R\in\dbschema}\Vars(\Tableau(R))$. Accordingly, %
$\Cons(\Tableau)\deff \bigcup_{R\in\dbschema}\Cons(\Tableau(R))$.
For a query $\query=(\Tableau,\AnswerTuple)$, we call~$\Tableau$ and
$\AnswerTuple$ the query's \emph{tableau} and \emph{answer tuple},
resp., and we let $\Vars(\query)\deff
\Vars(\Tableau)$ be the set of variables
that occur in $\query$. The \emph{active domain} of $\query$ is the
set $\Adom(\query)\deff
\Cons(\AnswerTuple)\cup\Cons(\Tableau)$ of all
constants %
in $\query$.

\begin{example}\label{example:queriesSQLtoCQcontinued}
With this notation, query $\queryExampleOne$ from
Example~\ref{example:queriesSQLtoCQ} is the query
$(\TableauExampleOne,\AnswerTupleExampleOne)$ with
$\AnswerTupleExampleOne = (\vvary_1)$
and
$\TableauExampleOne(\sql{Person}) = \set{(\vvarx_1,\vvary_1,\vvarz_1)}$,
$\TableauExampleOne(\sql{Profession}) = \set{(\vvarx_1,\sql{\textquotesingle actor\textquotesingle})}$,
$\TableauExampleOne(\sql{City}) =$
$ \set{(\vvarz_1, \sql{\textquotesingle \LosAngeles\textquotesingle}, \sql{\textquotesingle \UnitedStates\textquotesingle})}$.
\end{example}

We consider the \emph{set semantics} of conjunctive queries, which
is defined as follows. A \emph{valuation} for a query $\query$ is a
mapping $\valuation\colon\Vars(\query)\cup\Domain \to \Domain$ with
$\valuation(\cconstc)=\cconstc$ for every $\cconstc\in\Domain$.
For a free tuple $u=(u_1,\ldots,u_r)$ of arity $r\geq 0$ we write
$\valuation(u)$ as a shorthand for $(\valuation(u_1),\ldots,\valuation(u_r))$.
An \emph{embedding} of
an $\dbschema$-CQ $\query=(\Tableau,\AnswerTuple)$
into an $\dbschema$-db $\DB$ is a
valuation $\valuation$ for $\query$ such that for every
$R\in\dbschema$ and every free tuple $u\in \Tableau(R)$ we have
$\valuation(u)\in\DB(R)$. We write $\Embeddings(\query,\DB)$ for
the set of all embeddings of $\query$ into $\DB$.
The \emph{query result}
of $\query=(\Tableau,\AnswerTuple)$ on $\DB$ is the set
$\sem{\query}(\DB)
 \deff  
 \setc{\valuation(v)}{\valuation\in\Embeddings(\query,\DB)}$.

A \emph{Boolean query} is an
$\dbschema$-CQ $\query=(\Tableau,\AnswerTuple)$ where $\AnswerTuple$
is the empty tuple~$\emptytuple$, i.e., the tuple of arity $0$.  The query result
$\sem{\query}(\DB)$ then is either the set $\{\emptytuple\}$ or the empty
set, corresponding to the answers ``yes'' and ``no'', respectively. 
See Example~\ref{example:preparation:cycle_chain} for two example queries.

\myparagraph{Query Containment}
For two queries $\queryOne$, $\queryTwo$ of schema $\dbschema$
we write $\queryOne\querycont\queryTwo$ and say that $\queryOne$ is
\emph{contained} in $\queryTwo$, if for all $\dbschema$-dbs $\DB$ we
have $\sem{\queryOne}(\DB)\subseteq\sem{\queryTwo}(\DB)$.
We write $\queryOne\not\querycont\queryTwo$ to indicate that
$\queryOne$ is \emph{not} contained in $\queryTwo$ (i.e.,
$\sem{\queryOne}(\DB)\not\subseteq\sem{\queryTwo}(\DB)$ for some
$\schema$-db $\DB$).

\begin{example}\label{example:QueryContainment}
Consider the schema and query $\queryExampleOne$ from
Examples~\ref{example:queriesSQLtoCQ} and
\ref{example:queriesSQLtoCQcontinued}, and consider query

\noindent
$\queryExampleTwo$:\quad
{\small\ttfamily
\renewcommand{\arraystretch}{0.85} 
\begin{tabular}[t]{@{}l@{}}
SELECT DISTINCT A.pname\\
FROM Person A, Profession B\\ 
WHERE A.pid = B.pid
\end{tabular}
}

\noindent
which is also taken from \cite[Chapter~15]{ArenasEtAl_DBT-Book}.
In the syntax defined above, the query $\queryExampleTwo$ is the query
$(\TableauExampleTwo,\AnswerTupleExampleTwo)$ with
$\AnswerTupleExampleTwo=(\vvary_2)$ and
$\TableauExampleTwo(\sql{Person}) = \set{(\vvarx_2,\vvary_2,\vvarz_2)}$,
$\TableauExampleTwo(\sql{Profession}) = \set{(\vvarx_2,\vvarw_2)}$,
$\TableauExampleTwo(\sql{City}) = \emptyset$.

By a close inspection of $\queryExampleOne$ and $\queryExampleTwo$, it
is easy to see that $\queryExampleTwo$ is more general than
$\queryExampleOne$, i.e.,
$\sem{\queryExampleOne}(\DB)\subseteq\sem{\queryExampleTwo}(\DB)$
holds for every database instance $\DB$. Thus, we have
$\queryExampleOne\querycont\queryExampleTwo$. But there exist
database instances~$\DB$ such that
$\sem{\queryExampleTwo}(\DB)\not\subseteq\sem{\queryExampleOne}(\DB)$,
and thus
we have $\queryExampleTwo\not\querycont\queryExampleOne$.
\end{example}

Chandra and Merlin's seminal paper \cite{ChandraMerlin} provided a
characterization of query containment in terms of homomorphisms. To
state their theorem, we need the following notation.
Given a function $h\colon X\to Y$ (for some sets $X$ and
$Y$) and a $k$-tuple $v=(x_1,\ldots,x_k)\in X^k$, we write
$h(v)$ for the $k$-tuple $(h(x_1),\ldots,h(x_k))$. For
$S\subseteq X^k$ we let $h(S)\deff \setc{h(v)}{v\in S}$.

\begin{definition}\label{def:HomomorphismForQueries}
Let $\queryOne=(\TableauOne,\AnswerTupleOne)$ and $\queryTwo=(\TableauTwo,\AnswerTupleTwo)$ be $\dbschema$-CQs,
for some schema $\dbschema$. A
\emph{homomorphism} from $\queryTwo$ to $\queryOne$ is a mapping
$h\colon\Vars(\queryTwo)\cup\Adom(\queryTwo) \to
\Vars(\queryOne)\cup\Adom(\queryOne)$ satisfying

\begin{enumerate}
\item\label{itemOne:def:homom}
  $h(\cconstc)=\cconstc$, for every $\cconstc\in\Adom(\queryTwo)$,
\item\label{itemTwo:def:homom}
  $h(\AnswerTupleTwo)=\AnswerTupleOne$, and
\item\label{itemThree:def:homom}
  for every $R\in\dbschema$ and every $u\in \TableauTwo(R)$
  we have $h(u)\in\TableauOne(R)$.
\end{enumerate}  
\end{definition}

\begin{theorem}[Chandra and Merlin's Homomorphism Theorem \cite{ChandraMerlin}]\label{thm:ChandraMerlin}
For all schemas $\dbschema$ and all $\dbschema$-CQs $\queryOne$ and $\queryTwo$ we have
\(
  \queryOne\querycont\queryTwo
  \ \ \iff \ \ 
  \text{there exists a homomorphism from $\queryTwo$ to $\queryOne$.}
\)  
\end{theorem}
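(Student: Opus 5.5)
We prove the two directions separately. The ($\Leftarrow$) direction composes a homomorphism with an embedding. The ($\Rightarrow$) direction evaluates both queries on the \emph{canonical database} of $\queryOne$, obtained by freezing the variables of $\queryOne$ into fresh constants.

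\emph{($\Leftarrow$).} Let $h$ be a homomorphism from $\queryTwo$ to $\queryOne$. Fix an $\dbschema$-db $\DB$ and a tuple $t\in\sem{\queryOne}(\DB)$, so $t=\valuation(\AnswerTupleOne)$ for some $\valuation\in\Embeddings(\queryOne,\DB)$. Define $\valuation'\deff\valuation\circ h$ on $\Vars(\queryTwo)$ and extend it by the identity on $\Domain$. This is a valuation for $\queryTwo$: constants in $\Adom(\queryTwo)$ are fixed by $h$ (item~\ref{itemOne:def:homom}) and then by $\valuation$. For every $u\in\TableauTwo(R)$ we have $h(u)\in\TableauOne(R)$ by item~\ref{itemThree:def:homom}, hence $\valuation'(u)=\valuation(h(u))\in\DB(R)$. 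So $\valuation'$ is an embedding of $\queryTwo$ into $\DB$. By item~\ref{itemTwo:def:homom}, $\valuation'(\AnswerTupleTwo)=\valuation(h(\AnswerTupleTwo))=\valuation(\AnswerTupleOne)=t$, so $t\in\sem{\queryTwo}(\DB)$. The only care needed is with constants of $\queryTwo$ occurring in $\AnswerTupleTwo$ or in tableau tuples, which is handled by $h(\cconstc)=\cconstc$.

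\emph{($\Rightarrow$).} Assume $\queryOne\querycont\queryTwo$. Since $\Domain$ is infinite and $\Adom(\queryOne)\cup\Adom(\queryTwo)$ is finite, choose an injective map $\theta\colon\Vars(\queryOne)\to\Domain\setminus(\Adom(\queryOne)\cup\Adom(\queryTwo))$ and extend it by the identity on $\Domain$. Let $\DB_{\queryOne}(R)\deff\theta(\TableauOne(R))$ for each $R\in\dbschema$. Then $\theta$ is an embedding of $\queryOne$ into $\DB_{\queryOne}$, so $\theta(\AnswerTupleOne)\in\sem{\queryOne}(\DB_{\queryOne})\subseteq\sem{\queryTwo}(\DB_{\queryOne})$. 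Hence some embedding $\mu$ of $\queryTwo$ into $\DB_{\queryOne}$ satisfies $\mu(\AnswerTupleTwo)=\theta(\AnswerTupleOne)$.

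Now set $h\deff\theta^{-1}\circ\mu$ on $\Vars(\queryTwo)\cup\Adom(\queryTwo)$. Here $\theta^{-1}$ is the inverse of $\theta$ on frozen constants and the identity on all other constants. Since $\theta$ is injective and its frozen values avoid both active domains, $\theta^{-1}$ is well defined on $\Adom(\DB_{\queryOne})$, and it maps into $\Vars(\queryOne)\cup\Adom(\queryOne)$ because every entry of $\DB_{\queryOne}$ is a frozen variable or a constant of $\TableauOne$.

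The three conditions then follow. Constants of $\queryTwo$ are fixed by $\mu$ and, not being frozen values, by $\theta^{-1}$. Applying $\theta^{-1}$ to $\mu(\AnswerTupleTwo)=\theta(\AnswerTupleOne)$ gives $h(\AnswerTupleTwo)=\AnswerTupleOne$. For $u\in\TableauTwo(R)$, $\mu(u)\in\theta(\TableauOne(R))$, so $h(u)\in\TableauOne(R)$.

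\textbf{Main obstacle.} The delicate part is the bookkeeping that makes $\theta^{-1}$ a legitimate inverse: fresh constants must avoid both active domains. This matters for constants that occur in $\AnswerTupleTwo$ or in $\TableauTwo$ but not in $\queryOne$. If such a constant $\cconstc$ occurs in $\TableauTwo(R)$, then $\mu$ maps it to itself, so $\cconstc$ must appear in $\DB_{\queryOne}(R)$ and hence in $\Adom(\queryOne)$. If it occurs in $\AnswerTupleTwo$, the equation $\mu(\AnswerTupleTwo)=\theta(\AnswerTupleOne)$ forces $\cconstc$ to appear in $\theta(\AnswerTupleOne)$, and since it is not a frozen value it must lie in $\Cons(\AnswerTupleOne)\subseteq\Adom(\queryOne)$. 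So the image of $h$ indeed lands in $\Vars(\queryOne)\cup\Adom(\queryOne)$, and item~\ref{itemOne:def:homom} holds.
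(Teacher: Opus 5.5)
The paper gives no proof of this theorem of its own: it cites Chandra and Merlin and uses the result as a black box, for instance in the proof of Theorem~\ref{thm:correctnessOfGenericChoice}. Your argument is the standard one and it is correct: for ($\Leftarrow$) you compose the homomorphism with an embedding, and for ($\Rightarrow$) you evaluate both queries on the canonical database of $\queryOne$.

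Two details you handle correctly are worth keeping explicit:
\begin{itemize}
\item The frozen constants must avoid $\Adom(\queryTwo)$ as well as $\Adom(\queryOne)$. Otherwise $\theta^{-1}$ could map a constant of $\queryTwo$ to a variable of $\queryOne$, which would violate condition~(1) of Definition~\ref{def:HomomorphismForQueries}.
\item You claim that $\mu$ sends every variable of $\queryTwo$ to an entry of $\DB_{\queryOne}$. This relies on $\Vars(\queryTwo)=\Vars(\TableauTwo)$, which holds because the paper's definition of a CQ requires $\Vars(\AnswerTuple)\subseteq\Vars(\Tableau)$. You use this implicitly.
\end{itemize}
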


\begin{example}\label{example:QueryContainmentContinued}
Consider the queries $\queryExampleOne,\queryExampleTwo$ from
Examples~\ref{example:queriesSQLtoCQcontinued} and~\ref{example:QueryContainment}.
Let
$h\colon\set{\vvarx_2,\vvary_2,\vvarz_2,\vvarw_2}\to\set{\vvarx_1,\vvary_1,\vvarz_1,\sql{\textquotesingle actor\textquotesingle},\allowbreak\sql{\textquotesingle\LosAngeles\textquotesingle},\sql{\textquotesingle \UnitedStates\textquotesingle}}$
be the mapping defined via
$h(\vvarx_2)=\vvarx_1$,
$h(\vvary_2)=\vvary_1$,  $h(\vvarz_2)=\vvarz_1$, and $h(\vvarw_2)=\sql{\textquotesingle
  actor\textquotesingle}$.
It can easily be verified that $h$
is a homomorphism from $\queryExampleTwo$ to $\queryExampleOne$ (because
$h(\AnswerTupleExampleTwo)=h((\vvary_2))=(h(\vvary_2))=(\vvary_1)
=\AnswerTupleExampleOne$
and \allowbreak
$h((\vvarx_2,\vvary_2,\vvarz_2)) \allowbreak =(\vvarx_1,\vvary_1,\vvarz_1)\in\TableauExampleOne(\sql{Person})$ and
\allowbreak $h((\vvarx_2,\vvarw_2))\allowbreak =(\vvarx_1,\sql{\textquotesingle actor\textquotesingle})\in\TableauExampleOne(\sql{Profession})$).
Theorem~\ref{thm:ChandraMerlin} yields that $\queryExampleOne\querycont\queryExampleTwo$.
Yet there is no homomorphism from $\queryOne$ to
$\queryTwo$, as $\TableauExampleOne(\sql{Profession})$ contains
the tuple $(\vvarx_1,\sql{\textquotesingle actor\textquotesingle})$, and any homomorphism~$h$ would have
to map the constant $\sql{\textquotesingle actor\textquotesingle}$ to itself, i.e.,
$h(\sql{\textquotesingle actor\textquotesingle})=\sql{\textquotesingle actor\textquotesingle}$, but
$\TableauExampleTwo(\sql{Profession})$ does not contain any tuple with
an entry $\sql{\textquotesingle actor\textquotesingle}$ in its second component. Hence,
Theorem~\ref{thm:ChandraMerlin} yields that $\queryExampleTwo\not\querycont\queryExampleOne$.
\end{example}

The \emph{query containment problem for conjunctive queries} w.r.t.\
set semantics ($\QCP$) is the following decision problem:
given a pair $(\queryOne, \queryTwo)$ of $\dbschema$-CQs (for some schema $\dbschema$),
decide whether $\queryOne\querycont\queryTwo$.

By Theorem~\ref{thm:ChandraMerlin}, $\QCP$ can be solved by an
algorithm that checks (e.g., by an exhaustive search)
whether there is a homomorphism from $\queryTwo$ to $\queryOne$. 
This problem is NP-complete~\cite{ChandraMerlin},  
even when
inputs are restricted to Boolean queries and any fixed schema consisting of one (or
several) relation(s) of arity~2.

Any algorithm that solves $\QCP$ also yields an algorithm for checking if two queries $\queryOne$
and $\queryTwo$ are \emph{equivalent} --- simply, by checking whether both
$\queryOne\querycont\queryTwo$ and $\queryTwo\querycont\queryOne$ hold.

%% file: prelim_quantum_comp.tex
\subsection{Quantum\,Computing\;\&\;Simulated\,Annealing}
\label{sec:preliminaries-quantum}

\newcommand{\R}{\mathbb R}
\newcommand{\Z}{\mathbb Z}
\newcommand{\N}{\mathbb N}
\newcommand{\Q}{\mathbb Q}
\newcommand{\D}{\mathcal D}
\newcommand{\F}{\mathcal F}

This section is written from a non-physicist perspective and aims to provide an overview, giving sufficient information to database researchers for informed use of the concepts from quantum computing considered in this paper; more background can be found in Appendix~\ref{sec:appendix_quantum_comp}. 
Particular focus is put on quantum computing methods for solving optimization problems of the form
\begin{align}
\label{optimization_problem}
\ov{x}\ \in \ \argmin_{\ov{x}\in C}p(\ov{x}).
\end{align}
Here, $p\colon \myset{0,1}^n\rightarrow\R$ is the \emph{objective function}; it is 
a polynomial on $n$ binary variables $\ov{x}=(x_1,\dotsc,x_n)$ with real-valued coefficients.
The \emph{search space} is the set
$C\subseteq\myset{0,1}^n$. The goal is to find a tuple $\ov{x}\in C$ such that the value $p(\ov{x})$ is as small as possible. This optimization problem is called \emph{unconstrained} if $C=\myset{0,1}^n$ and \emph{constrained} if $C\subsetneq\myset{0,1}^n$.
We write $p$ as a sum of \emph{monomials}, with each monomial being of the form $c\cdot\prod_{i=1}^{n}(x_i)^{j_i}$ where $c$ is a real-valued coefficient, and $j_i$ is a non-negative integer exponent associated with variable~$x_i$ (and, by convention, $(x_i)^0=1$).
The \emph{degree} of such a monomial is defined as the sum $\sum_{i=1}^n{j_i}$, and the degree of~$p$ is the maximum degree of its monomials.
Since each variable $x_i$ can only be assigned the values~0 and~1, and $0\cdot0=0$ and $1\cdot1=1$, any exponent $j_i\geq 1$ can be replaced with any positive integer $j'_i$ without changing the function values of the polynomial.
If the degree of $p$ is $\leq 2$, %
we can assume without loss of generality that \emph{every}
monomial of $p$ has degree exactly 2;
this setting is known as \emph{Quadratic Unconstrained Binary Optimization} problems (for short: QUBOs), cf.\ e.g.~\cite{DBLP:conf/icde/0003HF24}.

\myparagraph{Quantum Approximate Optimization Algorithm for gate-based quantum computers}\label{sec:QAOA}
QAOA \cite{farhi_quantum_2014} is a hybrid quantum-classical algorithm designed for combinatorial optimization. It iteratively evaluates the objective function's expected value for a parameter-dependent quantum state on a quantum processor, using classical feedback to update parameters and converge towards an optimal state. The quantum circuit consists of several \emph{layers} that transition the system from the ground state of a so-called \emph{mixing Hamiltonian} (describes the initial state) to that of the so-called \emph{problem Hamiltonian} (describes the objective function). The problem solution is found by sampling the final state. 
Effectiveness is enhanced by %
incorporating problem constraints into the QAOA quantum circuit design~\cite{bartschi2020grover, zander-solving-2024}. Then, only a subset of feasible solutions is explored, and the reduced size of the search space improves results. 

The main complexity parameters of QAOA are the number of \emph{layers} of the circuit, the number of \emph{iterations} of the classical optimizer, and the number of \emph{shots} performed to estimate the expected value of the objective function in each iteration.

The complexity of the quantum circuit depends on the specific problem constraints and the objective function $p$. 
In the following, we argue that the complexity scales polynomially in the number~$n$ of variables. This indicates that the problem can be efficiently executed on a quantum computer:
For an arbitrary \emph{quadratic} polynomial $p$ of~$n$ binary variables $\ov{x}=(x_1,\ldots,x_n)$ and the constraints considered in this work, the circuits for QAOA with~$\ell$ layers have a favorable scaling of $\mathcal O(\ell\cdot n^2)$ gates, $\mathcal O(\ell\cdot n)$ depth (runtime), and $\mathcal O(n)$ qubits for both the constrained and unconstrained version \cite{bartschi2020grover, zander-solving-2024}. 
In general, for a fixed degree $d\leq n/2$ square-free polynomial $p$ of $n$ binary variables, the maximum number of monomials is $\mathcal O(n^d)$ (practical instances are typically much sparser.). A degree $d$ monomial requires $\mathcal O(d)$ gates and circuit depth, and only a single ancilla qubit \cite{Khattar2025}. It follows that the circuits for QAOA with~$\ell$ layers require $\mathcal O(\ell\cdot d\cdot n^d)$ gates and depth, and $\mathcal O(n)$ qubits, which is a polynomial scaling in the number of variables $n$.
Hence,
they can be efficiently executed on a quantum computer.

Currently available quantum hardware --- referred to as Noisy Intermediate-Scale Quantum (NISQ) hardware --- suffers from significant drawbacks: only a small number of physical qubits (e.g., up to 156 qubits on state-of-the art IBM devices) are available, and computations are affected by limitations such as short coherence times that allow only shallow circuits to be run, and noise leading to gate and readout errors. Therefore, quantum algorithms currently are still mostly studied using simulators that run on classical hardware. In general, handling quantum states of $n$ qubits requires storing and manipulating $2^n$ complex amplitudes, an exponential scaling that limits classical emulation to the small-scale regime. In practice, matrix product state simulators (e.g., provided by Qiskit Aer \cite{qiskit2024}) offer an efficient tool for approximately simulating quantum computations that rely on only a restricted amount of entanglement~\cite{vidal2003efficient}.

\myparagraph{Simulated annealing and quantum annealing}
\label{sec:SAQA}
Next, we present the foundations of two further
methods that can be used to solve a QUBO, i.e., problem~\eqref{optimization_problem} (see above)
in the \emph{unconstrained} case where all monomials of $p$ have \emph{degree~2}.
These methods are referred to as \emph{simulated annealing} and \emph{quantum annealing}. The term ``annealing'' is inspired by a cooling technique from metallurgy: heated metal is cooled in a controlled way such that its atoms have sufficient time to arrange themselves and form stable crystals. This results in a low-energy state of the metal that is considered to be stable. Both simulated annealing and quantum annealing are heuristic search methods moving in a controlled way over a so-called \emph{solution landscape} or \emph{energy landscape}
(i.e., possible %
assignments of the binary variables; henceforth these are called \emph{states})
for a sufficiently long time until they reach a stable and minimizing (w.r.t. the polynomial value) state. Thus, both methods contain the term ``annealing'' in their names. A minimum state corresponds to an optimal solution.
In this work, we use three different solvers provided by the D-Wave library
\cite{DWaveLibrary-Link},
which are briefly described in the following.

\myparagraph{Simulated annealing} This solver emulates a controlled cooling procedure by considering a parameter $\beta$, which is related to a possible temperature $T$ of a cooling process via
$\beta = 1/\left(k_b\cdot T\right)$, where $k_b$ denotes the Boltzmann constant. The cooling process is defined by a certain range for $\beta$ limited by some $\beta_\text{start}$ and $\beta_\text{end}$ with $\beta_\text{start}\leq \beta\leq\beta_\text{end}$.
Typically, one chooses for $\beta_\text{start}$ a small value to start at a relatively high temperature, while $\beta_\text{end}$ is large corresponding to a low temperature;
$\beta$ influences the probability of circumventing a barrier (value difference) in the energy landscape between neighboring states
(i.e., states that differ in one bit).
As $\beta$ increases (i.e., $T$ decreases), the probability of leaving a lower position decreases, such that the solver becomes less exploratory with time
and finally terminates at a possible global minimum. Besides $\beta$, the difference between the values of neighboring states influences the probability of moving between them. In case of large positive differences
w.r.t.\
$1/\beta$, it is less likely that the solver leaves a current state. Otherwise, if there are negative differences, i.e. if the value of the neighboring state is lower, the current state will be left and the lower-valued neighbor will always be attained (cf., \cite[Sec. 3.1]{mcgeoch2022adiabatic}). 

\myparagraph{Quantum annealing} Contrary to simulated annealing, quantum annealing does not emulate a cooling process, but rather a controlled transfer from a quantum state that can be easily prepared to a quantum state that yields with high probability a global minimizer within a given energy landscape.
This is done by introducing an annealing time $t$ that ranges from $0$ to some final time $t_f$.
For a certain annealing time~$t$, the quantum state is a weighted sum of the initial state and the final state. The weights in this sum 
are slowly
adapted such that at the end of the annealing process, 
the final state is overweighted compared to the initial state (cf., \cite[Sec.~3.2 \& Chap.~4]{mcgeoch2022adiabatic} and \cite{rajak2023quantum}). In contrast to simulated annealing, the probability of circumventing energy barriers for quantum annealing only depends on the width of the barrier 
(distance between states)
rather than the height
(difference between values).

\myparagraph{Simulated quantum annealing} This solver is a classical version of quantum annealing, i.e.\ it emulates the behavior of a quantum annealer on a classical computer. The weights for the initial state and the final state are again determined by a parameter $\beta$, which is chosen in a similar way as in the
case of simulated annealing \cite{crosson2016simulated}.

Since only a finite number of betas and annealing times can be considered, the user has to provide a parameter denoted by \emph{num\_sweeps} (number of sweeps). As a result, the range for $\beta$ and the annealing time are discretized by \emph{num\_sweeps} different values that are used within the annealing process. In addition to that, a \emph{schedule} can be chosen to determine how $\beta$ or the annealing time can be changed during an annealing process --- either in a linear, or a geometric or a user-defined way. Since heuristic solution methods may not find a global minimizer after a single run, it might be required to run the solution method
for multiple times. In D-Wave, one can fix the number of runs by providing an appropriate value for  \emph{num\_reads} (number of reads). In terms of the solvers discussed above, a single run/read
is equivalent to a single annealing process.

%% file: theory.tex
This section's aim is to translate an input instance $(\queryOne,\queryTwo)$ of the query containment problem $\QCP$ into a triple $(p,\myd,C)$, where $p$ is an integer-valued polynomial on binary variables, $\myd$ is an integer, and $C$ is a subset of the domain of $p$, such that the following is true:
\begin{equation}\label{eq:theory-goal}
  \min_{\ov{x}\in C} p(\ov{x}) \ = \myd
  \ \ \iff \ \
  \queryOne\querycont \queryTwo.
\end{equation}
Notice that the minimization problem on the left-hand-side is similar to the one in equation \eqref{optimization_problem}; it can thus be solved using the methods described in Section~\ref{sec:preliminaries-quantum}. Concerning the right-hand-side, by Theorem~\ref{thm:ChandraMerlin}, we can replace ``$\queryOne\querycont\queryTwo$'' by ``there exists a homomorphism from $\queryTwo$ to $\queryOne$''.
Recall that such a homomorphism is a mapping $h\colon A_2\to A_1$, where $A_\ell=\Vars(\queryell)\cup\Adom(\queryell)$ for $\ell\in\{1,2\}$, that satisfies conditions~(1)--(3) of Definition~\ref{def:HomomorphismForQueries}.

In the following, we give a step-by-step description of how to construct the triple $(p,\myd,C)$ for an input instance $(\queryOne,\queryTwo)$ of $\QCP$.
Throughout this section, we let $\queryOne=(\TableauOne,\AnswerTupleOne)$ and
$\queryTwo=(\TableauTwo,\AnswerTupleTwo)$, and 
for $\ell\in\{1,2\}$ we let $r_\ell=\ar(\AnswerTupleell)$ and $\AnswerTupleell=(v_{\ell,1},\ldots,v_{\ell,r_\ell})$.

\subsection{Preparation Step}\label{sec:preparation}

The \emph{preparation step} allows us to recognize certain cases that trivially violate query containment. We also establish some notation that will help us in the construction of the polynomial $p$.

\myparagraph{Trivial violations of containment}
The conditions of Definition~\ref{def:HomomorphismForQueries} are obviously violated if (but not necessarily ``only if'')
\begin{enumerate}[1.]
	\item the answer tuples have different arities, i.e., $r_2\neq r_1$
          (then, condition \eqref{itemTwo:def:homom} %
          is violated), or
	\item a constant in the answer tuple of~$\queryTwo$ would not be mapped to itself in the answer tuple of~$\queryOne$, i.e., there is a $k\leq r_2$ such that $v_{2,k}\in\Domain$ and $v_{1,k}\neq v_{2,k}$ (then, satisfying condition \eqref{itemTwo:def:homom} will violate condition \eqref{itemOne:def:homom})%
    , or
	\item the same variable appears more than once in the answer tuple of~$\queryTwo$ but would be mapped to different elements in the answer tuple of~$\queryOne$, i.e., there are $k,k'\leq r_2$ such that $v_{2,k}=v_{2,k'}$ but $v_{1,k}\neq v_{1,k'}$ (then, condition \eqref{itemTwo:def:homom} %
    is violated), or
	\item the tableau of~$\queryTwo$ contains a non-empty relation that is empty in the tableau of~$\queryOne$, i.e., there is an $R\in\dbschema$ such that $\TableauTwo(R)\neq \emptyset=\TableauOne(R)$ (then, condition \eqref{itemThree:def:homom} %
    is violated).
\end{enumerate}

In each of these four cases, we can terminate and conclude that ``$\queryOne\not\querycont\queryTwo$''.
Otherwise, we proceed as described in the remainder of this section in order to construct the desired triple $(p,\myd,C)$.

\begin{exampleWithEndmarker}\label{example:RunningExamplePrepOne}
	Let $\queryExampleOne$, $\queryExampleTwo$ be the queries considered in the examples from Section~\ref{sec:prelims_qc}. Upon input of the query pair $(\queryExampleOne,\queryExampleTwo)$, none of the cases 1.--4.\ apply, and we therefore will proceed as described in the rest of this section.
        But when exchanging the roles of $\queryExampleOne$ and $\queryExampleTwo$,
the preparation step
	terminates and returns that $\queryExampleTwo\not\querycont\queryExampleOne$, because case~4.\ applies since $\TableauExampleOne(\sql{City})\neq\emptyset= \TableauExampleTwo(\sql{City})$.
\end{exampleWithEndmarker}

\myparagraph{Bit-matrix}
From now on, we assume that none of the cases 1.--4.\ apply. %
Our goal is to construct a polynomial $p$ on binary variables~$\ov{x}$, an integer $\myd$ and a set~$C$ such that Equivalence~\eqref{eq:theory-goal} holds.

The polynomial $p$ will use binary variables of the form $x_{i,j}$ for $(i,j)\in A_2\times A_1$, with
$A_2\deff\Vars(\queryTwo)\cup\Adom(\queryTwo)$ and $A_1\deff\Vars(\queryOne)\cup\Adom(\queryOne)$.
We view an \emph{assignment} of the variables $x_{i,j}$ with values in $\myset{0,1}$ as a \emph{bit-matrix} that can serve as a
representation of a function $h\colon A_2\to A_1$. 
Namely, $h$ is represented by the
\emph{bit-matrix} $\ov{x}\deff \big(x_{i,j})_{(i,j)\in A_2\times A_1}$
where, for all $i\in A_2$ and $j\in A_1$, we have $x_{i,j}\in\{0,1\}$ and $x_{i,j}=1$ $\iff$ $h(i)=j$.

\begin{exampleWithEndmarker}\label{example:bit-matrix}
  Consider queries $\queryExampleOne,\queryExampleTwo$ and homomorphism~$h$ from Example~\ref{example:QueryContainmentContinued}.
  We have
  $A_2=\myset{\vvarx_2,\vvary_2,\vvarz_2,\vvarw_2}$ and
  $A_1=\myset{\vvarx_1,\vvary_1,\vvarz_1,\allowbreak \sql{\textquotesingle actor\textquotesingle},\sql{\textquotesingle\LosAngeles\textquotesingle},\sql{\textquotesingle\UnitedStates\textquotesingle}}$.
  The bit-matrix representation of $h$ is:
\[
  \footnotesize
	 \setlength{\arraycolsep}{3pt}%
	 \renewcommand{\arraystretch}{0.95}%
	 \bordermatrix{
		 & \vvarx_1 & \vvary_1 & \vvarz_1 & \sql{\textquotesingle actor\textquotesingle} & \sql{\textquotesingle\LosAngeles\textquotesingle} & \sql{\textquotesingle\UnitedStates\textquotesingle}   \cr
		 \vvarx_2   & 1   & 0   & 0   & 0   & 0   & 0    \cr
		 \vvary_2   & 0   & 1   & 0   & 0   & 0   & 0    \cr
		 \vvarz_2   & 0   & 0   & 1   & 0   & 0   & 0    \cr
		 \vvarw_2   & 0   & 0   & 0   & 1   & 0   & 0    \cr
		 }
               \]
               E.g., the entry in row $\vvarx_2$ and column $\vvarx_1$ is addressed as the bit $x_{i,j}$ with $i=\vvarx_2$ and $j=\vvarx_1$;
              in the above matrix, this bit is set to 1.
 \end{exampleWithEndmarker}

Note that a bit-matrix $\ov{x}\in\{0,1\}^{A_2\times A_1}$ represents a mapping from $A_2$ to $A_1$ if, and only if, for every $i\in A_2$ there is exactly one $j\in A_1$ with $x_{i,j}=1$ (i.e., exactly one bit per matrix row is 1).

Next, we make use of the fact that we already know that the query pair $(\queryOne,\queryTwo)$ satisfies none of the cases 1.--4. %
We can restrict our  attention to mappings $h\colon A_2\to A_1$ where
$h(v_{2,k})=v_{1,k}$ for every $k\leq r_2$.
Furthermore, condition \eqref{itemOne:def:homom} in Definition~\ref{def:HomomorphismForQueries} demands that
$h(\cconstc)=\cconstc$ for all $\cconstc\in \Adom(\queryTwo)$.
Thus, the values of $h$ are predefined on all elements in $\Adom(\queryTwo)\cup\Vars(\AnswerTupleTwo)$. Therefore, instead of $A_2$, we only consider $B_2\deff A_2\setminus(\Adom(\queryTwo)\cup\Vars(\AnswerTupleTwo))$, and
we extend any bit-matrix $\ov{x}\in\{0,1\}^{B_2\times A_1}$ to a bit-matrix $\ovxext\in\{0,1\}^{A_2\times A_1}$ by letting $\xext{i,j}=x_{i,j}$ for all $(i,j)\in B_2\times A_1$ and
\begin{enumerate}[(a)]
	\item\label{item:aForExtendedBitMatrix} for all $\cconstc\in\Adom(\queryTwo)$ and $j\in A_1$ we let
	$\xext{\cconstc,j}= 1$ $\Longleftrightarrow$ $j=\cconstc$,
	
	\item\label{item:bForExtendedBitMatrix} for all $i\in\Vars(\AnswerTupleTwo)$ and $j\in A_1$ we let \\
	$\xext{i,j}=1$ $\Longleftrightarrow$ there is a $k\leq r_2$ with $i=v_{2,k}$ and $j=v_{1,k}$.
\end{enumerate}
This ensures that if $\ovxext$ represents some mapping $h\colon
A_2\to A_1$, then this mapping satisfies the conditions \eqref{itemOne:def:homom} and \eqref{itemTwo:def:homom}
of Definition~\ref{def:HomomorphismForQueries}.

We use the same notation when considering the \emph{binary variables} $\ov{x}=(x_{i,j})_{(i,j)\in B_2\times A_1}$ and write $\ovxext$ for the matrix obtained by extending the variable-matrix $\ov{x}$ with additional rows $i$ for all $i\in \Adom(\queryTwo)\cup\Vars(\AnswerTupleTwo)$; each additional row  is already filled with elements from $\myset{0,1}$,
defined %
as in the above items~\eqref{item:aForExtendedBitMatrix} and~\eqref{item:bForExtendedBitMatrix}.

\begin{exampleWithEndmarker}\label{example:preparation:sqlqueries}
Recall from Example~\ref{example:bit-matrix} the
particular sets $A_2$ and $A_1$. Note that $\Adom(\queryExampleTwo)=\emptyset$ and $\Vars(\AnswerTupleExampleTwo)=\myset{\vvary_2}$. Hence, $B_2=A_2\setminus\myset{\vvary_2}=\myset{\vvarx_2,\vvarz_2,\vvarw_2}$.
We will use binary variables $x_{i,j}$ for $(i,j)\in B_2\times A_1$. I.e., we consider the variable-matrix
$\ov{x}= (x_{i,j})_{(i,j)\in B_2\times A_1}$.
The extension $\ovxext$ of $\ov{x}$ is obtained by adding a row for $\vvary_2$ (since $\Adom(\queryExampleTwo)\cup\Vars(\queryExampleTwo)=\myset{\vvary_2}$). This yields the matrix $\ovxext =$
\[
 \footnotesize
\setlength{\arraycolsep}{3pt}%
\renewcommand{\arraystretch}{0.95}%
\bordermatrix{
	& \vvarx_1 & \vvary_1 & \vvarz_1 & \sql{\textquotesingle actor\textquotesingle} & \sql{\textquotesingle \LosAngeles\textquotesingle} & \sql{\textquotesingle \UnitedStates\textquotesingle}   \cr
    \noalign{\vskip -0.9ex} %
	\vvarx_2   & x_{\vvarx_2,\vvarx_1}   & x_{\vvarx_2,\vvary_1}   & x_{\vvarx_2,\vvarz_1}   & x_{\vvarx_2,\sql{\textquotesingle actor\textquotesingle}}   & x_{\vvarx_2,\sql{\textquotesingle \LosAngeles\textquotesingle}}   & x_{\vvarx_2,\sql{\textquotesingle \UnitedStates\textquotesingle}}    \cr
	\vvary_2   & 0   & 1   & 0   & 0   & 0   & 0    \cr
	\vvarz_2   & x_{\vvarz_2,\vvarx_1}   & x_{\vvarz_2,\vvary_1}   & x_{\vvarz_2,\vvarz_1}   & x_{\vvarz_2,\sql{\textquotesingle actor\textquotesingle}}   & x_{\vvarz_2,\sql{\textquotesingle \LosAngeles\textquotesingle}}   & x_{\vvarz_2,\sql{\textquotesingle \UnitedStates\textquotesingle}}    \cr
	\vvarw_2   & x_{\vvarw_2,\vvarx_1}   & x_{\vvarw_2,\vvary_1}   & x_{\vvarw_2,\vvarz_1}   & x_{\vvarw_2,\sql{\textquotesingle actor\textquotesingle}}   & x_{\vvarw_2,\sql{\textquotesingle \LosAngeles\textquotesingle}}   & x_{\vvarw_2,\sql{\textquotesingle \UnitedStates\textquotesingle}}    \cr
}
\]
This reflects that the answer tuples of $\queryExampleTwo$ and $\queryExampleOne$ are $(\vvary_2)$ and $(\vvary_1)$, and any homomorphism from $\queryExampleTwo$ to $\queryExampleOne$ must map $\vvary_2$ to $\vvary_1$.
\end{exampleWithEndmarker}

As a result of the preparation step upon input of an arbitrary query pair $(\queryOne,\queryTwo)$, we either output that $\queryOne\not\querycont\queryTwo$ %
or we output the sets $A_2$, $B_2$, $A_1$ and the variable-matrix $\ov{x}$ and its extension $\ovxext$. In the next subsection, we explain how these are used to define a
triple $(p,\myd,C)$ that satisfies Equivalence~\eqref{eq:theory-goal}. Polynomial~$p$ will use the binary variables $\ov{x}$ and will also utilize the additional 0-1-entries in the extended matrix $\ovxext$.
Yet first, let us consider another example.

\begin{exampleWithEndmarker}\label{example:preparation:cycle_chain}
We consider directed graphs
in a relational encoding.
Accordingly, let $\dbschema_{g}$ be the schema consisting of a binary
relation name $E$, for representing the edges of a graph.

Consider the Boolean queries \emph{``2-cycle''} $q_{\textit{2cy}}\deff
(\Tableau_{\textit{2cy}},\AnswerTuple_{\textit{2cy}})$ and \emph{``2-chain''} $q_{\textit{2ch}}\deff
(\Tableau_{\textit{2ch}},\AnswerTuple_{\textit{2ch}})$ defined via
$\AnswerTuple_{\textit{2cy}}=\AnswerTuple_{\textit{2ch}}=\emptytuple$ and
$\Tableau_{\textit{2cy}}(E)=\myset{(\vvarz,\vvarz'),\allowbreak (\vvarz',\vvarz)}$
and
$\Tableau_{\textit{2ch}}(E)=\myset{(\vvarz_0,\vvarz_1),  \allowbreak (\vvarz_1,\vvarz_2)}$.
An $\dbschema_{g}$-db $D$ represents a directed graph, and
$q_{\textit{2cy}}(D)=$ ``yes'' iff this graph contains a directed cycle on at
most 2 nodes, while
$q_{\textit{2ch}}(D)=$ ``yes'' iff this graph contains a directed path of length~2.
Upon input of the query pair $(q_{\textit{2cy}},q_{\textit{2ch}})$, the preparation step
yields
$A_2=B_2=\myset{\vvarz_0,\vvarz_1,\vvarz_2}$,
$A_1=\myset{\vvarz,\vvarz'}$, and
$\ov{x}$ consists of the binary variables $x_{i,j}$ for all $(i,j)\in B_2\times A_1$.
Since $A_2=B_2$, we have $\ovxext=\ov{x}$, i.e., \emph{no} additional rows with 0-1-entries are
inserted into
$\ov{x}$.
\end{exampleWithEndmarker}

 \subsection{Generic choice of $(p,\myd,C)$}\label{sec:genericchoice}

In our generic choice of $(p,\myd,C)$ we let $C = \{0,1\}^{B_2\times A_1}$.
Our goal is to define $\myd$ and $p$ in such a way that Equivalence~\eqref{eq:theory-goal} holds.
Here, $p$ will be an integer-valued polynomial on the binary variables $\ov{x}=(x_{i,j})_{(i,j)\in B_2\times A_1}$; we will define $p$ in such a way, that for an  assignment $\ov{x}\in\myset{0,1}^{B_2\times A_1}$ of the binary variables with actual bits, the following is true: the value
$p(\ov{x})$ is
minimal and equal to $\myd$
$\iff$ the extended bit-matrix $\ovxext$ represents a function $h\colon A_2\to A_1$ that, in fact, is a homomorphism from $\queryTwo$ to $\queryOne$.

For defining $p$, we will
use
a number of intermediate polynomials that are introduced in the following.
The first of these %
is
\[ \textstyle
  \punique(\ov{x}) \ \ \deff \ \
  \sum_{i\in B_2} \sum_{j,j' \in A_1\atop j<j'} \big(\, x_{i,j} \cdot x_{i,j'}\, \big).
\]
Here, $<$ denotes an arbitrary linear order on $A_1$ (which corresponds to the arrangement of the columns in the variable-matrix $\ov{x}$).
When assigning the binary variables in $\ov{x}$ to actual bits in $\myset{0,1}$, this polynomial evaluates to an integer $\geq 0$; and it evaluates to exactly~$0$ if, and only if, in every row $i\in B_2$ there is at most one column~$j\in A_1$ such that the variable $x_{i,j}$ is assigned the value~1.
Hence, if it evaluates to a value $>0$, then the assigned bit-matrix (and its extension) for sure does \emph{not} represent any function $h\colon A_2\to A_1$, let alone an actual homomorphism from $\queryTwo$ to $\queryOne$.

\begin{exampleWithEndmarker}\label{example:unique:cycle_chain}
Continuing Example~\ref{example:preparation:cycle_chain}, we have
$\punique(\ov{x}) = $
\\
\(
  \big(\,x_{\vvarz_0,\vvarz} \cdot x_{\vvarz_0,\vvarz'}\,\big)
  \ + \
  \big(\,x_{\vvarz_1,\vvarz} \cdot x_{\vvarz_1,\vvarz'}\,\big)
  \ + \
  \big(\,x_{\vvarz_2,\vvarz} \cdot x_{\vvarz_2,\vvarz'}\,\big).
\)  
\end{exampleWithEndmarker}

Next, we introduce several polynomials that will help us handle condition~\eqref{itemThree:def:homom} of Definition~\ref{def:HomomorphismForQueries}.
Consider a relation
$R\in\dbschema$ and a  tuple $u\in \TableauTwo(R)$. Let $r=\ar(u)$ and $(u_1,\ldots,u_r)=u$, and note that $\myset{u_1,\ldots,u_r}\subseteq A_2=B_2\cup \Adom(\queryTwo)\cup\Vars(\AnswerTupleTwo)$.
Recall that $\ov{x}$ is a variable-matrix that contains a row $i$ for every $i\in B_2$; and $\ovxext$ extends $\ov{x}$ by additional rows for every $i\in\Adom(\queryTwo)\cup\Vars(\AnswerTupleTwo)$, and these rows are already filled with 0-1-entries.
We define
\[
  p_{R,u}(\ov{x}) \ \deff \
  \sum_{(w_1,\ldots,w_r)\in \TableauOne(R)} \left(
    - \prod_{k=1}^r \xext{u_{k},w_{k}}
  \right).
\]  
Since $\xext{u_k,w_k}$ will be assigned with values in $\myset{0,1}$, the product will evaluate to either 0 or 1.
By a close inspection of the polynomial, we can prove (cf.~Appendix~\ref{appendix:theory}):

\begin{lemma}\label{lemma:p_v_new}
	For all $R\in\schema$,  $u\in \TableauTwo(R)$,  $\ov{x}\in\myset{0,1}^{B_2\times A_1}$ we have
	\begin{enumerate}[(a)]
        \item\label{item:a:lemma:p_v_new}
 $p_{R,u}(\ov{x})$ is an integer with $-\setsize{\TableauOne(R)}\leq p_{R,u}(\ov{x}) \leq 0$,
          and
		\item\label{item:b:lemma:p_v_new} if $\punique(\ov{x})=0$ then
		$p_{R,u}(\ov{x})\in\{-1,0\}$ and, moreover,\\
		$p_{R,u}(\ov{x})=-1$ $\Longleftrightarrow$ for the extended bit-matrix $\ovxext$ associated with $\ov{x}$ and letting $r=\ar(R)$ and $(u_1,\ldots,u_r)=u$, for every $k\leq r$ we have
                $\sum_{z\in A_1} \xext{u_k,z}=1$, and letting $z_k$ be the particular $z\in A_1$ with $\xext{u_k,z}=1$ we have
                $(z_1,\ldots,z_r)\in\TableauOne(R)$. 
	\end{enumerate}
\end{lemma}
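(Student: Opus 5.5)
Both parts follow directly from the definition of $p_{R,u}$ as a negated sum of 0-1 products. Part~\eqref{item:b:lemma:p_v_new} additionally needs the fact that $\punique(\ov{x})=0$ makes every row of the extended matrix $\ovxext$ contain at most one~1.

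For part~\eqref{item:a:lemma:p_v_new}, each entry $\xext{u_k,w_k}$ lies in $\myset{0,1}$. For $u_k\in B_2$ this is because it equals $x_{u_k,w_k}$. For $u_k\in\Adom(\queryTwo)\cup\Vars(\AnswerTupleTwo)$ it is fixed by items~\eqref{item:aForExtendedBitMatrix} and~\eqref{item:bForExtendedBitMatrix}. Note that $u_k\in A_2$ always holds and that $w_k\in A_1$ since $w\in\TableauOne(R)$, so the entry is defined. Hence each product $\prod_{k=1}^r\xext{u_k,w_k}$ is in $\myset{0,1}$. Summing the negated products over the $\setsize{\TableauOne(R)}$ tuples gives an integer between $-\setsize{\TableauOne(R)}$ and $0$. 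In the degenerate case $r=0$ every product is empty and equals~1, which is still consistent with the bound.

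For part~\eqref{item:b:lemma:p_v_new}, first I show that $\punique(\ov{x})=0$ implies that every row $i\in A_2$ of $\ovxext$ has at most one entry equal to~1.
\begin{itemize}
\item For $i\in B_2$ this is exactly the characterization of $\punique(\ov{x})=0$ already noted after its definition: all the products are nonnegative, so their sum is~$0$ only if each $x_{i,j}x_{i,j'}$ with $j<j'$ is~0.
\item For $i=\cconstc\in\Adom(\queryTwo)$, item~\eqref{item:aForExtendedBitMatrix} makes row $\cconstc$ have exactly one~1, in column $\cconstc$. This column lies in $A_1$ because case~4 was excluded and so on; more carefully, I only need ``at most one'' here, which is immediate.
\item For $i\in\Vars(\AnswerTupleTwo)$, item~\eqref{item:bForExtendedBitMatrix} puts 1's in the columns $v_{1,k}$ with $v_{2,k}=i$. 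Since case~3 was excluded, all these $v_{1,k}$ coincide, so there is exactly one~1. This is the step that genuinely uses the preparation step.
\end{itemize}

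Next I show that $p_{R,u}(\ov{x})\in\myset{-1,0}$. A tuple $w\in\TableauOne(R)$ contributes $-1$ exactly when $\xext{u_k,w_k}=1$ for all $k$. Because each row has at most one~1, this forces $w_k$ to be the unique $z_k$ with $\xext{u_k,z_k}=1$. So at most one $w$, namely $(z_1,\ldots,z_r)$, can contribute, and the sum is $0$ or $-1$.

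It remains to prove the equivalence.
\begin{itemize}
\item ($\Rightarrow$) If the value is $-1$, some $w$ contributes. Then every row $u_k$ has a~1, and by the at-most-one property its row sum is exactly~1 with $z_k=w_k$. Hence $(z_1,\ldots,z_r)=w\in\TableauOne(R)$.
\item ($\Leftarrow$) If every row $u_k$ has row sum~1 and $(z_1,\ldots,z_r)\in\TableauOne(R)$, that tuple contributes $-1$. By uniqueness it is the only contributor, so the value is $-1$.
\end{itemize}

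The only delicate point is the bookkeeping for the fixed rows in $\ovxext$, in particular the use of the excluded case~3. Everything else is routine.
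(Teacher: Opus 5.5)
Your proof is correct and follows essentially the paper's route: at most one 1 per row of $\ovxext$ forces at most one tuple of $\TableauOne(R)$ to contribute, and both directions of the equivalence are then read off directly. The paper phrases the uniqueness step as a contradiction from two distinct contributing tuples, and it leaves implicit the use of the excluded case~3 that you make explicit.

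Your aside that column $\cconstc$ lies in $A_1$ is false in general: a constant of $\TableauTwo$ need not occur in $\queryOne$, and then its row is all zeros. As you note yourself, only ``at most one 1'' is needed there, so nothing breaks.
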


Part \eqref{item:b:lemma:p_v_new} of
this lemma tells us that if the binary variables are assigned with 0-1-values such that at most one bit per matrix row is~1, then the following is true:
$p_{R,u}(\ov{x})$ evaluates to $-1$ $\iff$ for every entry $u_k$ of $u$, the row $u_k$ of $\ovxext$ contains exactly one~1, and for the particular column $z_k$ that carries this 1, the tuple $(z_1,\ldots,z_r)$ belongs to the tableau $\TableauOne(R)$. The latter can serve as a witness that condition~\eqref{itemThree:def:homom} of Definition~\ref{def:HomomorphismForQueries} is satisfied.
Condition~\eqref{itemThree:def:homom} demands that this holds for all relations $R\in\dbschema$ and all $u\in\TableauTwo(R)$; and accordingly, we define the polynomial
\[ \textstyle
  \pac(\ov{x}) \ \ \deff \ \
  \sum_{R\in\dbschema} \ \sum_{u\in \TableauTwo(R)} \ p_{R,u}(\ov{x}).
\]

\begin{exampleWithEndmarker}\label{example:ac:cycle_chain}
Consider the query pair $(p_{\textit{2cy}},p_{\textit{2ch}})$ from
Examples~\ref{example:preparation:cycle_chain} and~\ref{example:unique:cycle_chain}.
The tuple $(\vvarz_0,\vvarz_1)\in \Tableau_{\textit{2ch}}(E)$ yields
$  p_{E,(\vvarz_0,\vvarz_1)}(\ov{x}) = $
\\
\(
  - \ x_{\vvarz_0,\vvarz}\cdot x_{\vvarz_1,\vvarz'} \ - \
  x_{\vvarz_0,\vvarz'}\cdot x_{\vvarz_1,\vvarz}\,; \
\)
the tuple $(\vvarz_1,\vvarz_2)\in \Tableau_{\textit{2ch}}(E)$ yields
\\
\(
  p_{E,(\vvarz_1,\vvarz_2)}(\ov{x}) \  =  \
  - \; x_{\vvarz_1,\vvarz}\cdot x_{\vvarz_2,\vvarz'} \; - \;
  x_{\vvarz_1,\vvarz'}\cdot x_{\vvarz_2,\vvarz}\,;
  \)
\ and in summary we have \
$\pac(\ov{x}) \ = \ p_{E,(\vvarz_0,\vvarz_1)}(\ov{x}) \ + \
p_{E,(\vvarz_1,\vvarz_2)}(\ov{x}) \ =$
\\
\(
  - \ x_{\vvarz_0,\vvarz}\cdot x_{\vvarz_1,\vvarz'} \ - \
  x_{\vvarz_0,\vvarz'}\cdot x_{\vvarz_1,\vvarz}
  - \ x_{\vvarz_1,\vvarz}\cdot x_{\vvarz_2,\vvarz'} \ - \
  x_{\vvarz_1,\vvarz'}\cdot x_{\vvarz_2,\vvarz}\,.
\) 
\end{exampleWithEndmarker}  

For each $i\in\myset{1,2}$ we let $|\Tableaui|\deff\sum_{R\in\dbschema}|\Tableaui(R)|$.
Using Lemma~\ref{lemma:p_v_new}, we can prove the following  (cf.~Appendix~\ref{appendix:theory}):

\begin{lemma}\label{lemma:p_ac_new}
For every $\ov{x}\in\myset{0,1}^{B_2\times A_1}$ we have:
\begin{enumerate}[(a)]
\item\label{item:a:lemma:p_ac_new} 
  $\pac(\ov{x})$ is an integer with $-\sum_{R\in\dbschema}\setsize{\TableauOne(R)}{\cdot} \setsize{\TableauTwo(R)}\leq \pac(\ov{x})\leq 0$.
\item\label{item:b:lemma:p_ac_new} If $\punique(\ov{x})=0$ then $-\setsize{\TableauTwo}\leq \pac(\ov{x}) \leq 0$ and, moreover,\\
  $\pac(\ov{x})=-\setsize{\TableauTwo}$ $\Longleftrightarrow$
  the extended bit-matrix $\ovxext$ associated with $\ov{x}$ satisfies the following: for every $i\in A_2$ we have
  $\sum_{j\in A_1}\xext{i,j}=1$, i.e., $\ovxext$ represents a mapping $h\colon A_2\to A_1$ via $h(i)=j$ iff $\xext{i,j}=1$, and
  this mapping is a homomorphism from $\queryTwo$ to $\queryOne$.
\end{enumerate}
\end{lemma}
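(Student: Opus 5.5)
Both parts follow by summing Lemma~\ref{lemma:p_v_new} over all pairs $(R,u)$ with $R\in\dbschema$ and $u\in\TableauTwo(R)$. The only extra work is the ``exactly one 1 per row'' condition for rows of $A_2$ that occur in no tuple of $\TableauTwo$.

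For part~\eqref{item:a:lemma:p_ac_new}, note that $\pac(\ov{x})$ is by definition the sum of the terms $p_{R,u}(\ov{x})$. By Lemma~\ref{lemma:p_v_new}\eqref{item:a:lemma:p_v_new}, each term is an integer in $[-\setsize{\TableauOne(R)},0]$. Summing over the $\setsize{\TableauTwo(R)}$ tuples $u$ for each $R$ gives the stated bounds.

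For the first claim of part~\eqref{item:b:lemma:p_ac_new}, assume $\punique(\ov{x})=0$. Then Lemma~\ref{lemma:p_v_new}\eqref{item:b:lemma:p_v_new} gives $p_{R,u}(\ov{x})\in\{-1,0\}$ for each pair. There are exactly $\setsize{\TableauTwo}$ pairs $(R,u)$, so $-\setsize{\TableauTwo}\leq\pac(\ov{x})\leq 0$. Moreover, equality $\pac(\ov{x})=-\setsize{\TableauTwo}$ holds iff $p_{R,u}(\ov{x})=-1$ for every pair, since each summand is at least $-1$.

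For the direction ``$\Rightarrow$'', assume every $p_{R,u}(\ov{x})=-1$.
\begin{enumerate}[(i)]
\item By Lemma~\ref{lemma:p_v_new}\eqref{item:b:lemma:p_v_new}, every row $u_k$ appearing in some $u\in\TableauTwo(R)$ has exactly one 1, say in column $z_k$, and $(z_1,\ldots,z_r)\in\TableauOne(R)$.
\item Every remaining row of $A_2$ must still be shown to have exactly one 1. For rows in $\Adom(\queryTwo)$ this holds by item~\eqref{item:aForExtendedBitMatrix}. One must check that $\cconstc\in A_1$, since otherwise row $\cconstc$ is all zeros. If $\cconstc$ occurs in $\TableauTwo$, it is covered by (i). If $\cconstc$ occurs only in $\AnswerTupleTwo$, the fact that case~2 of the preparation step does not apply gives $\cconstc=v_{1,k}\in A_1$.
\item Rows in $\Vars(\AnswerTupleTwo)$ have exactly one 1 by item~\eqref{item:bForExtendedBitMatrix}, because case~3 does not apply, and such variables lie in $\Vars(\TableauTwo)$ anyway.
\item Rows in $B_2\subseteq\Vars(\queryTwo)=\Vars(\TableauTwo)$ all occur in some tuple, so they are covered by (i).
\end{enumerate}
Hence $\ovxext$ represents a function $h$. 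Conditions~\eqref{itemOne:def:homom} and~\eqref{itemTwo:def:homom} of Definition~\ref{def:HomomorphismForQueries} hold by construction of the extension rows. Condition~\eqref{itemThree:def:homom} is exactly the statement $h(u)=(z_1,\ldots,z_r)\in\TableauOne(R)$ from (i).

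For the direction ``$\Leftarrow$'', if $\ovxext$ represents a homomorphism $h$, then every row has exactly one 1 and $h(u)\in\TableauOne(R)$ for each $u$. Lemma~\ref{lemma:p_v_new}\eqref{item:b:lemma:p_v_new} then gives $p_{R,u}(\ov{x})=-1$ for each pair, and summing yields $\pac(\ov{x})=-\setsize{\TableauTwo}$.

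The main obstacle is the bookkeeping in step (ii): rows that do not appear in any tableau tuple, namely constants occurring only in the answer tuple. This is where the non-applicability of cases 2.\ and 3.\ of the preparation step must be invoked explicitly. I would handle it by the case split (ii)--(iv) on $A_2=B_2\cup\Adom(\queryTwo)\cup\Vars(\AnswerTupleTwo)$.
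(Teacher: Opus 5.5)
Your proposal is correct and follows essentially the same route as the paper: sum Lemma~\ref{lemma:p_v_new} over all pairs $(R,u)$, and handle separately the rows of constants that occur only in $\AnswerTupleTwo$ by using that the preparation-step cases do not apply. The paper decomposes $A_2=\Vars(\TableauTwo)\cup\Cons(\TableauTwo)\cup\Cons(\AnswerTupleTwo)$, which amounts to your split (ii)--(iv). One small point to tighten is that condition~\eqref{itemTwo:def:homom} does not follow from the construction of the extension rows alone. It also needs that cases~1.\ and~2.\ of the preparation step do not apply, i.e., $r_1=r_2$, and $v_{1,k}=v_{2,k}$ whenever $v_{2,k}$ is a constant.
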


Part \eqref{item:b:lemma:p_ac_new}
tells us that if the binary variables are assigned with 0-1-values such that at most one bit per matrix row is 1, then
we have:
$\pac(\ov{x})$ evaluates to $-\setsize{\TableauTwo}$ $\iff$ the  extended bit-matrix $\ovxext$ associated with $\ov{x}$ represents a mapping
$h\colon A_2\to A_1$ that, in fact, is a homomorphism from $\queryTwo$ to $\queryOne$.
This
leads us to define
\[
  \pgen(\ov{x}) \ \ \deff \ \
  \pac(\ov{x}) \ + \ (\setsize{\TableauOne}{\cdot}\setsize{\TableauTwo}+1)\cdot \punique(\ov{x}).
\]
If $\ov{x}$ is an assignment with 0-1-values such that $\punique(\ov{x})=0$, then we obtain from Lemma~\ref{lemma:p_ac_new} that
$\pgen(\ov{x})$ is an integer between $-\setsize{\TableauTwo}$ and $0$; and it is exactly $-\setsize{\TableauTwo}$ iff the extended bit-matrix $\ovxext$ represents a homomorphism from $\queryTwo$ to $\queryOne$ (witnessing that $\queryOne\querycont\queryTwo$).
On the other hand, for any assignment $\ov{x}$ with $\punique(\ov{x})\neq 0$ we have $\punique(\ov{x})\geq 1$, and thus $\pgen(\ov{x})\geq 0$. Actually, the same still holds if in the definition of $\pgen(\ov{x})$ we replace the term
``$\setsize{\TableauOne}{\cdot}\setsize{\TableauTwo}$'' with any number
$\geq \sum_{R\in\dbschema}\setsize{\TableauOne(R)}\cdot\setsize{\TableauTwo(R)}$.

In our generic choice for $(p,\myd,C)$, we let $p(\ov{x})\deff \pgen(\ov{x})$,
$\myd\deff\dgen= -|\TableauTwo|$ and $C\deff \Cgen=\myset{0,1}^{B_2\times A_1}$.
The following theorem states that this choice is correct;
the theorem's proof (cf.\ Appendix~\ref{appendix:theory}) relies on
Lemma~\ref{lemma:p_ac_new}
and
Theorem~\ref{thm:ChandraMerlin}.
\begin{theorem}\label{thm:correctnessOfGenericChoice}
  Let $(p,\myd,C)\deff (\pgen,\dgen,\Cgen)$.\\
  For every $\ov{x}\in C$ we have $p(\ov{x})\geq \myd$. 
  Furthermore, \ 
  $\min_{\ov{x}\in C} p(\ov{x})=\myd$ $\iff$ $\queryOne\querycont\queryTwo$. 
  Moreover, for every $\ov{x}\in C$ with $p(\ov{x})=\myd$, the extended bit-matrix $\ovxext$ represents a homomorphism from $\queryTwo$ to $\queryOne$.
\end{theorem}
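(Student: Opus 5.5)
The plan is a case split on $\punique(\ov{x})$, using Lemma~\ref{lemma:p_ac_new} for the lower bound and Theorem~\ref{thm:ChandraMerlin} to translate homomorphisms into containment. Fix $\ov{x}\in C=\myset{0,1}^{B_2\times A_1}$ and abbreviate $M\deff\setsize{\TableauOne}\cdot\setsize{\TableauTwo}$. Since every monomial of $\punique$ is a product of two bits, $\punique(\ov{x})$ is a non-negative integer.

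\emph{Case $\punique(\ov{x})=0$.} Here $p(\ov{x})=\pac(\ov{x})$. Part~\eqref{item:b:lemma:p_ac_new} of Lemma~\ref{lemma:p_ac_new} gives $p(\ov{x})\geq -\setsize{\TableauTwo}=\myd$. It also gives that equality holds iff $\ovxext$ represents a homomorphism from $\queryTwo$ to $\queryOne$.

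\emph{Case $\punique(\ov{x})\geq 1$.} Part~\eqref{item:a:lemma:p_ac_new} gives
\[
\pac(\ov{x})\geq -\sum_{R\in\dbschema}\setsize{\TableauOne(R)}\cdot\setsize{\TableauTwo(R)}\geq -M,
\]
since $\sum_R a_Rb_R\leq(\sum_R a_R)(\sum_R b_R)$ for non-negative integers. Hence $p(\ov{x})\geq -M+(M+1)=1>0\geq \myd$. So the bound $p(\ov{x})\geq\myd$ holds for every $\ov{x}\in C$, and the value $\myd$ is never attained in this case. Together with the first case, this already shows the "moreover" statement: any $\ov{x}$ with $p(\ov{x})=\myd$ must have $\punique(\ov{x})=0$, so $\ovxext$ represents a homomorphism. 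The only slightly delicate point is that $\myd\leq 0<1$, which needs $\setsize{\TableauTwo}\geq 0$ and is trivially true.

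\emph{The equivalence.} Since $p\geq\myd$ on $C$, we have $\min_{C}p=\myd$ iff some $\ov{x}\in C$ attains $p(\ov{x})=\myd$.

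For "$\Rightarrow$": such an $\ov{x}$ yields a homomorphism by the "moreover" part, so Theorem~\ref{thm:ChandraMerlin} gives $\queryOne\querycont\queryTwo$.

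For "$\Leftarrow$": Theorem~\ref{thm:ChandraMerlin} provides a homomorphism $h\colon A_2\to A_1$. Define $x_{i,j}\deff 1$ iff $h(i)=j$ for $i\in B_2$, $j\in A_1$. Each row of $\ov{x}$ then has exactly one 1, so $\punique(\ov{x})=0$. We need $\ovxext$ to represent exactly $h$, which requires that the predefined rows agree with $h$.
\begin{itemize}
\item For constants, $h(\cconstc)=\cconstc$ by condition~\eqref{itemOne:def:homom}, matching item~\eqref{item:aForExtendedBitMatrix}.
\item For $i\in\Vars(\AnswerTupleTwo)$, condition~\eqref{itemTwo:def:homom} gives $h(v_{2,k})=v_{1,k}$ for all $k$. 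So the $j$ with $i=v_{2,k}$, $j=v_{1,k}$ is unique and equals $h(i)$, matching item~\eqref{item:bForExtendedBitMatrix}.
\end{itemize}
So $\ovxext$ represents $h$, and Lemma~\ref{lemma:p_ac_new}\eqref{item:b:lemma:p_ac_new} gives $p(\ov{x})=\myd$.

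The step I expect to need the most care is this consistency check of the predefined rows. It is where the exclusion of trivial cases 1.--3.\ in the preparation step matters: it guarantees that item~\eqref{item:bForExtendedBitMatrix} puts exactly one 1 in each answer-variable row, and that no variable of $\queryTwo$'s answer tuple is mapped inconsistently.
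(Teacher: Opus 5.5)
Your proposal is correct and follows the paper's proof essentially step for step. It uses the same case split on $\punique(\ov{x})=0$ versus $\punique(\ov{x})\geq 1$, Lemma~\ref{lemma:p_ac_new}\eqref{item:a:lemma:p_ac_new} and~\eqref{item:b:lemma:p_ac_new} for the lower bound and for characterizing $p(\ov{x})=\myd$, and Theorem~\ref{thm:ChandraMerlin} for both directions of the equivalence; your explicit check that the predefined rows of $\ovxext$ agree with $h$ fills in a step the paper only asserts, though that check needs only conditions~\eqref{itemOne:def:homom} and~\eqref{itemTwo:def:homom} of $h$, and the exclusion of trivial cases 1.--3.\ is really used inside the ``$\Longrightarrow$'' direction of Lemma~\ref{lemma:p_ac_new}\eqref{item:b:lemma:p_ac_new}, which you invoke as a black box.
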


This theorem
tells us that in order to
decide whether $\queryOne\querycont\queryTwo$, we can seek for a
bit-matrix $\ov{x}^*\in \Cgen$ such that $\pgen(\ov{x}^*)$ is as
small as possible. If $\pgen(\ov{x}^*)=\dgen$,
then we know that $\queryOne\querycont\queryTwo$
and, moreover, $\ov{x}^*$ represents a homomorphism from $\queryTwo$
to $\queryOne$ and can thus serve as a certificate which proves that
$\queryOne\querycont\queryTwo$. Otherwise, 
$\pgen(\ov{x}^*)>\dgen$, and we know that $\queryOne\not\querycont\queryTwo$.

In Sections~\ref{sec:system} \& \ref{sec:experiments} we present a
quantum computing solution for finding $\ov{x}^*$.
The size of the {search space}
is $\setsize{\Cgen}=2^{n_2\cdot n_1}$ for $n_2\deff\setsize{B_2}$ and $n_1\deff\setsize{A_1}$.
The number of
qubits (and thus, the computational resources) required for our
solution is closely related to the number of {binary variables} present in the polynomial $\pgen(\ov{x})$. The number of binary variables
occurring in $\pgen(\ov{x})$ is $n_2{\cdot} n_1 =
|B_2|{\cdot} |A_1|$.
The next two subsections present optional modifications that may
decrease the number of binary variables and the size of the search space.

\begin{exampleWithEndmarker}\label{example:gen:cycle_chain}
Consider the query pair $(p_{\textit{2cy}},p_{\textit{2ch}})$ from
Example~\ref{example:preparation:cycle_chain} and
recall the
polynomials $\punique(\ov{x})$ and $\pac(\ov{x})$ constructed in Examples~\ref{example:unique:cycle_chain} \& \ref{example:ac:cycle_chain}.
Then, $|\Tableau_{\textit{2cy}}|{\cdot}|\Tableau_{\textit{2ch}}|+1=2{\cdot} 2 + 1 = 5$ and
$\pgen(\ov{x}) = \pac(\ov{x})+ 5\cdot \punique(\ov{x})$.
Furthermore, $\dgen=-\setsize{\Tableau_{\textit{2ch}}}= -2$ and
$\Cgen=\myset{0,1}^{B_2\times A_1}$ for $B_2=\myset{\vvarz_0,\vvarz_1,\vvarz_2}$ and
$A_1=\myset{\vvarz,\vvarz'}$.
\end{exampleWithEndmarker}

\subsection{Simplification Step}\label{sec:simplification}

This section presents an optional \emph{simplification step} that
closely inspects the polynomial $\pac(\ov{x})$ with the aim of restricting
$B_2$ to a subset $\simplBtwo$, so that the total number of
binary variables is reduced from $n_2{\cdot} n_1$ to $\simplntwo{\cdot}
n_1$ for $\simplntwo=\setsize{\simplBtwo}$.
In the following, we describe a process that starts with
$\simplBtwo\deff B_2$ and $\ovxsim\deff \ovxext$, and that iteratively removes
elements from $\simplBtwo$ and adapts $\ovxsim$ accordingly.

The idea is that for
tuples
in $\queryTwo$ that have only one
tuple
in $\queryOne$ to which they can fit, we can already map the variables of that
tuple
accordingly. Moreover, if no
tuple
in $\queryOne$ fits (e.g. due to fixed mappings already found during the preparation step), then we can already conclude that $\queryOne\not\querycont\queryTwo$. Following this idea, we iteratively check all
tuples
in $\queryTwo$ for potential simplifications.

By definition,
$\pac(\ov{x})=\sum_{R\in\dbschema}\sum_{u\in
  \TableauTwo(R)} p_{R,u}(\ov{x})$, for
\[
  p_{R,u}(\ov{x})\ = \
  \sum_{(w_1,\ldots,w_r)\in \TableauOne(R)}
  \left(
     - \prod_{k=1}^{r} \xext{u_k,w_k}
  \right),
\]
where $(u_1,\ldots,u_r)=u$.
Here, $\xext{u_k,w_k}$ is the binary variable $x_{u_k,w_k}$ if $u_k\in B_2$, and
it is a fixed 0-1-value if $u_k\in A_2\setminus B_2$.
In particular, if there is a $k$ with
$u_k\in A_2\setminus B_2$ and $\xext{u_k,w_k}=0$, then the entire product
$\prod_{k=1}^{r}\xext{u_k,w_k}$ evaluates to $0$,
and hence
the tuple $(w_1,\ldots,w_r)\in\TableauOne(R)$ has zero
contribution in the polynomial $p_{R,u}(\ov{x})$.

If \emph{every} $w\in\TableauOne(R)$ has zero contribution, %
$p_{R,u}(\ov{x})$ is the constant polynomial $0$.
From Lemma~\ref{lemma:p_ac_new} we
obtain that there does not exist any homomorphism from $\queryTwo$ to $\queryOne$;
we 
stop and
output~``$\queryOne\not\querycont\queryTwo$''.

If there is exactly one $w=(w_1,\ldots,w_r)\in\TableauOne(R)$ that
does \emph{not} have zero contribution, then
$p_{R,u}(\ov{x}) = - \prod_{k=1}^{r}\xext{u_k,w_k}$.
From  Lemma~\ref{lemma:p_ac_new} we then obtain that \emph{if} there
exists a homomorphism~$h$ from $\queryTwo$ to $\queryOne$, then it must
satisfy $h(u_k)=w_k$ for all $k\leq r$.
We can therefore treat the tuples $u$ and $w$ in a similar way as we had
treated the queries' answer tuples $\AnswerTupleTwo$ and
$\AnswerTupleOne$ in the preparation step:
\\
If there are $k,k'\leq r$
such that $u_k=u_{k'}$ but $w_k\neq w_{k'}$, then we know that there
does not exist any homomorphism from $\queryTwo$ to $\queryOne$, and
hence we can stop and
output ``$\queryOne\not\querycont\queryTwo$''.
Similarly, if there is a $k\leq r$ such that $u_k\not\in \simplBtwo$
and $\xsim{u_k,w_k}=0$, we know that some previously considered
constraint had demanded that $h(u_k)\neq w_k$ while the currently
considered constraint demands that $h(u_k)=w_k$; hence, we can stop and
output ``$\queryOne\not\querycont\queryTwo$''.
Otherwise, we remove $u_1,\ldots,u_r$ from $\simplBtwo$ and modify the
definition of~$\ovxsim$ by setting $\xsim{u_k,w_k}\deff 1$ and
$\xsim{u_k,j}\deff 0$ for all $j\in A_1\setminus\myset{w_k}$ and all $k\leq r$.

We loop through all $R\in\dbschema$ and all
$u\in\TableauTwo(R)$ and apply this procedure.
We use the resulting matrix $\ovxsim$ and the set $\simplBtwo$ and let $\simplntwo\deff
\setsize{\simplBtwo}$. Our \emph{simplified choice} of $(p,\myd,C)$,
which we denote as $(\psimpl,\dsimpl,\Csimpl)$, is
defined via $\Csimpl=\myset{0,1}^{\simplBtwo\times A_1}$, \
$\dsimpl=-\setsize{\TableauTwo}$,  and 
$\psimpl(\ov{x})$ is the polynomial on binary variables $x_{i,j}$ for $(i,j)\in \simplBtwo\times A_1$ with
$\psimpl(\ov{x})  \deff
  \pacsimpl(\ov{x})  + 
  (\setsize{\TableauOne}{\cdot}\setsize{\TableauTwo}+1)\cdot
  \puniquesimpl(\ov{x})$,
where
$\pacsimpl(\ov{x})$ is defined in the same way as $\pac(\ov{x})$, but $\xext{u_k,w_k}$ is replaced by $\xsim{u_k,w_k}$,
and $\puniquesimpl(\ov{x})$ is defined as $\punique(\ov{x})$, but the leftmost sum runs only over $i\in \simplBtwo$.  
Using Theorem~\ref{thm:correctnessOfGenericChoice}, the %
construction of $(\psimpl,\dsimpl,\Csimpl)$ immediately yields:

\begin{theorem}\label{thm:correctnessOfSimplifiedChoice}
The statement of Theorem~\ref{thm:correctnessOfGenericChoice} holds
for $(p,\myd,C)\deff (\psimpl,\dsimpl,\Csimpl)$ when replacing $\ovxext$
with $\ovxsim$.
\end{theorem}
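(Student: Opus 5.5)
The plan is to reduce Theorem~\ref{thm:correctnessOfSimplifiedChoice} to Theorem~\ref{thm:correctnessOfGenericChoice} by a \emph{substitution argument}.

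Let $F\deff B_2\setminus\simplBtwo$ be the removed elements. For each $i\in F$, the simplification fixed a unique $w(i)\in A_1$ with $\xsim{i,w(i)}=1$ and $\xsim{i,j}=0$ otherwise. Define the map $\sigma\colon\{0,1\}^{\simplBtwo\times A_1}\to\{0,1\}^{B_2\times A_1}$ that copies the rows in $\simplBtwo$ and fills each row $i\in F$ with the fixed 0-1-row of $\ovxsim$. Then $\ext(\sigma(\ov{x}))$ coincides with $\ovxsim$, where $\ovxsim$ is built from $\ov{x}$. Three facts follow:
\begin{itemize}
\item $\pacsimpl(\ov{x})=\pac(\sigma(\ov{x}))$ holds by definition, since $\pacsimpl$ is $\pac$ with $\xext{}$ replaced by $\xsim{}$.
\item $\puniquesimpl(\ov{x})=\punique(\sigma(\ov{x}))$, because the rows in $F$ contain exactly one $1$ and so contribute $0$ to $\punique$.
\item Hence $\psimpl=\pgen\circ\sigma$, and $\dsimpl=\dgen$.
\end{itemize}

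From this the first and last claims are immediate. Since $\sigma(\ov{x})\in\Cgen$, Theorem~\ref{thm:correctnessOfGenericChoice} gives $\psimpl(\ov{x})=\pgen(\sigma(\ov{x}))\ge\dgen$. If $\psimpl(\ov{x})=\dsimpl$, the same theorem says $\ext(\sigma(\ov{x}))=\ovxsim$ represents a homomorphism. The direction ``$\min=\myd\Rightarrow\queryOne\querycont\queryTwo$'' follows the same way.

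The main obstacle is the converse: if $\queryOne\querycont\queryTwo$, then some $\ov{x}\in\Csimpl$ attains $\dsimpl$. By Theorem~\ref{thm:ChandraMerlin} there is a homomorphism $h$, and its bit-matrix restricted to $B_2$ lies in $\Cgen$ and attains $\dgen$. I need it to lie in the image of $\sigma$, that is, $h(i)=w(i)$ for all $i\in F$. I would prove by induction over the loop of the simplification step the invariant that every homomorphism agrees with all rows fixed so far, with rows of $\Adom(\queryTwo)\cup\Vars(\AnswerTupleTwo)$ as the base case.

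For the step, suppose tuple $u\in\TableauTwo(R)$ leaves exactly one $w\in\TableauOne(R)$ with nonzero contribution under the current fixed rows. Since $h$ is a homomorphism, $h(u)\in\TableauOne(R)$. By the invariant, every other $w'$ has some position $k$ with $u_k$ already fixed and $h(u_k)\neq w'_k$, so $h(u)=w$. Thus the newly fixed rows agree with $h$. The same reasoning shows that the early ``stop with $\queryOne\not\querycont\queryTwo$'' exits are sound, so whenever $(\psimpl,\dsimpl,\Csimpl)$ is produced and containment holds, $h$ restricted to $\simplBtwo$ is a witness $\ov{x}$ with $\sigma(\ov{x})$ equal to $h$'s matrix. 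This gives $\psimpl(\ov{x})=\dsimpl$.
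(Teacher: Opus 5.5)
Your proposal is correct and follows the paper's route: the paper obtains this theorem directly from Theorem~\ref{thm:correctnessOfGenericChoice} and the construction. The construction is justified by exactly your observation that every homomorphism must agree with the rows fixed during simplification, and your substitution map $\sigma$ with $\psimpl=\pgen\circ\sigma$ spells out what the paper calls immediate. One detail you use without stating: simplification never alters the rows indexed by $A_2\setminus B_2$, because in that case the procedure would have stopped; this is what guarantees that extending $\sigma(\ov{x})$ really reproduces $\ovxsim$.
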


The number of binary variables occurring in the polynomial $\psimpl(\ov{x})$ is
$\simplntwo{\cdot} n_1$, where $\simplntwo\deff\setsize{\simplBtwo}\leq n_2$.

\begin{exampleWithEndmarker}\label{example:simpl:cycle_chain}
When considering the query pair $(p_{\textit{2cy}},p_{\textit{2ch}})$ from
Examples~\ref{example:preparation:cycle_chain},
\ref{example:unique:cycle_chain}, \ref{example:ac:cycle_chain},
\ref{example:gen:cycle_chain},
we have $A_2\setminus B_2=\emptyset$ and
$|\Tableau_{\textit{2cy}}(E)|>1$,
and hence the simplification
step does not change anything and we obtain $\simplBtwo=B_2$ and $(\psimpl,\dsimpl,\Csimpl)=(\pgen,\dgen,\Cgen)$.
\end{exampleWithEndmarker}

\begin{exampleWithEndmarker}\label{example:simpl:sql}
When considering the query pair $(\queryExampleOne,\queryExampleTwo)$
from Examples~\ref{example:queriesSQLtoCQcontinued},
\ref{example:QueryContainment}, \ref{example:preparation:sqlqueries}, 
recall that  $B_2=\myset{\vvarx_2,\vvarz_2,\vvarw_2}$ and $A_2\setminus
B_2=\myset{\vvary_2}$ and
$\xext{\vvary_2,\vvary_1}=1$ and $\xext{\vvary_2,j}=0$ for every $j\in
A_1\setminus\myset{\vvary_2}$.
We have
\begin{eqnarray*}
  \pac(\ov{x})
& =
& p_{\sql{Person},(\vvarx_2,\vvary_2,\vvarz_2)}(\ov{x}) \ + \     
p_{\sql{Profession},(\vvarx_2,\vvarw_2)}(\ov{x})
  \\
& =  
& - \ \xext{\vvarx_2,\vvarx_1}\cdot \xext{\vvary_2,\vvary_1}\cdot \xext{\vvarz_2,\vvarz_1}
\ - \ \xext{\vvarx_2,\vvarx_1}\cdot \xext{\vvarw_2,\sql{\textquotesingle actor\textquotesingle}}
\\
  & =
& - \ x_{\vvarx_2,\vvarx_1}\cdot  1 \cdot x_{\vvarz_2,\vvarz_1}
\ - \ x_{\vvarx_2,\vvarx_1}\cdot x_{\vvarw_2,\sql{\textquotesingle actor\textquotesingle}}\,.
\end{eqnarray*}
During simplification, we initialize $\simplBtwo$ to $B_2$ and
then remove from~$\simplBtwo$ the variables $\vvarx_2,\vvarz_2$, and let
$\xsim{\vvarx_2,\vvarx_1}=1$, $\xsim{\vvarz_2,\vvarz_1}=1$, $\xsim{\vvarx_2,j}=0$ for all
$j\in A_1\setminus\myset{\vvarx_1}$ and $\xsim{\vvarz_2,j}=0$ for all $j\in
A_1\setminus\myset{\vvarz_1}$.
Afterwards,
we remove $\vvarw_2$ from $\simplBtwo$ and let
$\xsim{\vvarw_2,\sql{\textquotesingle actor\textquotesingle}}=1$ and $\xsim{\vvarw_2,j}=0$ for all $j\in A_1\setminus\myset{\sql{\textquotesingle actor\textquotesingle}}$.
Then,
$\simplBtwo=\emptyset$ and
$\pacsimpl(\ov{x}) = 
  -  \xsim{\vvarx_2,\vvarx_1}{\cdot} \xsim{\vvarz_2,\vvarz_1}
   -  \xsim{\vvarx_2,\vvarx_1}{\cdot} \xsim{\vvarw_2,\sql{\textquotesingle actor\textquotesingle}}
  = -2$ 
  and $\puniquesimpl(\ov{x})=0$. \
Thus,
we obtain: \
$\psimpl(\ov{x})=-2$, \
$\dsimpl=-\setsize{\TableauExampleTwo}=-2$,  and
$\Csimpl=\myset{0,1}^{\emptyset\times A_1}$ which, by convention, is
the set~$\myset{\emptytuple}$ consisting only of the empty tuple $\emptytuple$. We leverage this in Example~\ref{example:trivial:sql}.
\end{exampleWithEndmarker}  

\subsection{Using Constraints}\label{sec:contraints}

A further optional step is to \emph{use constraints}, i.e., to
restrict the set~$C$ of considered bit-matrices $\ov{x}$ to
those bit-matrices that actually represent functions (i.e.\ where there is exactly one 1 per row). Such constraints can be imposed using an adapted QAOA version as explained in Section~\ref{sec:QAOA}.
This
drastically
reduces the search space, and 
additionally, it allows
to simplify the polynomial $p(\ov{x})$ by completely dropping the
term involving $\punique(\ov{x})$.
Accordingly, for our choices $(\pgen,\dgen,\Cgen)$ and
$(\psimpl,\dsimpl,\Csimpl)$, we get new \emph{constrained} versions
$(\pcgen,\dcgen,\Ccgen)$ and
$(\pcsimpl,\dcsimpl,\Ccsimpl)$, which are defined as follows: \
$\pcgen(\ov{x})\deff \pac(\ov{x})$, \
$\pcsimpl(\ov{x}) \deff  \pacsimpl(\ov{x})$, \
$\dcgen\deff\dgen=-\setsize{\TableauTwo}$, \
$\dcsimpl\deff\dsimpl=-\setsize{\TableauTwo}$, and
$\Ccgen$ is the set of all bit-matrices
$\ov{x}\in\myset{0,1}^{B_2\times A_1}$ where for all $i\in B_2$ there
is exactly one $j\in A_1$ with $x_{i,j}=1$, while
$\Ccsimpl$
is defined analogously, by replacing $B_2$ with $\simplBtwo$.
Note that
$\setsize{\Ccgen} \ = \ (n_1)^{n_2} \ \leq \ \setsize{\Cgen} =
2^{n_2\cdot n_1}$
and  
$\setsize{\Ccsimpl} \ = \ (n_1)^{\simplntwo} \ \leq \ \setsize{\Csimpl} =
2^{\simplntwo\cdot n_1}$.
Theorems~\ref{thm:correctnessOfGenericChoice} and
\ref{thm:correctnessOfSimplifiedChoice} yield:

\begin{corollary}\label{cor:correctnessOfConstrainedChoices} \ \\
The statement of Theorem~\ref{thm:correctnessOfGenericChoice} holds
for $(p,\myd,C)\deff (\pcgen,\dcgen,\Ccgen)$. \\
The statement of
Theorem~\ref{thm:correctnessOfSimplifiedChoice} holds for
$(p,\myd,C)\deff (\pcsimpl,\dcsimpl,\Ccsimpl)$.
\end{corollary}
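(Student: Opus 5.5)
The idea is to reduce the constrained statements to the unconstrained ones via two observations. First, $\Ccgen\subseteq\Cgen$ (resp.\ $\Ccsimpl\subseteq\Csimpl$). Second, on the constrained sets the dropped penalty term vanishes, because $\punique(\ov{x})=0$ for every $\ov{x}\in\Ccgen$, and likewise $\puniquesimpl(\ov{x})=0$ for every $\ov{x}\in\Ccsimpl$. Hence for every $\ov{x}\in\Ccgen$ we have $\pcgen(\ov{x})=\pac(\ov{x})=\pgen(\ov{x})$, and analogously $\pcsimpl(\ov{x})=\psimpl(\ov{x})$ on $\Ccsimpl$.

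For the generic case I would argue the three claims of Theorem~\ref{thm:correctnessOfGenericChoice} in turn.

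\emph{Lower bound.} For $\ov{x}\in\Ccgen$ we get $\pcgen(\ov{x})=\pgen(\ov{x})\geq\dgen=\dcgen$ directly from that theorem.

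\emph{Witness property.} If $\ov{x}\in\Ccgen$ satisfies $\pcgen(\ov{x})=\dcgen$, then $\ov{x}\in\Cgen$ and $\pgen(\ov{x})=\dgen$. By the theorem, $\ovxext$ then represents a homomorphism from $\queryTwo$ to $\queryOne$.

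\emph{Equivalence.} If $\min_{\Ccgen}\pcgen=\dcgen$, the witness property gives a homomorphism, and Theorem~\ref{thm:ChandraMerlin} yields $\queryOne\querycont\queryTwo$. Conversely, if $\queryOne\querycont\queryTwo$, Theorem~\ref{thm:ChandraMerlin} gives a homomorphism $h$ from $\queryTwo$ to $\queryOne$. Since $h$ satisfies conditions~(1) and~(2) of Definition~\ref{def:HomomorphismForQueries}, its bit-matrix restricted to the rows in $B_2$ is some $\ov{x}$ whose extension $\ovxext$ is exactly the bit-matrix of $h$ (items~(a),(b) fix the other rows consistently with $h$). This $\ov{x}$ has exactly one~1 per row, so $\ov{x}\in\Ccgen$. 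By Lemma~\ref{lemma:p_ac_new}(b) we get $\pac(\ov{x})=-\setsize{\TableauTwo}$, so the minimum is attained and, by the lower bound, equals $\dcgen$.

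The simplified case follows the same scheme using Theorem~\ref{thm:correctnessOfSimplifiedChoice}. The lower bound and the witness property (now with $\ovxsim$) transfer verbatim, since $\Ccsimpl\subseteq\Csimpl$ and the two polynomials agree on $\Ccsimpl$.

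The main obstacle is the direction $\queryOne\querycont\queryTwo\Rightarrow\min_{\Ccsimpl}\pcsimpl=\dcsimpl$: the minimizer supplied by Theorem~\ref{thm:correctnessOfSimplifiedChoice} must lie in $\Ccsimpl$, and $\Ccsimpl$ is only a subset of $\Csimpl$. I would handle this as follows. Any $\ov{x}\in\Csimpl$ with $\psimpl(\ov{x})=\dsimpl$ has $\ovxsim$ representing a total function. This forces every row of $\ovxsim$, in particular every row indexed by $\simplBtwo$, to contain exactly one~1, so $\ov{x}\in\Ccsimpl$ automatically. The same observation gives an alternative route in the generic case as well. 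The point to check carefully is that totality over all of $A_2$, as asserted in Lemma~\ref{lemma:p_ac_new}(b) and inherited by Theorem~\ref{thm:correctnessOfSimplifiedChoice}, indeed covers the rows in $\simplBtwo$.
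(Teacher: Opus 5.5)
Your proposal is correct and follows essentially the paper's route. The paper gives no separate proof and simply says the corollary follows from Theorems~\ref{thm:correctnessOfGenericChoice} and~\ref{thm:correctnessOfSimplifiedChoice}; your reduction spells this out. On $\Ccgen\subseteq\Cgen$ (resp.\ $\Ccsimpl\subseteq\Csimpl$) the penalty term vanishes, so the polynomials coincide there. Moreover, every optimal point of the unconstrained problem already lies in the constrained set, because its extended (resp.\ simplified) bit-matrix represents a function. The point you flag at the end is immediate: $\simplBtwo\subseteq A_2$, and the rows of $\ovxsim$ indexed by $\simplBtwo$ are exactly the rows of $\ov{x}$.
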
  

\subsection{Identifying Trivial Cases}\label{sec:trivialcases}

Let $(p,\myd,C)$ be any of the choices $(\pgen,\dgen,\Cgen)$,
$(\pcgen,\dcgen,\Ccgen)$,
$(\psimpl,\dsimpl,\Csimpl)$,
$(\pcsimpl,\dcsimpl,\Ccsimpl)$.
If the polynomial $p$ actually does not contain any binary variable
(e.g.\ if all variables have been assigned during preparation and/or simplification;
see Appendix~\ref{appendix:theory} for details), 
then $p(\ov{x})$ is the constant function $\ell$, for some integer
$\ell$ (for short: $p(\ov{x})\equiv \ell$).
In case $p(\ov{x})\equiv\ell$ for an integer $\ell$, we
obtain from Theorems~\ref{thm:correctnessOfGenericChoice}
and~\ref{thm:correctnessOfSimplifiedChoice},
and Corollary~\ref{cor:correctnessOfConstrainedChoices}, that
$\queryOne\not\querycont\queryTwo$ if $\ell\neq d$, and
$\queryOne\querycont\queryTwo$ if $\ell=d$; in the latter case we
can construct a homomorphism from $\queryTwo$ to $\queryOne$ as a certificate proving   $\queryOne\querycont\queryTwo$.

\begin{exampleWithEndmarker}\label{example:trivial:sql}
We continue Example~\ref{example:simpl:sql} and consider the choice
$(\psimpl,\dsimpl,\Csimpl)$.
Recall that $\psimpl(\ov{x})\equiv -2$ and $\dsimpl=-2$.
As explained above, this constitutes a trivial case where we can stop
and output ``$\queryExampleOne\querycont\queryExampleTwo$''. Inspecting the definition of $\ovxsim$ yields a homomorphism from
$\queryExampleTwo$ to $\queryExampleOne$, namely the mapping $h\colon
A_2\to A_1$ with
$h(\vvarx_2)=\vvarx_1$, $h(\vvary_2)=\vvary_1$, $h(\vvarz_2)=\vvarz_1$, and $h(\vvarw_2)=\sql{\textquotesingle actor\textquotesingle}$.
\end{exampleWithEndmarker}

Recall that during preparation and simplification, we have also identified trivial cases in which we can
stop and output ``$\queryOne\not\querycont\queryTwo$''.

%% file: system.tex
\begin{figure*}[tb]
\centering
\includegraphics[width=\linewidth]{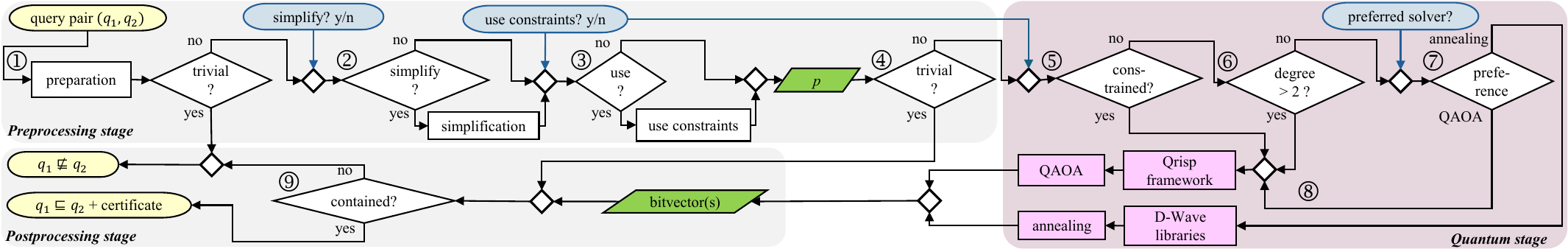}
\caption{3-stage workflow: Input problem and output (yellow ovals), parameters (blue ovals), and the decisions leading to evaluation using QAOA or annealing (pink boxes). The polynomial~($p$) and bitvectors are central artifacts (green parallelograms).}

\label{fig:workflow}

\end{figure*}

Figure~\ref{fig:workflow} shows our 3-stage workflow.  The \emph{preprocessing stage}
translates the input, a pair of conjunctive queries, into a polynomial. First is the preparation step~(\textcircled{\scriptsize 1}), as described in Section~\ref{sec:preparation}, which can detect certain
trivial cases.
Simplification~(\textcircled{\scriptsize 2}) and the addition of constraints~(\textcircled{\scriptsize 3}) are optional, and either combination can result in different polynomials.
Trivial problems~(\textcircled{\scriptsize 4}) are
 answered directly.

Next is the decision whether to use QAOA or annealing in the \emph{quantum stage}. If the problem is constrained or has a degree higher than two, the system defaults to QAOA
(steps~\textcircled{\scriptsize 5} and~\textcircled{\scriptsize 6}). In all other cases, the system follows the user's preference~(\textcircled{\scriptsize 7}).
Our solvers can be either simulator software, or quantum devices~(\textcircled{\scriptsize 8}). Software libraries (from Qrisp and D-Wave) provide access to solvers (Appendix~\ref{appendix:system} provides  architectural details).

Performing several measurements, we obtain a set of bitvectors, where
each encodes a bit-matrix. %

In \emph{postprocessing}, we check each bitvector~$\ov{x}$ in
polynomial time for validity and optimality:
We
check whether the bitvector represents
a bit-matrix with at most one 1 per row
(validity), and we
compute the function value $p(\ov{x})$ and
compare it to the known optimal value (optimality).
Despite the theoretical correctness, the additional validity check
is necessary for constrained QAOA on real hardware, due to potential hardware errors and noise.
Invalid results are directly discarded.
Each valid and optimal 
bitvector can be interpreted as a homomorphism certifying query containment~(\textcircled{\scriptsize 9}).
If no such bitvector is found, the system reports that containment does not hold.

%% file: experiments.tex
We explore the following research questions.
To not be limited by current hardware, we rely on simulators for the first three. 
\ 
\begin{enumerate}[RQ1:]

	\item
	Soundness: Can our approach correctly solve \QCP ?
	
	\item Scalability: How does the probability of solving \QCP\ problems scale with the size of these problems?
	
	\item Parameter sensitivity: How do QAOA-specific parameters affect the probability of  solving \QCP? 

	\item 
	Feasibility on current hardware: Can we successfully demonstrate a proof-of-concept on current quantum hardware?	
\end{enumerate}

\subsection{Metrics}
\label{sec:metrics}

Experiments with \emph{positive} and \emph{negative} query pairs (containment holds or does not hold) allow to capture \emph{classification outcomes} (true/false positives/negatives). Since our approach produces a certificate, it cannot produce any false positives.

We determine how likely it is to successfully solve a \QCP\ problem
and track the \emph{solution probability} as the fraction of
\emph{measurements} (reads for annealing, shots
for QAOA) for which an optimal solution is found. Here, we only consider positive query pairs, as the solution probability of negative pairs is always 100\%. %
As long as the solution probability is non-zero, we find a certifying
homomorphism which proves
that containment holds.\footnote{The probability $P^*$ 
of obtaining at least one solution grows exponentially with the number of measurements.
For example, with a solution probability of only~10\% and 20 measurements, the probability for a true positive reaches $P^*=1-0.9^{20}\approx 87.8\%$.}

\subsection{Experimental Setup}\label{subsec:ExpSetup}

\myparagraph{Query Pairs}
We manually extracted query pairs from textbooks and benchmarks. Thus,
these ``\emph{T\&B problems}'' %
are community-authored.
The textbook problems\footnote{From (1)~``Foundations of Databases'' by Abiteboul, Hull, and Vianu~\cite{ahv-book}, (2)~the online textbook ``Database Theory'' by Arenas, Barcel\'o, Libkin, Martens, and Pieris \cite{ArenasEtAl_DBT-Book} (version from 2022-08-19), (3) the booklet ``Answering Queries using Views'' by Afrati, Chirkova, and Jagadish \cite{10.5555/3360153}, (4)~lecture material by Koch taught at Saarland University in 2006 (shared in personal communication), and (5)~lecture material by Schweikardt~\cite{schweikardt_lecture_2023}.} 
include the example from Sec.~\ref{sec:prelims_qc}.
We harvested various query equivalence or query containment benchmarks~\cite{uwdbcosette,idni-tml,jena-sparql,sparql-qc-paper, sparql-qc}. 
Conjunctive SQL and SPARQL queries were manually converted to tableau form.

Figure~\ref{fig:tb_total} states the numbers of query pairs by 
their ground truth w.r.t.\ \QCP\ (positive/P or negative/N, derived manually). 
This includes problems that may later be identified as trivial.
Figure~\ref{fig:tb_nontrivial}
states the subsets of non-trivial problems that are evaluated in the
quantum stage, via annealing (for a polynomial of degree up to~2) and QAOA (for any degree, with maximum occurring degree~4).
The former are included in the latter.
We state the original number of problems, as well as the reduced
number, after simplification.

\begin{figure}[thb]
  \centering
  \footnotesize
  \setlength{\tabcolsep}{3pt}
  \renewcommand{\arraystretch}{0.9}  %
  \captionsetup[subfigure]{skip=2pt} %
    
  \begin{subfigure}[t]{0.15\columnwidth}

    \begin{tabular}{rr}
      \toprule
      \multicolumn{2}{c}{\textbf{Total}} \\
      \cmidrule(lr){1-2}
      \textbf{P} & \textbf{N} \\
      \midrule
      108 & 81 \\
      \bottomrule
      \\
    \end{tabular}

    \centering
    \caption{T\&B}
    \label{fig:tb_total}
  \end{subfigure}
  \hfill
  \begin{subfigure}[t]{0.5\columnwidth}
    \centering

    \begin{tabular}{lrrrr}
      \toprule
      \textbf{Collection}
      & \multicolumn{2}{c}{\textbf{Anneal.}}
      & \multicolumn{2}{c}{\textbf{QAOA}} \\
      \cmidrule(lr){2-3} \cmidrule(lr){4-5}
      & \textbf{P} & \textbf{N} & \textbf{P} & \textbf{N} \\
      \midrule
      Original T\&B
      & \textbf{74} & \textbf{22}
      & \textbf{87} & \textbf{26} \\
      Simplified T\&B
      & \textbf{26} & \textbf{2}
      & \textbf{32} & \textbf{3} \\
      \bottomrule
    \end{tabular}
    \caption{Non-trivial T\&B}
    \label{fig:tb_nontrivial}
  \end{subfigure}
  \hfill
  \begin{subfigure}[t]{0.32\columnwidth}
    \centering

	\tikzset{
		graphnode/.style={circle, fill=black, draw=none, inner sep=0pt, minimum size=3.2pt},
		graphedge/.style={line width=0.7pt, -{Stealth[length=1.8mm,width=1.1mm]},
			shorten <= 0.5mm, shorten >= 0.5mm, line cap=round}
	}

	\begin{tabular}{@{}c@{\hspace{4mm}}c@{}}
		\begin{tabular}{@{}c@{}}
			\begin{tikzpicture}[scale=0.75, baseline=(ref.base)]
				\coordinate (ref) at (0,0);
				\node[graphnode] (a) at (-0.55,0) {};
				\node[graphnode] (b) at ( 0.55,0) {};
				\draw[graphedge] (a) to[out=45, in=135, looseness=1.05] (b);
				\draw[graphedge] (b) to[out=-135, in=-45, looseness=1.05] (a);
			\end{tikzpicture}\\[1.5pt]
			2-cycle
			\\[4pt]
			\begin{tikzpicture}[scale=0.75, baseline=(ref.base)]
				\coordinate (ref) at (0,0);
				\node[graphnode] (u) at (-0.60,0) {};
				\node[graphnode] (v) at ( 0.00,0) {};
				\node[graphnode] (w) at ( 0.60,0) {};
				\draw[graphedge] (u) -- (v);
				\draw[graphedge] (v) -- (w);
			\end{tikzpicture}\\[1.5pt]
			2-chain
		\end{tabular}
		&
		\begin{tabular}{@{}c@{}}
			\begin{tikzpicture}[scale=0.75, baseline=(ref.base)]
				\coordinate (ref) at (0,0);
				\node[graphnode] (s0) at (0,0) {};
				\foreach \ang in {90,162,234,306,18}{
					\node[graphnode] (s\ang) at (\ang:0.75) {};
					\draw[graphedge] (s0) -- (s\ang);
				}
			\end{tikzpicture}\\[2pt]
			5-star
		\end{tabular}
	\end{tabular}

    \caption{Graphs}
    \label{fig:blocks}
  \end{subfigure}

  \caption{(a) Query pairs from textbooks and benchmarks, (b)~non-trivial subsets, and (c)~generic graph building blocks.}
  \label{fig:tb_and_blocks}
\end{figure}

We designed two synthetic families of \emph{parameterized \QCP\  problems}, based on the directed graph schema~$\dbschema_{g}$ from Example~\ref{example:preparation:cycle_chain}.
Figure~\ref{fig:blocks} illustrates the archetypal building blocks.
Family \textbf{2-cycle-to-$i$-chain} contains pairs of a 2-cycle query
and a query describing a chain of length~$i$  for $i\in\NN_{\geq
  1}$. For illustration, revisit the queries 2-cycle ($q_{\textit{2cy}}$) and 2-chain~($q_{\textit{2ch}}$) presented in Example~\ref{example:preparation:cycle_chain}.
Family \textbf{2-chain-to-$i$-star}, contains pairs of a 2-chain query and an $i$-star query for $i\in\NN_{\geq 1}$.
All these problems can be solved with annealing and QAOA.
They allow for small-stepped scaling in the number of binary variables, which here, map directly to qubits.
Moreover, preparation or simplification do not reduce the number of binary variables. (While simplification has no effect, we will enable it by default in our experiments.)
The search space is non-trivially exponential, yet  we can
manually derive two optimal (and only) solutions for each instance as ground truth (details in Appendix~\ref{app:families}).

\myparagraph{Implementation}
We implemented the workflow described in Section~\ref{sec:system}, utilizing
Eclipse Qrisp for gate-based QAOA across multiple backends and D-Wave
libraries for
annealing.  Specifically, we use  Python~3.10, qrisp~0.7.9,
dwave-ocean-sdk~9.0.0,
\allowbreak%
qiskit~2.2.1 and qiskit-aer~0.17.2. To access IQM
quantum hardware, we used Python~3.11, qrisp~0.7.17, and qiskit~2.1.2.

\myparagraph{Backends}
We  first describe the \emph{simulators}.
For annealing, we use the classical simulated annealer software SimulatedAnnealingSampler. 
The PathIntegralAnnealingSampler emulates the behavior of a quantum annealer. 
To simulate quantum circuit experiments,  we use IBM Qiskit’s CPU-based Aer.
We use four \emph{quantum hardware devices}.
For quantum annealing, we use D-Wave's Advantage System~6.4
and Advantage2 System~1.7. %
For QAOA, we use IQM Emerald %
and IBM Aachen.
Table~\ref{tab:solvers} provides details on all solvers.

\begin{table}[tbh]
\caption{Simulators (SW) with their software libraries and quantum hardware devices (HW) with number of qubits.}
	\footnotesize
    \setlength{\tabcolsep}{3pt}
    \renewcommand{\arraystretch}{0.9}  %
	\begin{tabular}{@{}llcl@{} r}
		\toprule
		\textbf{Abbrev/} & \textbf{Description} &  & \textbf{Library/Device} & \textbf{Qubits} \\
		\midrule
		SA & Sim.\ annealer & SW & dwave-neal 0.6.0 \\
		SQA & Sim.\ quantum annealer & SW & dwave-samplers 1.6.0\\
		QA-adv &D-Wave quantum annealer & HW & Advantage System 6.4 & 5,612\\
		QA-adv2 &D-Wave quantum annealer & HW & Advantage2 System 1.7 & 4,592\\
		\midrule
		AerSim & CPU-based simulator & SW & qiskit-aer 0.17.2 \\
		IQM-Emerald & Gate-based QPU & HW & IQM Emerald & 54 \\
        IBM-Aachen & Gate-based QPU & HW & IBM Aachen & 156\\
		\bottomrule
	\end{tabular}

	\label{tab:solvers}
\end{table}

\myparagraph{Standard configuration}
Unless stated otherwise, we use a standard configuration. For easier comparison, we use the same configuration for experiments with simulators and quantum hardware. 
We deliberately keep the number of measurements low to ensure that
successful outcomes cannot be attributed to random chance and to meet
the practical limitations of current hardware.

In annealing, we perform 500 reads. %
For both simulated annealing and simulated quantum annealing, we use a beta range of $[0.5,10]$ and the (default) geometric beta schedule. For simulated annealing, we use a fixed initial value. For quantum annealing, we use the (default) linear annealing schedule and an annealing time of~\mbox{$30\mu s$}.

For QAOA, we configure 30~iterations, 2~layers, 500~shots, and constraints. 
To efficiently simulate large instances, we use the matrix product state method.
In simulating QAOA, we limit the number of binary variables to~42. All simulations have a 60-minute timeout.

\myparagraph{Execution environment}
Our experiments run on a Linux server with 2 Intel Xeon Gold 6242R processors (3.1 GHz, 40 cores) and 384~GB RAM. Quantum hardware is accessed as cloud service.

\subsection{RQ1: Soundness}

We first explore soundness of the entire workflow and then of the quantum stage. We use simulators with standard configuration.

\myparagraph{Soundness of end-to-end workflow}
We process all positive T\&B query pairs from Figure~\ref{fig:tb_total} with our 3-stage workflow. This includes problems that are trivial to start with, that become trivial after simplification, and that cannot be solved, as the degree of the polynomial is too high for annealing or where the number of qubits exceeds the simulator capabilities.
We measure solution probabilities, and therefore only consider the positive query pairs.

We distinguish two evaluation paths. For the first, we configure annealing as
preferred solver and deviate from the workflow in
Figure~\ref{fig:workflow}: For polynomial degree greater two, we
return that the problem instance could not be solved (rather than defaulting to QAOA).
For the second, we configure QAOA as preferred solver.

\begin{figure}[tb]
    \centering
     \captionsetup[subfigure]{skip=2pt} %
    \begin{subfigure}[t]{0.48\columnwidth}
        \centering
        \includegraphics[%
           width=\columnwidth,trim=0 7 0 5, clip]{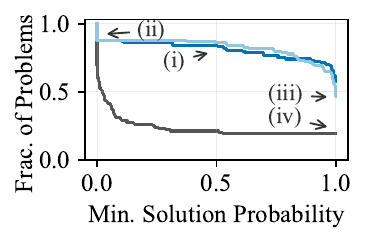}
        \caption{Annealing w/o simplification.}
        \label{fig:survival-anneal-orig}
    \end{subfigure}
    \hfill
    \begin{subfigure}[t]{0.48\columnwidth}
        \centering
        \includegraphics[%
        width=\columnwidth,trim=0 7 0 5, clip]{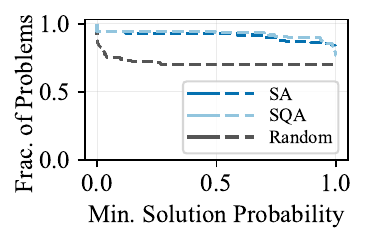}
        \caption{Annealing w/ simplification.}
        \label{fig:survival-anneal-simpl}
    \end{subfigure}

    \begin{subfigure}[t]{0.48\columnwidth}
        \centering
        \includegraphics[%
        width=\columnwidth,trim=0 7 0 0, clip]{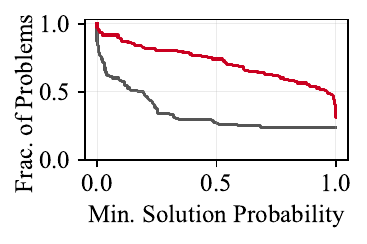}
        \caption{QAOA w/o simplification.}
        \label{fig:survival-qaoa-orig}
    \end{subfigure}
    \hfill
    \begin{subfigure}[t]{0.48\columnwidth}
        \centering
        \includegraphics[%
        width=\columnwidth,trim=0 7 0 0, clip]{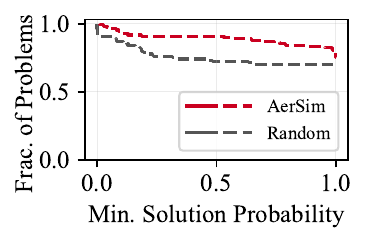}
        \caption{QAOA w/ simplification.}
        \label{fig:survival-qaoa-simpl}
    \end{subfigure}

  \caption{Soundness of 3-stage workflow of positive T\&B query pairs (on simulators). Showing fraction of problems where solution probability meets/exceeds a given threshold. 
  }

  \label{fig:solution_probabilities}
    
\end{figure}

\myparagraph{Results}
Figure~\ref{fig:solution_probabilities}
visualizes the results.
The plots~(a) and~(b) show annealing with simplification disabled/enabled. Likewise, the plots~(c) and~(d) show QAOA with  simplification disabled/enabled.

We plot the fraction of query pairs whose solution probability meets/exceeds a given threshold. 
That is, for any value on the $x$-axis, the corresponding value on the $y$-axis indicates the fraction of pairs that achieved a solution probability greater than or equal to this threshold.
To illustrate, we highlight three observations in Figure~\ref{fig:survival-anneal-orig}. Arrow~(i) points out that a solution probability of 50\% and greater is obtained with simulated annealing for about 80\% of the query pairs (which accounts for almost all QUBO instances).

A vertical function block denotes the fraction of problems that have this exact solution probability.
Specifically, the leftmost vertical function block (indicated by Arrow~(ii) at $x=0$) shows the fraction of query pairs that could not be solved by our workflow (e.g., too many qubits, or polynomial degree too high for annealing).

The $y$-value at which the curve ends (see Arrow~(iii) at $x=1$) shows the share of problem instances solved with close to 100\% solution probability.
For reference, we plot the likelihood of solving the problem by chance (``random''), obtained by generating 500 bitvectors at random. For QAOA, we generate bitvectors that are \emph{valid}: Each represents a matrix with one bit per row set to~1, to be comparable with constrained QAOA where this is enforced.

This likelihood is considerably lower than the measured probabilities (apart from the corner cases of solution probabilities close to zero or one).
The $y$-value at which the random curve ends (Arrow~(iv) at $x=1$) shows the fraction of problems that are trivial. Thus, the difference in fraction between Arrows~(iii) and~(iv) is the gain that can be contributed to our approach.

Throughout, we observe even higher solution probabilities when simplification is enabled (even for guessing at random).

\myparagraph{Discussion}
With simplification, more input problems become trivial to start with, and many problems become easier to solve, since the number of binary variables can often be reduced.
Our results show that the simplification step is indeed very effective.

In general, problems with high solution probabilities are easy to solve for our approach. It is plausible that we observe many such cases, given that the textbook problems were designed as pencil-and-paper exercises.
Even for problems with very low solution probabilities, a single solution found is sufficient to solve a \QCP\  problem. Overall, this \textbf{confirms that our workflow can correctly solve nearly all input problems}, esp.\ with simplification enabled.

\smallskip
\myparagraph{Soundness of quantum stage} 
We now focus on the quantum stage in isolation and only consider the nontrivial T\&B problems (Figure~\ref{fig:tb_nontrivial}).
We deliberately include the negative query pairs to check whether our
objective function is implemented correctly.

\begin{table}[tb]

	\caption{Classification outcomes for nontriv.\ T\&B problems.}
	\label{tab:soundness}
	\centering
	\footnotesize
    \setlength{\tabcolsep}{3pt}
    \renewcommand{\arraystretch}{0.9}  %
    
	\begin{tabular}{lrrrrr rrrrr}
		\toprule
		\textbf{Collection} 
		& \multicolumn{5}{c}{\textbf{Annealing} (SA, SQA)} 
		& \multicolumn{5}{c}{\textbf{QAOA} (AerSim)} \\
		\cmidrule(lr){2-6} \cmidrule(lr){7-11}
		& \textbf{TP} & \textbf{FP} & \textbf{FN} & \textbf{TN} & $\boldsymbol{\bot}$ 
		& \textbf{TP} & \textbf{FP} & \textbf{FN} & \textbf{TN} & $\boldsymbol{\bot}$ \\
		\midrule
		Original T\&B   
		& 74 & \multicolumn{1}{c}{/} & 0 & 22 & 0 
		& 84 & \multicolumn{1}{c}{/} & 0 & 25 & 4 \\
		Simplified T\&B 
		& 26 & \multicolumn{1}{c}{/} & 0 & 2 & 0 
		& 32 & \multicolumn{1}{c}{/} & 0 & 3 & 0 \\
		\bottomrule
	\end{tabular}
    
    \textbf{Legend:} True/False positives/negatives; $\bot$: simulator capacity exceeded.
\end{table}

\myparagraph{Results}
Table~\ref{tab:soundness} reports perfect true positive~(TP) and true negative~(TN) rates for all T\&B problems that the simulators can process
(in a small number of QAOA experiments,
the number of qubits exceeded the capacities of the simulator, denoted ``$\bot$'').
Recall that by design, false positives~(FP) cannot occur~(denoted ``/''). In the experiment shown, we did not observe any false negatives~(FN). With annealing, both simulated annealing and simulated quantum annealing (SA and SQA) produce identical results. 

We found our experiments to be highly repeatable, with one exception, as  QAOA sometimes produces a single false negative.

\myparagraph{Discussion}
The classification outcomes \textbf{confirm the soundness of our implementation for the quantum stage}.

\subsection{RQ2: Scalability}\label{sec:experiments-scalability}

Using the parameterized \QCP\ problems, we
explore how the probability of solving \QCP\ problems scales with problem size.
We use the same setup as before. In addition, we run
AerSim with 4~Layers, 100~iterations, and 5,000~shots (referred to as ``bigshot'').

\begin{figure}[tb]
    \centering
    \captionsetup[subfigure]{skip=2pt} %
         
    \begin{subfigure}{0.48\columnwidth}
        \centering
        \includegraphics[%
        width=\columnwidth,trim=0 3 0 0, clip]{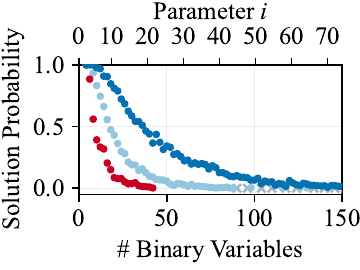}
        \subcaption{2-cycle-to-$i$-chain}
        \label{fig:2-cycle-to-i-chain}
    \end{subfigure} 
    \hfill
    \begin{subfigure}{0.48\columnwidth}
        \centering
        \includegraphics[%
        width=\columnwidth,trim=0 3 0 0, clip]{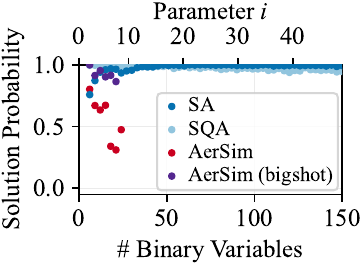}
        \subcaption{2-chain-to-$i$-star}
        \label{fig:2-chain-to-i-star}
    \end{subfigure}
    \caption{Problem size vs.\ solution probability for parameterized \QCP\ problems. Scaling behavior is problem-specific.}

    \label{fig:scaling_behavior}
\end{figure}

\myparagraph{Results}
Figure~\ref{fig:scaling_behavior} shows how the solution probability changes as the size of the input problem increases. The bottom horizontal axis states the number of binary variables in the polynomial, which directly maps to the number of qubits; the top axis states the parameter
controlling the problem size.
Grey crosses indicate a solution probability of zero (crosses sampled to avoid overplotting).

The scaling behavior for 2-cycle-to-$i$-chain (Figure~\ref{fig:2-cycle-to-i-chain}) is as expected, and both annealing and QAOA show the same trend: the solution probability decreases as the problem size grows. Simulated annealing reaches larger problem instances, due to the more favorable scaling behavior of the simulator~(SA).

However, 2-chain-to-$i$-star (Figure~\ref{fig:2-chain-to-i-star})
displays a strikingly different scaling behavior, and much higher solution probabilities. With simulated  annealing (SA), smaller instances even yield lower solution probabilities than larger ones, and the curve quickly reaches 100\%,
i.e., each read returns an optimal solution. 
With QAOA (AerSim), the solution probabilities remain higher compared to 2-cycle-to-$i$-chain, but there is a steep drop until all simulations exceed the timeout.
To investigate, we compare with the ``bigshot'' configuration (purple dots in Figure~\ref{fig:2-chain-to-i-star}). Evidently, increasing the number of layers, shots, and iterations improves the solution probabilities.

\myparagraph{Discussion} %
2-cycle-to-$i$-chain scales as expected:
the solution probability decreases with increasing problem size: Each additional variable increases the probability that the respective heuristic algorithm does not terminate in a global minimum.
With QAOA, we exceed the capacity of the simulator sooner.

To better understand the outstanding performance of 2-chain-to-$i$-star, we investigated the characteristics of this problem family. 
Section~\ref{sec:investigations} discusses the problem-specific energy
landscape,
where each local minimum is at the same time a global minimum. This provides favorable conditions for all solvers. Section~\ref{sec:investigations} also elaborates our hypothesis on what causes slightly inferior performance for small values of the parameter~$i$ with simulated annealing~(SA).

For QAOA, we observe a clear drop in performance in Figure~\ref{fig:2-chain-to-i-star}, yet the solution probabilities are nevertheless higher than in Figure~\ref{fig:2-cycle-to-i-chain}. Our ``bigshot'' experiment indicates that improving the configuration parameters also improves QAOA performance.

Clearly, the \textbf{scaling behavior is problem-specific}. 2-cycle-to-$i$-chain scales as can be expected based on state-of-the-art, 
and chain queries of length 20 are no longer small queries.
However, \textbf{2-cycle-to-$i$-star surprises with exceptionally good performance}. This strongly motivates further research to investigate \QCP\ problems in terms of their energy landscapes.

\subsection{RQ3: Parameter Sensitivity for QAOA}
\label{sec:experiments_parameters}

We explore the parameter sensitivity of QAOA on the 2-cycle-to-$i$-chain family, given its generic scaling behavior.
Again, the experiments are run on simulators. We vary the number of layers (0, 2, 4, and~6) and compare the unconstrained and the constrained variant.

\begin{figure}[tb]
  \centering
  \captionsetup[subfigure]{skip=2pt} %

  \begin{subfigure}{0.455\columnwidth}
    \centering
    \includegraphics[width=\columnwidth,trim=0 3 0 0, clip]{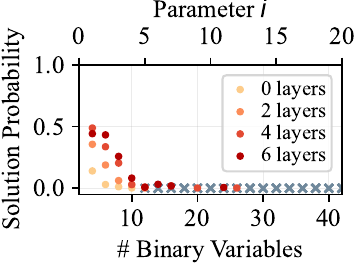}
    \caption{30 iterations, w/o constr.}
    \label{fig:rainbow-30its-unconstr}
  \end{subfigure}\hfill
  \begin{subfigure}{0.455\columnwidth}
    \centering
    \includegraphics[width=\columnwidth,trim=0 3 0 0, clip]{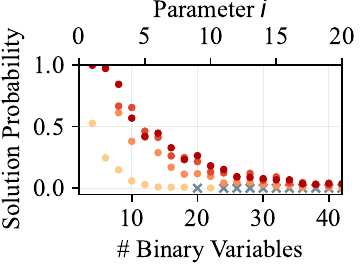}
    \caption{30 iterations, w/ constraints.}
    \label{fig:rainbow-30its-constr}
  \end{subfigure}

    \caption{Solution probabilities for 2-cycle-to-$i$-chain, varying input size and QAOA parameters in simulation (AerSim). The constrained variants (shown right) are always superior.}
   
	\label{fig:rainbow_chart_qrisp_cpu}
\end{figure}

\myparagraph{Results}
Figure~\ref{fig:rainbow_chart_qrisp_cpu} 
shows how the solution probability changes with the input size as we vary QAOA parameters.
Again, gray crosses indicate a solution probability of zero. 
The \QCP\ problem is rarely solved beyond 20~binary variables with zero layers; adding layers lifts the probability. %
Enabling constraints is very effective.

\myparagraph{Discussion} 
The results confirm that our standard configuration (30 iterations, 2 layers, 500 shots) is reasonable.
\textbf{QAOA with constraints is very effective} in reducing the solution space and thus increasing the solution probabilities. There are scenarios where we still find solutions for more than~40
binary variables, which is impressive given the exponential search space.

\begin{figure*}[ht]
  \centering
  \captionsetup[subfigure]{skip=2pt} %
  
  \begin{subfigure}{0.31\textwidth} %
    \centering
    \includegraphics[%
    width=\columnwidth,trim= 0 3 0 0, clip]{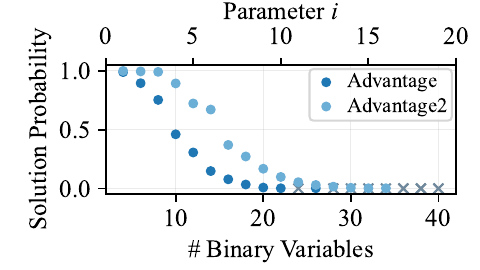}
    \caption{D-Wave}
    \label{fig:hardware-d-wave}
  \end{subfigure}
  \begin{subfigure}{0.31\textwidth}
    \centering
    \includegraphics[%
    width=\columnwidth,trim= 0 3 0 0, clip]{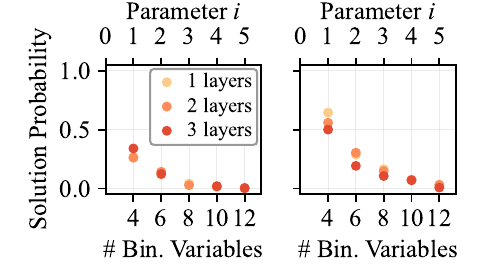}
    \caption{IQM-Emerald: w/o and w/ constraints}
    \label{fig:hardware-iqm-emerald}
  \end{subfigure}
  \begin{subfigure}{0.31\textwidth}
    \centering
   \includegraphics[%
   width=\columnwidth,trim= 0 3 0 0, clip]{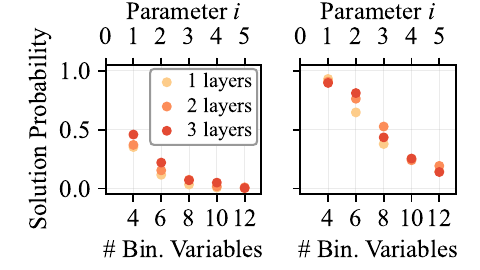}
    \caption{IBM-Aachen: w/o and w/ constraints}
    \label{fig:hardware-ibm-aachen}
  \end{subfigure}

  \caption{Quantum devices: 2-cycle-to-$i$-chain on (a) quantum annealers and (b+c) QAOA on gate-based QPUs.}
  \label{fig:quantum_hardware}

\end{figure*}

\subsection{RQ4: Experiments on Quantum Hardware}
\label{sec:experiment_real_hw}

We explore how the effects observed with simulators transfer to current quantum hardware. 
The experiments are run on the four
quantum hardware devices
using the standard configuration.

\myparagraph{Results}
Figure~\ref{fig:quantum_hardware} 
shows how the solution probabilities scale for the 2-cycle-to-$i$-chain problem.
As expected, the \QCP\ problems that we can  solve on real hardware are much smaller-scale.

With quantum annealing (Figure~\ref{fig:hardware-d-wave}), we can solve larger problem instances than with QAOA, yet we are limited to QUBO problems. Here, the more recent generation of hardware (D-Wave Advantage2) outperforms the earlier generation (D-Wave Advantage).
Figures~\ref{fig:hardware-iqm-emerald}\footnote{The left plot in Figure~\ref{fig:hardware-iqm-emerald} is missing a single data point (for 8~binary variables and 3~layers), due to a vendor-side timeout when connecting to the remote device.
} and~\ref{fig:hardware-ibm-aachen} show QAOA experiments on devices by IQM and IBM.
As in the simulation,  the constrained variant outperforms the unconstrained variant. 
A key difference concerns the effect of increasing the number of layers. Different from the simulator, more layers do not always improve the solution probability.

\myparagraph{Discussion} 
We can \textbf{successfully solve \QCP\ problems on real quantum hardware, albeit of smaller size}. Our QAOA experiments on hardware confirm that the \textbf{constrained variant is superior}. On an idealized simulator, adding layers increases expressivity and can improve solution probability. Yet on real devices, deeper circuits can accumulate more gate and decoherence errors, which may cause %
a loss in fidelity.

%% file: investigation.tex
To better understand the different scaling behavior of the two parameterized \QCP\ problems
(cf.\ Figure~\ref{fig:scaling_behavior}),
we investigated their
{energy landscapes} (cf.\ Sec.~\ref{sec:preliminaries-quantum}).
Recall that we aim to minimize the polynomial
$p(\ov{x})$ that represents
the given query pair $(\queryOne,\queryTwo)$.
The \emph{energy landscape}
of $p(\ov{x})$ refers to the distribution of polynomial values across all possible input bitvectors
(in this context called \emph{states}), where (values of) neighboring
bitvectors (i.e., they differ in one bit)  are considered to be directly connected.

While the two problem families seem similar at first, the energy landscape of 2-chain-to-$i$-star exhibits two distinguishing structural characteristics.
First, it is shaped such that every local minimum is also a global minimum, whereas 2-cycle-to-$i$-chain exhibits an increasing number of local but not global minima. Therefore, the optimum is much more likely to be reached for the former, as the solvers do not \enquote{get stuck} locally. %
Second, the energy landscape of 2-chain-to-$i$-star 
has a large plateau of zero-energy states that is directly connected to all negatively valued (and in particular, the optimal-valued) states. Due to these connections, the optimal value can be found more frequently, which explains the high solution probabilities in Figure~\ref{fig:2-chain-to-i-star}.
For simulated annealing, however, the direct connections can cause the solver to reversely
jump out of the optimum and then \enquote{get lost} in the zero-plateau. The probability
for that jump
decreases with growing~$i$ due to larger energy differences. Moreover, the likelihood of \enquote{getting lost} in the zero-plateau
also decreases, as the relative size of the zero-plateau decreases with growing~$i$.
Regarding Figure~\ref{fig:2-chain-to-i-star}, we believe
that for small~$i$, the simulated annealer sometimes escapes into the zero-plateau, leading to the lower solution probabilities observed on smaller instances.
This effect is specific to SA and not observed for (S)QA or QAOA.
Appendix~\ref{app:energy_landscapes} provides further details.

%% file: discussion.tex
We discuss our insights along the three stages of our workflow. 

\myparagraph{Preprocessing}
Our experiments show that  simplification can be highly effective.
Of the query pairs from textbooks and benchmarks, a large share becomes trivial or is reduced to quadratic form, so that they can be directly executed on quantum annealers. This has practical relevance, as it provides access to thousands of qubits, currently a magnitude more than gate-based quantum computers.

\myparagraph{Quantum stage}
Our experiments confirm that our approach is sound. We did not custom-tailor the \QCP\ problems to the low-level specifics of the quantum hardware (e.g., QPU topology), as often done in academic explorations.
Rather, we included community-authored problem instances, which underlines our success.

We investigated the scaling behavior of two families of parameterized \QCP\ problems. While one exhibits the expected behavior, the other displays consistently high solution probabilities. Our analysis suggests that this can be attributed to the underlying energy landscapes. Exploring the relationship between problem formulation and energy landscape may provide a new research perspective.

We evaluated our algorithm on current quantum hardware. Results across heterogeneous backends suggest generalizability and portability. Despite current hardware limitations, we successfully solved  \QCP\
 problems at smaller scale. Looking ahead, the gap between simulators and hardware may narrow as vendors improve calibration, reduce error rates, and increase qubit counts.

\myparagraph{Postprocessing}
Our problem formulation ensures the absence of false positives, and postprocessing produces verifiable witnesses for positive answers.
We did not observe high rates of false negatives, which constitute missed opportunities in a database query engine using containment checks for query optimization.
These are crucial prerequisites for a future database system integration. 

\myparagraph{Outlook}
We explore a basic three-stage workflow. An orthogonal line of research considers hybrid solvers that combine classical and quantum methods, with QAOA as one component. %
We also plan to investigate whether auxiliary variables allow reformulating all \QCP\ problems into quadratic form for execution on annealers.

%% file: related.tex
\emph{Query containment} is a fundamental decision problem with numerous practical applications. For conjunctive queries under set semantics (the focus here), containment is known to be NP-complete~\cite{ChandraMerlin}. Under bag semantics, multiplicities matter and the problem becomes substantially more intricate. A characterization analogous to Theorem~\ref{thm:ChandraMerlin} is not known, and it remains open whether the query containment problem %
is decidable.
 Recent advances in database theory have established decidability for restricted fragments
and undecidability for more expressive query languages, cf.~\cite{DBLP:conf/pods/KonstantinidisM19,DBLP:journals/pacmmod/MarcinkowskiO25}. 

The problem of deciding \emph{query equivalence} is related. Recent work on equivalence of restricted SQL queries encodes the problem for constraint solvers~\cite{10.14778/3236187.3236200}. In the case of inequivalence,  \emph{Cosette}~\cite{DBLP:conf/sigmod/ChuLWCS17} produces a data instance serving as a
counterexample. This validates the decision, similarly to the certifying witnesses in our approach.

A line of research has been investigating \emph{quantum computing for cost-based query optimization}~\cite{dblp:journals/pacmmod/schonbergersm23,DBLP:journals/pvldb/LiuSS25,DBLP:conf/qce/FranzWGM24,DBLP:journals/pvldb/LiuSS25,10.1145/3579142.3594298,10.1145/3736393.3736690,DBLP:journals/pvldb/Trummer016,DBLP:journals/access/FankhauserSFS23}. Like most of
these works, we also utilize QUBO and QAOA formulations, yet we target semantic query optimization for the first time.

There is a recent trend showing that formulations of database problems originally designed for quantum computing can perform remarkably well when executed on classical hardware. These \emph{quantum-inspired approaches}~\cite{DBLP:journals/pvldb/SchonbergerTM23,DBLP:journals/pacmmod/SchonbergerTM25} demonstrate that reformulating classical problems  can unlock new solution strategies.
Inversely to applying quantum computing to database problems, there are also proposals to apply database theory research to solve quantum computing problems, such as simulating  quantum circuits~\cite{DBLP:journals/pvldb/HaiHCLG25,10.1145/3736393.3736696}. %

Within the \emph{quantum algorithms} community, there is an active line of research on quantum formulations for graph structured problems. This includes graph isomorphism~\cite{CALUDE201754,10.1145/3569095,DBLP:journals/sncs/HuaD20}, which can be seen as a restricted form of graph homomorphism, as well as quantum notions of homomorphisms~\cite{DBLP:journals/jct/MancinskaR16}.
While this work is related, the modeling assumptions differ. In quantum algorithms literature, graphs are typically represented by adjacency matrices, while we consider homomorphisms that arise from tableau queries. Thus, existing formulations would require substantial modification.

%% file: conclusion.tex
We present a constructive method for discovering a homomorphism that
certifies query containment for conjunctive queries under set
semantics. Our approach guarantees the absence of false positives and,
for the first time, enables experimental
evaluation on quantum hardware.
Although current quantum devices are 
not yet suitable for real-world deployment, our results provide a proof of concept that query containment can be
translated
into an optimization problem 
and solved on quantum devices, taking a first step toward quantum-powered semantic query optimization.

%% file: appendix_quantum_comp.tex
\section{Details omitted in Section~\ref{sec:preliminaries-quantum}}
\label{sec:appendix_quantum_comp}

In this section, we provide for the interested reader more details of the solution techniques
(quantum algorithms and simulated annealing)
mentioned in
Section~\ref{sec:preliminaries-quantum} and used to solve the discrete optimization problems formulated in
Section~\ref{sec:theory}.

Particular focus is put on quantum computing methods for solving optimization problems of the form
\begin{align}
\label{app_optimization}
\ov{x}\ \in \ \argmin_{\ov{x}\in C}p(\ov{x}).
\end{align}
Here, $p\colon \myset{0,1}^n\rightarrow\R$ is the \emph{objective function}; it is a polynomial on $n$ binary variables $\ov{x}=(x_1,\dotsc,x_n)$ with real-valued coefficients.
The \emph{search space} is the set
$C\subseteq\myset{0,1}^n$, and the goal is to find a tuple $\ov{x}\in C$ such that the value $p(\ov{x})$ is as small as possible. This optimization problem is called \emph{unconstrained} if $C=\myset{0,1}^n$, and \emph{constrained} if $C\subsetneq\myset{0,1}^n$.

The polynomial $p$ is given by a finite sum of monomials
\begin{align}
\label{app_polynomial}
p(\ov{x})=\sum_{k\in K} m_k(\ov{x}),\quad \text{with} \ \ m_k(\ov{x})=c_k{\cdot}\prod_{i\in I_k}x_i\,,
\end{align}
where $K$ is a finite set, $c_k$ is a real-valued coefficient and $I_k\subseteq\myset{1,\ldots,n}$ for every $k\in K$. The degree of such a monomial $m_k(\ov{x})$ is $\deg m_k(\ov{x}) \deff |I_k|$, and the degree of $p$ is the maximum degree of its monomials.
Since each variable $x_i$ can only be assigned the values 0 and 1, every linear term can also be expressed as a quadratic term by replacing a variable $x_i$ by $x_i^2$ (due to $0\cdot0=0$ and $1\cdot1=1$). In particular, if the degree of $p$ is $\leq 2$, this implies that we can assume without loss of generality that \emph{every} monomial of $p$ has degree exactly 2; this setting is known as \emph{Quadratic Unconstrained Binary Optimization} problems (for short: QUBOs). Therefore, such a polynomial can alternatively be represented by a real-valued upper triangular $n\times n$-matrix $Q$, representing the objective function
\begin{equation}
\label{app_eq:objective}
p(\ov{x}) \ \deff \ \ \ov{x}^T \, Q\, \ov{x} \ \ = \ \ \sum_{i \leq j} \, q_{ij} \cdot x_i \cdot x_j\,,
\end{equation}
where $q_{ij}$ is the entry in row $i$ and column $j$ of $Q$.
Quantum annealing can then be used to find
\begin{equation}
\label{app_eq:QUBO}
\ov{x}\ \in \ \argmin_{\ov{x}\in C}p(\ov{x}).
\end{equation}

In this work, we utilize the following quantum computing methods to solve the optimization problems: 

\begin{itemize}

\item Quantum Approximate Optimization Algorithm (QAOA) for gate-based quantum computers

\item Simulated annealing and quantum annealing 

\end{itemize}

\subsection{QAOA for gate-based quantum computers}
\label{app:QAOA}

The \emph{Quantum Approximate Optimization Algorithm}
(QAOA)
\cite{farhi_quantum_2014} was developed to find approximate solutions to combinatorial optimization problems. 
The algorithm has a hybrid quantum-classical structure in which quantum and classical components alternate. During the quantum part, QAOA evaluates the expected value of the objective function with respect to a parameter-dependent quantum state. The classical output of the measurements on the quantum circuit acts as a feedback which is processed in the classical part of the algorithm to heuristically compute new parameters for the next quantum iteration. The new parameters are computed with the objective of minimizing the expected value of the objective function, thereby increasing the probability of (near-)optimal results, which eventually converges. 

The quantum circuit for QAOA is composed of layers: Each layer is defined by a function of the so-called \emph{mixing Hamiltonian} $H_m$ that describes the initial state and the \emph{problem Hamiltonian}
\begin{align*}
H_p=\sum\limits_{\ov{x}\in\myset{0,1}^n}p(\ov{x})\ket{\ov{x}}\bra{\ov{x}}
\end{align*}
that describes the objective function. For simplicity, one can imagine that for advancing layers in the circuit, the quantum state shifts from the initial state (i.e., the ground state -- which is the eigenvector associated with the minimum eigenvalue -- of the \emph{mixing Hamiltonian}) towards the solution state (i.e., the ground state of the
\emph{problem Hamiltonian}). 

To be precise, an initial state $\ket{\psi_0}$ is evolved by alternatingly applying the problem-specific operator
\begin{align*}
U_p(\gamma) \ \ = \ \ e^{-\mathrm{i}\gamma H_p}
\end{align*}
and the mixer operator 
\begin{align*}
U_m(\beta) \ \ = \ \ e^{-\mathrm{i}\beta H_m},
\end{align*}
where $\mathrm{i}$ denotes the complex number $\sqrt{-1}$.

For $\ell$ layers of QAOA one obtains the parameter-dependent quantum state
\begin{align*}
\ket{\psi_{\ell}(\beta,\gamma)} \ = \ U_m(\beta_{\ell})U_p(\gamma_{\ell})\dotsb U_m(\beta_1)U_p(\gamma_1)\ket{\psi_0}
\end{align*}
for real parameters $\beta=(\beta_1,\dotsc,\beta_{\ell})$ and $\gamma=(\gamma_1,\dotsc,\gamma_{\ell})$. 
The expectation of an objective function is measured by repeated preparation and sampling
from the state $\ket{\psi_{\ell}(\beta,\gamma)}$. A classical optimizer that aims to minimize the expectation of the objective function is used to determine the updated parameters $(\beta',\gamma')$, and this procedure is repeated until the optimizer converges. The problem solution is found by sampling from the final state $\ket{\psi_{\ell}(\beta^*,\gamma^*)}$. 

In the unconstrained setting, i.e., when searching for
$\ov{x}\in\myset{0,1}^n$, the initial state is prepared as uniform superposition 
\begin{align*}
\ket{\psi_0} \ \ = \ \ \frac{1}{2^{n/2}}\sum\limits_{\ov{x}\in\{0,1\}^n}\ket{\ov{x}},
\end{align*}
and the mixer operator is implemented by parameterized single-qubit rotations in constant depth.
For the polynomial objective function \eqref{app_polynomial} the operator $U_p$ can be written as
\begin{align*}
U_p(\gamma)\ \ = \ \ \prod\limits_{k\in K}e^{-\mathrm{i}\gamma\sum c_k m_k(\ov{x})\ket{\ov{x}}\bra{\ov{x}}}
\end{align*}
where we use that $\exp(A+B)=\exp(A)\exp(B)$ if $A$ and $B$ commute, i.e., $[A,B]=AB-BA=0$.

Observe that the monomials satisfy
\begin{align*}
m_k(\ov{x})=
\begin{cases}
1 & \text{if } x_i=1 \text{ for all } i\in I_k\\
0 & \text{otherwise.}
\end{cases}
\end{align*}
Therefore, each unitary in the above product acts on the qubits in $I_k$ diagonally as
\begin{equation}
e^{-\mathrm{i}\gamma\sum c_k m_k(\ov{x})\ket{\ov{x}}\bra{\ov{x}}} \ \ = \ \
\begin{bmatrix}
1 & 0 &  0 & 0 & 0\\
0 & 1 & 0 &  0 & 0 \\
0 & 0 & \ddots &  0 & 0 \\
0 & 0 & 0 & 1 & 0 \\
0 & 0 & 0 & 0 & e^{-\mathrm{i}\gamma c_k} \\
\end{bmatrix}.
\end{equation}
This is realized by a so-called multi-controlled phase (MCP) gate that applies a phase shift $\exp(-\mathrm{i}\gamma c_k)$ to the states where all qubits in $I_k$ are in state $\ket{1}$. In practice, this multi-qubit gate is decomposed into $\mathcal O(|I_k|)$ single and two-qubit gates requiring a total depth of $\mathcal O(|I_k|)$ \cite{Khattar2025}. If the sets $I_{k_1}$ and $I_{k_2}$ for indices $k_1, k_2\in K$ are disjoint, the gates can be applied simultaneously. Therefore, the depth of the quantum circuit to apply the problem-specific operator depends on the degree and structure of the polynomial $p$. E.g., for $p(\ov{x})=\sum_{k=1}^{n-1}x_k x_{k+1}$ the circuit depth would be constant independently of the number of qubits $n$. Moreover, for a general quadratic polynomial, the circuit depth would be $\mathcal O(n)$ since the chromatic index of the complete graph on $n$ vertices is upper bounded by $n$. 

In general, if $p$ is a degree $d\leq n/2$ polynomial (in practice, the number of binary variables is usually much higher than the polynomial degree $d$), then the problem-specific operator is implemented with $\mathcal O(d\cdot n^d)$ gates and circuit depth: The number of square-free monomials of degree at most $d$ is
$\sum_{k=1}^d{n \choose k}$ which is in $\mathcal O(n^d)$ for a fixed $d\leq n/2$, and implementing a degree $d$ monomial requires $\mathcal O(d)$ gates and circuit depth.

\paragraph{Constrained QAOA}

For constrained optimization problems, the set of feasible solutions $C\subsetneq\{0,1\}^n$ is a strict subset of $\{0,1\}^n$.
Strategies for incorporating such constraints into the QAOA circuit have been introduced \cite{bartschi2020grover}. The main idea is that only the subset of feasible solutions is explored and the reduced size of the search space improves the optimization results.
This is achieved by constructing a procedure that prepares a uniform superposition of all feasible states 
\begin{align*}
\ket{\psi_0}\ \ = \ \ U_C\ket{0}\ \ = \ \ \frac{1}{\sqrt{|C|}}\sum\limits_{\ov{x}\in C}\ket{\ov{x}}.
\end{align*}
Thus, the ability to construct an efficient state preparation procedure depends on the constraints of a specific problem. 

For the query containment problem, the search space $C$ is the subset of $\{0,1\}^{n_1\cdot n_2}$ such that for $n_2$ groups of $n_1$ binary variables exactly one out of each group of binary variables assumes the value of 1 and the rest assume the value of 0, i.e.,
\begin{align*}
C\ =\ \left\{\ov{x}\in\myset{0,1}^{n_1\cdot n_2}\mid \sum_{r=1}^{n_1}x_{n_1{\cdot} s +r}=1\;\forall\; s\in\myset{0,\dotsc,n_2{-}1}\right\}.
\end{align*}
In terms of a quantum state, this is realized by an $n_2$-fold tensor product of so-called $W$-states of $n_1$ qubits each:
\begin{align*}
W \ \ = \ \ \frac{1}{\sqrt{n_1}}\left(\ket{10\dotsc 0}+\ket{010\dotsc 0}+\dotsb+\ket{0\dotsc01}\right)
\end{align*}
A suitable state preparation procedure and an appropriate
mixer operator ensure that the QAOA circuit leaves the subspace of feasible solutions invariant; they
are provided by circuits of depth~$\mathcal O(n_1)$~\citep{zander-solving-2024}.

\paragraph{Complexity of QAOA}
\label{app:QAOA complexity}

The main complexity parameters of QAOA are
the number of \emph{layers} of the circuit,
the number of \emph{iterations} of the classical optimizer, and
the number of \emph{shots} performed in each iteration:

The \emph{number of layers} refers to the number of steps between the initial state and the solution state, thus defining the granularity of the discretization. A higher number of layers per circuit is associated with an increasing approximation quality, but at the same time, higher circuit depth and more gates increase the difficulty of a reliable execution on noisy intermediate-scale (NISQ) quantum hardware, which is the kind of quantum hardware that is available nowadays. From the discussion in the previous paragraph it follows that for an arbitrary quadratic polynomial $p(\ov{x})$ of $n$ binary variables, the circuits for QAOA with $\ell$ layers have a favorable scaling of depth $\mathcal O(\ell{\cdot} n)$ for both the constrained and unconstrained version. In general, for a degree $d\leq n/2$ polynomial $p(\ov{x})$, the depth scales polynomially in the number of binary variables $n$ as $\mathcal O(\ell{\cdot} d{\cdot} n^d)$.
Hence,
they can be efficiently executed on a quantum computer. This is, however, not the case when a classical simulator is used to emulate the behavior of a quantum computer: handling quantum states of $n$ qubits requires storing and manipulating $2^n$ complex amplitudes.

The \emph{number of iterations} refers to the maximum allowed number of classical optimizations. In each iteration, the quantum circuit is optimized classically based on its performance in the previous iteration. If the quantum circuit can no longer be optimized further (based on the chosen classical optimizer) or the maximum number of iterations is reached, the algorithm terminates.

The explanation of the \emph{number of shots} requires a short review of the connection between the quantum and the classical part of the algorithm. The quantum part of the algorithm returns an expected value of the objective function with respect to the state represented by the quantum circuit, which is then fed into the classical part. In order to estimate the expected value of the objective function without knowing the full probability distribution of possible results of the quantum circuit, the circuit is executed a number of times and its results are measured, leading to an estimation of the expected value as the mean of the measured results. The number of shots refers to the number of executions that are performed for this estimation.

\paragraph{Limitations of NISQ hardware}
\label{app:QAOA NISQ}

Currently available quantum hardware is referred to as Noisy Intermediate-Scale Quantum (NISQ) hardware. Only a small number of physical qubits (e.g., up to 156 qubits on state-of-the art IBM devices) are available, and computations are affected by limitations such as short coherence times that allow only shallow circuits to be run and noise leading to gate and readout errors. 
In particular, generating highly entangled quantum states across multiple qubits -- a key resource in quantum computation -- remains challenging \cite{zander2025benchmarking}. 
To overcome these limitations, quantum error correction, which combines several physical qubits to build a logical qubit, will be required. To date, there have been encouraging demonstrations of this technology \cite{google2025quantum, putterman2025hardware}, yet significant challenges in hardware and software engineering need to be addressed for quantum computing to mature.

At this point, quantum hardware is accessed mainly through cloud-based services such as IBM Quantum Platform \cite{website-ibm-quantum} or IQM Resonance \cite{website-iqm-resonance}. Hybrid algorithms such as QAOA consist of a feedback loop between quantum and classical computation. This translates to repeated round-trips between the local client and the remote quantum backend. For each iteration, quantum circuits are sent to the remote backend and measurement results are transmitted back to the local client.
In addition to
latency issues, this also results in being re-queued in every iteration. The latter issue has been addressed by IBM with the ``Session'' execution mode that provides exclusive access to the QPU. Moving forward, a tighter integration between quantum and classical high-performance computing will enable hybrid programs to run seamlessly and efficiently. 

Due to the aforementioned restrictions regarding reliability and limited access to quantum hardware, quantum algorithms and applications thereof are mostly studied using simulators that run on classical hardware.
Simulators
are software tools that aim to emulate quantum computations. In general, the costs of simulating quantum computations on classical computers increase exponentially in the number of qubits: For a system of $n$ qubits the quantum state is represented by a $2^n$-dimensional state vector. E.g., for $n=40$ storing the state vector requires approximately 16 terabytes ($2^{40}\cdot 16$ bytes) of memory. In practice, if quantum states are sparse or factorized, state vector simulators such as the Qrisp simulator \cite{seidel2024qrisp} can enable efficient simulations of larger systems of qubits.
Moreover, matrix product state simulators (e.g., provided by Qiskit Aer \cite{qiskit2024}) offer an efficient tool for simulating quantum computations that rely only on a restricted amount of entanglement~\cite{vidal2003efficient}.

\subsection{Simulated annealing \& quantum annealing}
\label{app:SAQA_appendix}
After describing QAOA in the previous subsection, we now present the foundations of two further optimization methods that have been used to solve the QUBOs considered in this work. These methods are referred to as \emph{simulated annealing} and \emph{quantum annealing} (see also Section~\ref{sec:SAQA}).
\ \\ \\
\emph{Simulated annealing: } 	
The \emph{simulated annealing} method (cf., Algorithm~\ref{algo:SimAnn}) starts at some initial state
$\ov{x} \in C$ having the value $c\deff p\left(\ov{x}\right)$. From there, it iterates through $C$ to find a minimizer $\ov{x}_{\min} \in C$ or an approximation of $\ov{x}_{\min}$. In each iteration, a random neighbor $\ov{x}_\text{new}$ of $\ov{x}$ is chosen
(here, ``neighbor'' means that one bit is flipped). If $c_\text{new} \deff p\left(\ov{x}_\text{new}\right)$ is smaller than $c$, it is considered as the current solution. Otherwise, a deterioration is accepted if a probability denoted as $\textit{AcceptProb}$ exceeds a randomly chosen value. To compute $\textit{AcceptProb}$, a parameter $T$ as well as the difference between the old and new value $\left(c_\text{new}-c\right)$ are required:
\begin{equation}
	\label{eq:Boltzmannprob}
	\textit{AcceptProb}\left(T,c_\text{new}-c\right)  \ \ = \ \ \min \left(1,\exp\left(-\frac{c_\text{new}-c}{k_b \cdot T}\right)\right).
\end{equation}
This expression is adapted from statistical physics. It is known as \emph{Boltzmann distribution}, which yields the probability of measuring an energy $c_\text{new}$ given a reference energy $c$ and a temperature $T$. $k_b$ is denoted as \emph{Boltzmann constant}. $\textit{AcceptProb}$ implies that accepting $\ov{x}_\text{new}$ is less likely if $T$ is low and the difference $c_\text{new}-c$ is large. It remains to clarify how the parameter $T$ or the ``temperature'' $T$ is determined. Here, a monotonically decreasing function $\textit{coolingSchedule}$ is used, yielding for each iteration a value of $T$. There are several ways to decrease the temperature $T$. In this work, we vary $T$ geometrically between a high starting temperature $T_\text{high}$ and a low temperature $T_\text{low}$. The algorithm terminates if $T$ reaches the threshold $T_\text{low}$ or falls below $T_\text{low}$. Another termination criterion is that the number of iterations exceeds $N_\text{sweeps}$.

\begin{algorithm}[htbp]
\caption{\label{alg:simann} Pseudo code for the simulated annealing method.}\label{algo:SimAnn}	
\begin{algorithmic}[1]
   	\State \textbf{INPUT:} Function $p$, initial state $\ov{x}$, temperatures $T_\text{high}$ and $T_\text{low}$
   	\State \textbf{INPUT:} Number of sweeps $N_\text{sweeps}$
   	\State \textbf{OUTPUT:} Approximation of $\ov{x}_{\min}$
   	\State $\textit{iter} = 1$
   	\State $\textit{stop} = \textit{false}$
   	\While{!stop}
   	   \State $T=\textit{coolingSchedule}(T_\text{high},T_\text{low},\textit{iter})$
   	   \State $\ov{x}_\text{new} = \textit{neighborOf}\left(\ov{x}\right)$
   	   \State $c_\text{new} = p\left(\ov{x}_\text{new}\right)$
   	   \If{$c_\text{new}<c$}
   	      \State $\ov{x}=\ov{x}_\text{new}$,\;$c=c_\text{new}$
   	   \ElsIf{$\textit{AcceptProb}\left(T,c_\text{new}-c\right)>\textit{random}\left(0,1\right)$}
   	      \State $\ov{x}=\ov{x}_\text{new}$,\;$c=c_\text{new}$
   	   \EndIf
   	   \State $\textit{iter} = \textit{iter}+1$ 
   	   \If{$T \leq T_\text{low}$ or $\textit{iter}>N_\text{sweeps}$}
   	      \State $\textit{stop} = \textit{true}$
   	   \EndIf 
   	\EndWhile
   	\State $\ov{x}_{\min} = \ov{x}$
   	\ \\
   	\Return $x_{\min}$
\end{algorithmic}			
\end{algorithm}
\ \\
Contrary to \emph{simulated annealing}, the \emph{quantum annealing} method is not inspired by laws from statistical physics, but rather by phenomena occurring in quantum mechanics. One of them is the observation that a quantum mechanical state attains another state by quantum fluctuations or tunneling through thin energy barriers. Emulating this behavior using a heuristic optimization method, we assign to each binary component $x_i \in \left\{0,1\right\}$ of a state $\ov{x} = \left(x_1,\ldots,x_n\right) \in C$ a qubit $\mathbf{q}_i$. This results in a register of $n$ qubits:
$R = \mathbf{q}_1\ldots \mathbf{q}_n$.
The quantum state for qubit $\mathbf{q}_i$ is denoted by $\ket{\psi_i}$, which is a superposition of the following states: $\left\{\ket{0},\ket{1}\right\}$. The quantum state of $R$ is given by $\ket{\Psi}$. Next, we modify $\ket{\Psi}$ such that the objective function $p$ in \eqref{app_eq:QUBO} can be represented by a quantum state. Therefore, we consider the following Hamiltonian:
$$
H_{p,\textit{an}} \ \ = \ \ \sum_{i=1}^n h_i \sigma_i^z \ + \ \sum_{i<j} J_{ij} \sigma_i^z\sigma_j^z,
$$
where $\sigma_i^z$ is the Pauli-$z$ operator applied to qubit $\mathbf{q}_i$, and $h_i$ and $J_{ij}$ are real numbers. Applying $H_{p,\textit{an}}$ to $\ket{\Psi}$ yields the following state:
$$
H_{p,an} \ket{\Psi} \ \ = \ \ \left(\sum_{i=1}^n h_i \sigma_i^z + \sum_{i<j} J_{ij} \sigma_i^z\sigma_j^z\right) \ket{\Psi} = E_{p,\textit{an}} \ket{\Psi} 
$$
and the corresponding energy:
$$
E_{p,\textit{an}} = \sum_{i=1}^n h_i \left(-1\right)^{\psi_i} + \sum_{i<j} J_{ij} \left(-1\right)^{\psi_i}\left(-1\right)^{\psi_j} = \sum_{i=1}^n h_i s_i + \sum_{i<j} J_{ij} s_is_j,
$$
with $\;s_i \in \left\{-1,1\right\}$ for $i\in\myset{1,\ldots,n}$. Using the transformation $s_i = 1-2 x_i$, we can reformulate $E_{p,\textit{an}}$ as follows:
\[
	E_{p,an} \ =  \ \sum_{i\leq j} q_{ij} x_ix_j + K \ \ \text{ with } \ \ K \ =  \ \sum_{i<j} J_{ij} + \sum_{i=1}^n h_i \ \ \text{ and }
\]
\begin{align}
	\label{eq:weights}
	q_{ij} &= \begin{cases} 4 J_{ij}, & \text{ if } i < j, \\
		-2 h_i - 2 \displaystyle{\sum_{k=1}^{i-1}}\; J_{ki} - 2 \displaystyle{\sum_{k=i+1}^n}\; J_{ik} , & \text{ if } i = j. \\
	\end{cases}
\end{align}
The parameters $q_{ij}$ are the entries of the matrix $Q$ defining the objective function $p$ in \eqref{app_eq:objective}.
Apparently, this expression is similar to the objective function of the QUBO \eqref{app_eq:QUBO}. Thus, computing the energy of a quantum state produced by $H_{p,\textit{an}}$ yields the minimal cost for a QUBO \eqref{app_eq:QUBO}. The quantum annealer transforms an initial quantum state that can be prepared in an easy way by an initial Hamiltonian $H_I$ into a quantum state governed
by the problem Hamiltonian $H_{p,\textit{an}}$. Thereby, the \emph{adiabatic theorem} is considered. The adiabatic theorem states that if we consider the quantum state $Q_I$ in its ground state (energy minimizing state) and if we transfer this state to the final quantum state $Q_{p,\textit{an}}$ sufficiently slowly, the process will end up in the ground state of $Q_{p,\textit{an}}$
with high probability \cite[Section 2.2.2]{mcgeoch2022adiabatic}. Thus, after high probability measurement, we obtain the ground state and its energy $E_{p,\textit{an}}$. This is made possible by quantum fluctuations or tunneling through energy barriers in the energy landscape of the final quantum state. Now, the crucial step consists in preparing the state $H_{p,\textit{an}} \ket{\Psi}$ and measuring its energy. Therefore, we consider an initial state produced by the Hamiltonian
$$
H_I \ \ = \ \ \sum_{i=1}^n \frac{1-\sigma_i^x}{2},
$$
where $\sigma_i^x$ is the Pauli-$x$ operator for qubit $\mathbf{q}_i$. Applying $H_I$ yields a superposition of the different quantum states in $R$ such that in the ground state of $H_I$ (energy minimizing state) every quantum state in $R$ is equally likely to be observed. Combining $H_I$ and $H_{p,\textit{an}}$, we obtain a Hamiltonian depending on a variable $s$ parameterized by $t$:
$$
H\left(s\left(t\right)\right) \ \ = \ \ A\left(s\left(t\right)\right) H_I \ + \ B\left(s\left(t\right)\right)H_{p,\textit{an}}.
$$
The function $s:\left[0,t_f\right] \rightarrow \left[0,1\right]$ is known as \emph{annealing path} and has to be specified by the user. In this work, we consider a linear function:
$$ \textstyle
s\left(t\right) \ \ = \ \ \frac{t}{t_f},
$$
where $t_f$ is denoted as \emph{annealing time}. $A$ and $B$ are called \emph{envelope functions}, having the following properties:
$$
A\left(0\right) \gg B\left(0\right) \quad\text{ and }\quad A\left(1\right) \ll B\left(1\right),
$$
which means that for $t=0$ the Hamiltonian is governed by $H_I$, while at the end of the annealing path for $t=t_f$ the Hamiltonian is governed by $H_{p,\textit{an}}$. In addition to that, $A$ decreases monotonically in $s$, while $B$ increases monotonically in $s$. Thus, as $t$ increases, $H\left(s\left(t\right)\right)$ approaches $H_{p,\textit{an}}$. Finally, the qubit values $\ket{\psi_i}$ are measured. Using 
\begin{equation}
	\label{eq:psix}
	\left(-1\right)^{\psi_i} \ \ = \ \ s_i \ \ = \ \ 2x_i-1,
\end{equation}
a minimizer $\ov{x}_{\min}$ of \eqref{app_eq:QUBO} and a minimal value of $p$ can be determined. 
The physical interpretation of this method reads as follows: Applying the initial Hamiltonian introduces energy to the annealing process in the form of quantum fluctuations.

Adapting $A$ and $B$ allows us to change between different minima of the energy landscape by tunneling through energy barriers that separate them. If the energy barriers are thin, the probability of tunneling increases. In contrast to simulated annealing, the probability governing the change between local minima is not affected by the difference between the minima.
Exploiting the tunneling phenomenon, we can move faster and further across the energy landscape compared to the simulated annealing method, since within a single iteration of the simulated annealing method, one has to ``climb'' over energy barriers according to the Boltzmann distribution instead of tunneling through the energy barriers. During the annealing process, the energy landscape is more and more adapted to the QUBO considered. If $A$ and $B$ are modified sufficiently slowly, the whole process will finish in an energy minimizing ground state with high probability due to the adiabatic theorem. Since $A$ and $B$ change gradually, this method can be assigned to the class of annealing based optimization methods. The whole method is summarized in Algorithm~\ref{alg:quantann}.
\begin{algorithm}[htb]
	\caption{\label{alg:quantann} Pseudo code for the quantum annealing method.}	
	\begin{algorithmic}[1]
		\State \textbf{INPUT:} Parameters $q_{ij}$, annealing time $t_f$
		\State \textbf{INPUT:} Number of sweeps $N_\text{sweeps}$
		\State \textbf{OUTPUT:} Approximation of $\ov{x}_{\min}$
		\State $\textit{iter}=0$, $\textit{stop}=\textit{false}$
		\State $\Delta t = \frac{t_f}{N_\text{sweeps}}$
		\State Compute the weights $h_i$ and $J_{ij}$ using \eqref{eq:weights} to obtain $H_{p,\textit{an}}$
		\State Apply $H_I$ scaled by $A\left(0\right)$ to obtain a quantum state $\ket{\Psi}$
		\For{$\textit{iter}<N_\text{sweeps}$}
		\State $t= \textit{iter} \cdot \Delta t$, $s\left(t\right)=\frac{t}{t_f}$
		\State Determine $A\left( s\left(t\right)\right)$ and $B\left( s\left(t\right)\right)$
		\State Determine $H\left(s\left(t\right)\right)$ and apply it to $\ket{\Psi}$
		\State $\textit{iter} = \textit{iter}+1$ 
		\EndFor
		\State Measure $\ket{\Psi}$ to obtain the quantum states $\Psi_i$ of the qubits $\mathbf{q}_i$
		\State Use Equation \eqref{eq:psix} to compute the binary values $x_i$
		\State $\ov{x}_{\min} = \ov{x} = \left(x_1,\ldots,x_n\right)$
		\ \\
		\Return $\ov{x}_{\min}$		
	\end{algorithmic}
\end{algorithm}

%% file: appendix_theory.tex
\section{Details omitted in Section~\ref{sec:theory}}\label{appendix:theory}

The following can be easily verified:
\begin{fact}\label{fact:unique_new}
	For all $\ov{x}\in\{0,1\}^{B_2\times A_1}$ we have:
	\begin{enumerate}[(a)]
		\item $\punique(\ov{x})\in \NN$, and \ \
		\item $\punique(\ov{x}) = 0$ $\iff$ for every $i\in B_2$ we have $\sum_{j\in A_1} x_{i,j}\leq1$ (i.e.,
                  there is at most one $j\in A_1$ with $x_{i,j}=1$).
	\end{enumerate}	
\end{fact}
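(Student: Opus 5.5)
The plan is to argue directly from the definition of $\punique$ as a sum of products of binary variables. For part (a), observe that every summand $x_{i,j}\cdot x_{i,j'}$ is a product of two values in $\myset{0,1}$. Each summand is therefore itself in $\myset{0,1}$. Since $\punique(\ov{x})$ is a finite sum of such terms, ranging over $i\in B_2$ and unordered pairs $j<j'$ in $A_1$, it is a non-negative integer, i.e.\ it lies in $\NN$.

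For part (b), I will use that a finite sum of non-negative integers is $0$ iff every summand is $0$. So $\punique(\ov{x})=0$ iff $x_{i,j}\cdot x_{i,j'}=0$ for all $i\in B_2$ and all $j<j'$ in $A_1$. Because the values are bits, $x_{i,j}\cdot x_{i,j'}=1$ iff $x_{i,j}=x_{i,j'}=1$. Hence the condition says that in no row $i$ are there two distinct columns both carrying a 1. Here I use that $<$ is a linear order on $A_1$, so any two distinct columns $j\neq j'$ appear exactly once as an ordered pair with the smaller one first.

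This is equivalent to requiring that each row contains at most one 1, i.e.\ $\sum_{j\in A_1}x_{i,j}\leq 1$ for every $i\in B_2$. I would prove the two directions separately. If some row had $\sum_j x_{i,j}\geq 2$, pick two distinct columns with value 1; ordering them gives a summand equal to 1, so $\punique(\ov{x})>0$. Conversely, if every row has at most one 1, then every product of two distinct entries of that row vanishes, so $\punique(\ov{x})=0$.

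There is no real obstacle here. The only point needing care is the role of the linear order $<$: it guarantees that each unordered pair of distinct columns is counted exactly once. Correctness would not depend on this count, but it does ensure the summands range over all distinct pairs in every row.
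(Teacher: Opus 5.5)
Your proof is correct and is the direct check from the definition of $\punique$ that the paper intends. The paper gives no proof (it says the fact ``can be easily verified''), and the reasoning it sketches in the main text is yours: $\punique$ is a finite sum of products of bits, each product lies in $\{0,1\}$, and the sum vanishes exactly when no row contains two distinct columns carrying a $1$.
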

Clearly, when given an  $\ov{x}\in\{0,1\}^{B_2\times A_1}$ with $\punique(\ov{x})\geq 1$, then~$\ov{x}$ does \emph{not} represent any function $h\colon B_2\to A_1$.

\begin{proof}[Proof of Lemma~\ref{lemma:p_v_new}] \ \\
  (a) is obvious.
  For (b) consider an arbitrary assignment $\ov{x}\in\myset{0,1}^{B_2\times A_1}$ with $\punique(\ov{x})=0$. From
  Fact~\ref{fact:unique_new} we know that for every $i\in B_2$ there is at most one $j\in A_1$ with $x_{i,j}=1$.
  Consider the extended bit-matrix $\ovxext$ associated with $\ov{x}$. 

  First of all, for contradiction assume that $p_{R,u}(\ov{x})\not\in\myset{-1,0}$. Then there must exist two distinct tuples $w=(w_1,\ldots,w_r)$ and $w'=(w'_1,\ldots,w'_r)$ in $\TableauOne(R)$ such that
  \[
    \prod_{k=1}^{r}\xext{u_k,w_k}=1
    \quad\text{and}\quad
    \prod_{k=1}^{r}\xext{u_k,w'_k}=1.
  \]  
  Since $w\neq w'$, there is a $k\leq r$ such that $w_k\neq w'_k$. In particular, $\xext{u_k,w_k}=1=\xext{u_k,w'_k}$. By definition of $\ovxext$ this implies that $u_k\in B_2$ and $x_{u_k,w_k}=1=x_{u_k,w'_k}$. But this means that for $i\deff u_k\in B_2$ there are two distinct values $j\deff w_k$ and $j'\deff w'_k$ with $x_{i,j}=1$ and $x_{i,j'}=1$, contradicting Fact~\ref{fact:unique_new}. This proves that $p_{R,u}(\ov{x})\in\{-1,0\}$.

  Concerning the second statement of (b), we first prove the direction ``$\Longleftarrow$''. Let us assume that the condition to the right of the symbol ``$\Longleftrightarrow$'' is satisfied. Then, for every $k\leq r$ we have $\xext{u_k,z_k}=1$ and $\xext{u_k,z}=0$ for all $z\in A_1\setminus\myset{z_k}$. Hence, $\prod_{k=1}^{r}\xext{u_k,z_k}=1$ and for every $(w_1,\ldots,w_r)\in \TableauOne(R)$ with $(w_1,\ldots,w_r)\neq (z_1,\ldots,z_r)$ we have $\prod_{k=1}^{r}\xext{u_k,w_k}=0$. Thus, $p_{R,u}(\ov{x})=-1$. This completes the proof of the direction ``$\Longleftarrow$''.

  For proving the direction ``$\Longrightarrow$'', let us assume that $p_{R,u}(\ov{x})=-1$.
  Thus, there is a tuple $(z_1,\ldots,z_r)\in \TableauOne(R)$ such that $\prod_{k=1}^r \xext{u_k,z_k}=1$, i.e., $\xext{u_k,z_k}=1$ for every $k\leq r$. All that remains to be done is to show that for every $k\leq r$ we we have
  $\sum_{z\in A_2} \xext{u_k,z}=1$. Consider an arbitrary $k\leq r$. If $u_k\in B_2$ then $\xext{u_k,z}=x_{u_k,z}$ for all $z\in A_1$, and hence from $\punique(\ov{x})=0$ and Fact~\ref{fact:unique_new}(b) and $x_{u_k,z_k}=1$ we obtain that
  $\sum_{z\in A_1}\xext{u_k,z}=1$.
  On the other hand, if $u_k\in A_2\setminus B_2$, then by definition of the extended bit-matrix $\ovxext$ we obtain that   $\sum_{z\in A_1}\xext{u_k,z}\leq 1$. Combining this with the fact that $\xext{u_k,z_k}=1$ and $z_k\in A_1$ yields
  $\sum_{z\in A_1}\xext{u_k,z}= 1$.
This completes the proof of Lemma~\ref{lemma:p_v_new}.
\end{proof}  

\begin{proof}[Proof of Lemma~\ref{lemma:p_ac_new}] \ \\
  (a) immediately follows from  Lemma~\ref{lemma:p_v_new}(a) and the definition of $\pac(\ov{x})$.
  \\
  For (b) consider an arbitrary $\ov{x}\in\myset{0,1}^{B_2\times A_1}$ with $\punique(\ov{x})=0$.
  From Lemma~\ref{lemma:p_v_new}(b) we obtain that $p_{R,u}(\ov{x})\in\myset{-1,0}$ for all $R\in\dbschema$ and  $u\in \TableauTwo(R)$. Thus, according to the definition of $\pac(\ov{x})$ we obtain:
  $\pac(\ov{x})\geq -\sum_{R\in\dbschema} \setsize{\TableauTwo(R)} = -\setsize{\TableauTwo}$.
  This proves the first statement of (b).

  Concerning the second statement of (b), we first prove the direction ``$\Longrightarrow$''. By assumption,
  $\pac(\ov{x})=-\setsize{\TableauTwo}=-\sum_{R\in\dbschema}\setsize{\TableauTwo(R)}$.
  From the definition of $\pac(\ov{x})$ and the first statement of Lemma~\ref{lemma:p_v_new}(b) we obtain that
  $p_{R,u}(\ov{x})=-1$ for all $R\in\dbschema$ and all $u\in\TableauTwo(R)$. Applying the second statement of Lemma~\ref{lemma:p_v_new}(b) to all $R\in\dbschema$ and all $u\in\TableauTwo(R)$, we obtain that 
  for all $i\in\Vars(\TableauTwo)\cup\Cons(\TableauTwo)$ we have $\sum_{j\in A_1}\xext{i,j}=1$.
  Note that $A_2=\Vars(\TableauTwo)\cup\Cons(\TableauTwo)\cup\Cons(\AnswerTupleTwo)$. Furthermore, since we did not already stop during the preparation step, $\Cons(\AnswerTupleTwo)\subseteq\Cons(\AnswerTupleOne)\subseteq A_1$, and hence according to
the definition of the extended bit-matrix $\ovxext$ we have $\sum_{j\in A_1}\xext{i,j}= 1$ for every $i\in\Cons(\AnswerTupleTwo)$.
  In summary, this yields that $\ovxext$ represents the mapping $h\colon A_2\to A_1$ defined for every $i\in A_2$ by letting $h(i)=j_i$ where $j_i$ is the uniquely determined $j\in A_1$ with $\xext{i,j}=1$.
  By applying the second statement of  Lemma~\ref{lemma:p_v_new}(b) to all $R\in\dbschema$ and all $u\in\TableauTwo(R)$, we furthermore obtain that $(h(u_1),\ldots,h(u_r))\in \TableauOne(R)$ for all $R\in\dbschema$ and all $u=(u_1,\ldots,u_r)\in\TableauTwo(R)$.
  This proves that condition \eqref{itemThree:def:homom} of Definition~\ref{def:HomomorphismForQueries} is satisfied. As conditions \eqref{itemOne:def:homom} and \eqref{itemTwo:def:homom} are satisfied by the definition of $\ovxext$ and the fact that we did not stop already during the preparation step, the mapping $h$ is a homomorphism from $\queryTwo$ to $\queryOne$. This completes the proof of the direction
  ``$\Longrightarrow$''.

  Let us now turn to the direction ``$\Longleftarrow$''. We assume that the
  condition to the right of the symbol ``$\Longleftrightarrow$'' is satisfied. Then, for every $i\in A_2$ there is a $j_i\in A_1$ with $\xext{i,j_i}=1$ and $\xext{i,j}=0$ for all $j\in A_1\setminus\myset{j_i}$; and the mapping $h\colon A_2\to A_1$ defined via
  $h(i)=j_i$ for all $i\in A_2$ is a homomorphism from $\queryTwo$ to $\queryOne$. This implies that for every $R\in\dbschema$ and every $u\in\TableauTwo(R)$ the condition to the right of the symbol ``$\Longleftrightarrow$'' of Lemma~\ref{lemma:p_v_new}(b) is satisfied. Hence, from Lemma~\ref{lemma:p_v_new}(b) we obtain that
  $p_{R,u}(\ov{x})=-1$ for every $R\in\dbschema$ and every $u\in\TableauTwo(R)$. According to the definition of
  $\pac(\ov{x})$, we hence obtain that $\pac(\ov{x})=\sum_{R\in\dbschema}\sum_{u\in\TableauTwo(R)}-1 = -\setsize{\TableauTwo}$. This completes the proof of Lemma~\ref{lemma:p_ac_new}.
\end{proof}

\begin{proof}[Proof of Theorem~\ref{thm:correctnessOfGenericChoice}] \ \\
  Consider an arbitrary bit-matrix $\ov{x}\in\Cgen$.
  From Fact~\ref{fact:unique_new}(a) we know that $\punique(\ov{x})\in\NN$.
  Using Lemma~\ref{lemma:p_ac_new}(a) we obtain that
  $-\setsize{\TableauOne}{\cdot}\setsize{\TableauTwo}
  \leq
  -\sum_{R\in\dbschema}\setsize{\TableauOne(R)}{\cdot}\setsize{\TableauTwo(R)}
  \leq
  \pac(\ov{x})
  \leq 0$.

  If $\punique(\ov{x})\geq 1$, then
  $\pgen(\ov{x})\geq \pac(\ov{x})+(\setsize{\TableauOne}{\cdot}\setsize{\TableauTwo}+1) \geq 1$.
  If $\punique(\ov{x})=0$, then $\pgen(\ov{x})=\pac(\ov{x})$, and from Lemma~\ref{lemma:p_ac_new}(b) we obtain that
  $\dgen\leq \pac(\ov{x})\leq 0$.
  This, in particular, proves the theorem's first statement.

  Let us now turn to the direction ``$\Longleftarrow$'' of the theorem's second statement. I.e., we assume that $\queryOne\querycont\queryTwo$, and our aim is to show that $\min_{\ov{x}\in \Cgen}\pgen(\ov{x})=\dgen$. Above, we have already shown that $\min_{\ov{x}\in \Cgen}\pgen(\ov{x})\geq \dgen$.
  From Theorem~\ref{thm:ChandraMerlin} and $\queryOne\querycont\queryTwo$ we obtain that there exists a homomorphism $h$ from $\queryTwo$ to $\queryOne$. Let $\ov{x}\in\myset{0,1}^{B_2\times A_1}=\Cgen$ be the bit-matrix associated with the mapping $h$, and let $\ovxext$ be the corresponding extended bit-matrix. By definition we have
  $\punique(\ov{x})=0$, and from Lemma~\ref{lemma:p_ac_new}(b) we obtain that $\pac(\ov{x})=\dgen$. Considering the definition of the polynomial $\pgen$ and using the fact that $\punique(\ov{x})=0$, we obtain that $\pgen(\ov{x})=\pac(\ov{x})=\dgen$. This proves that $\min_{\ov{x}\in \Cgen}\pgen(\ov{x})=\dgen$, i.e., it proves the  direction ``$\Longleftarrow$'' of the theorem's second statement.

  Let us now turn to the theorem's third statement and to the  direction ``$\Longrightarrow$'' of the theorem's second statement. We have already proved the theorem's first statement, and thus $\min_{\ov{x}\in \Cgen}\pgen(\ov{x})\geq \dgen$. Consider an $\ov{x}\in \Cgen$ with $\pgen(\ov{x})=\dgen$.
  Since $\dgen\leq 0$, we know that $\punique(\ov{x})=0$ (because otherwise we would have $\punique(\ov{x})\geq 1$ and $\pgen(\ov{x})\geq 1$).
Hence, $\pgen(\ov{x})=\pac(\ov{x})$.
  Using that $\dgen=-\setsize{\TableauTwo}$, we obtain from Lemma~\ref{lemma:p_ac_new}(b) that the extended bit-matrix $\ovxext$ associated with $\ov{x}$ represents a homomorphism from $\queryTwo$ to $\queryOne$.
  From Theorem~\ref{thm:ChandraMerlin} we obtain that $\queryOne\querycont\queryTwo$.
  This completes the proof of Theorem~\ref{thm:correctnessOfGenericChoice}.
\end{proof}

\subsection*{Details on how it can happen that polynomial~$p$ does not contain any binary variable:}

If the polynomial $p$ actually does not contain any binary variable,
then $p(\ov{x})$ is the constant function $\ell$, for some integer
$\ell$ (for short: $p(\ov{x})\equiv \ell$).
This happens
\begin{enumerate}[$\bullet$]
  \item
for $\pcgen$, if for all $R\in\dbschema$, all
$u=(u_1,\ldots,u_r)\in\TableauTwo(R)$, and all $w=(w_1,\ldots,w_r)\in
\TableauOne(R)$ there either is a $k\leq r$ with $u_k\in A_2\setminus
B_2$ and $\xext{u_k,w_k}=0$, or $u_k\in A_2\setminus B_2$ for all $k\leq r$,
\item
  for $\pcsimpl$, if for all $R\in\dbschema$, all
$u=(u_1,\ldots,u_r)\in\TableauTwo(R)$, and all $w=(w_1,\ldots,w_r)\in
\TableauOne(R)$ there either is a $k\leq r$ with $u_k\in A_2\setminus
\simplBtwo$ and $\xsim{u_k,w_k}=0$, or $u_k\in A_2\setminus \simplBtwo$ for all $k\leq r$,
\item  
  for $\pgen$, if $B_2=\emptyset$, and
\item
  for $\psimpl$, if $\simplBtwo=\emptyset$.
\end{enumerate}

%% file: appendix_system.tex
\section{Details omitted in Section~\ref{sec:system}}\label{appendix:system}

Figure~\ref{fig:architecture} shows the backends of our system architecture required for the quantum stage in Figure~\ref{fig:workflow}. This includes software libraries, simulator software running on classical hardware, as well as quantum hardware accessed as cloud services. 

\begin{figure}[thb]
    \centering

    \begin{subfigure}[t]{0.48\columnwidth}
        \centering
        \includegraphics[width=\linewidth]{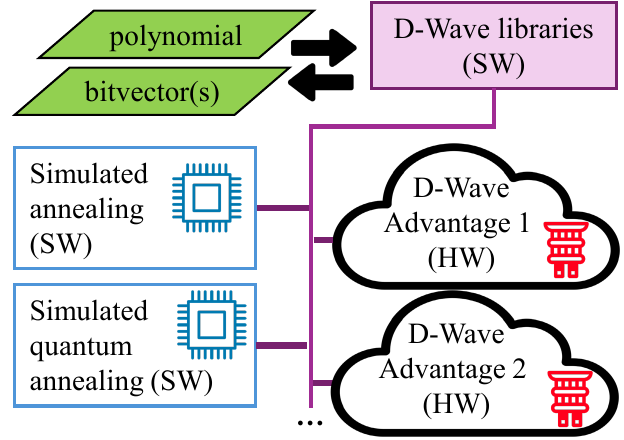}
        \caption{Annealing}
        \label{fig:solvers_annealing}
    \end{subfigure}
    \hfill
    \begin{subfigure}[t]{0.48\columnwidth}
        \centering
        \includegraphics[width=\linewidth]{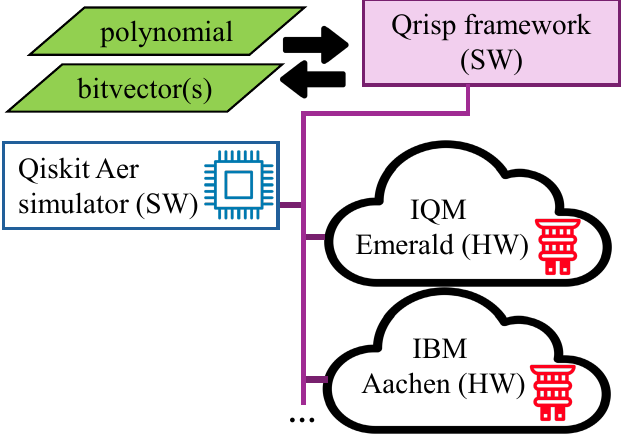}
        \caption{QAOA}
        \label{fig:solvers_qaoa}
    \end{subfigure}

\caption{Architecture for (a)~annealing and (b)~QAOA. Software libraries take polynomials and return bitvectors. Simulators (``SW''/blue icons) run on classical hardware,  quantum hardware (``HW''/red icons) accessed as cloud services.}
\label{fig:architecture}
\end{figure}

%% file: appendix_generated_families.tex
\section{Details omitted in Section~\ref{subsec:ExpSetup}}
\label{app:families}

Figure~\ref{fig:archetypes} visualizes building blocks for our generated query families.

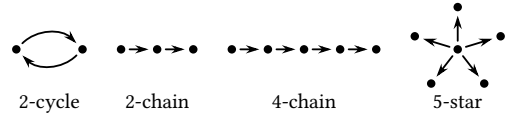
\begin{figure}[th]
	\centering
	\tikzset{
		graphnode/.style={circle, fill=black, draw=none, inner sep=0pt, minimum size=3.2pt},
		graphedge/.style={line width=0.7pt, -{Stealth[length=1.8mm,width=1.1mm]},
			shorten <= 0.5mm, shorten >= 0.5mm, line cap=round}
	}
	\begin{tabular}{@{}c@{\hspace{4mm}}c@{\hspace{4mm}}c@{\hspace{4mm}}c@{}}
		\begin{tikzpicture}[scale=0.8, baseline=(ref.base)]
			\coordinate (ref) at (0,0);
			\node[graphnode] (a) at (-0.55,0) {};
			\node[graphnode] (b) at ( 0.55,0) {};
			\draw[graphedge] (a) to[out=45, in=135, looseness=1.05] (b);
			\draw[graphedge] (b) to[out=-135, in=-45, looseness=1.05] (a);
		\end{tikzpicture}
		&
		\begin{tikzpicture}[scale=0.8, baseline=(ref.base)]
			\coordinate (ref) at (0,0);
			\node[graphnode] (u) at (-0.60,0) {};
			\node[graphnode] (v) at ( 0.00,0) {};
			\node[graphnode] (w) at ( 0.60,0) {};
			\draw[graphedge] (u) -- (v);
			\draw[graphedge] (v) -- (w);
		\end{tikzpicture}
		&
		\begin{tikzpicture}[scale=0.8, baseline=(ref.base)]
			\coordinate (ref) at (0,0);
			\foreach \i/\x in {1/-1.20,2/-0.60,3/0,4/0.60,5/1.20}{
				\node[graphnode] (c\i) at (\x,0) {};
			}
			\foreach \u/\v in {1/2,2/3,3/4,4/5}{
				\draw[graphedge] (c\u) -- (c\v);
			}
		\end{tikzpicture}
		&
		\begin{tikzpicture}[xscale=0.8, yscale=0.75, baseline=(ref.base)]
			\coordinate (ref) at (0,0);
			\node[graphnode] (s0) at (0,0) {};
			\foreach \ang in {90,162,234,306,18}{
				\node[graphnode] (s\ang) at (\ang:0.75) {};
				\draw[graphedge] (s0) -- (s\ang);
			}
		\end{tikzpicture}\\[2pt]
		\small 2-cycle & \small 2-chain & \small 4-chain & \small 5-star
	\end{tabular}
	\caption{Building blocks for parameterized QC problems.}

	\label{fig:archetypes}
\end{figure}

Both problem families have a configurable parameter~$i$ to scale up the problem size by adapting the size of the second query. In particular, we defined the problem families as follows.

\begin{definition}[2-cycle to $i$-chain]\label{def:TwoCyToiCh}
	\emph{2-cycle to $i$-chain} is defined as the family of query pairs
        $(q_{\textit{2cy}},q_{\textit{i-ch}})$ with $q_{\textit{2cy}}$
        being defined as in
        Example~\ref{example:preparation:cycle_chain} except for
renaming the variables from $\vvarz,\vvarz'$ to $\vvarz_0,\vvarz_1$, i.e.,
        choosing $\Tableau_{\textit{2cy}}(E)=\myset{(\vvarz_0,\vvarz_1),\allowbreak (\vvarz_1,\vvarz_0)}$ (this change in notation will come in handy later) and $q_{\textit{i-ch}}\deff(\Tableau_{\textit{i-ch}},\AnswerTuple_{\textit{i-ch}})$ being defined via $\AnswerTuple_{\textit{i-ch}}=\emptytuple$ and
	$\Tableau_{\textit{i-ch}}(E)=\setc{(\vvary_{k-1},\vvary_{k})}{1\leq
          k\leq i}$, for $i\in\NNpos$.
\end{definition}

\begin{definition}[2-chain to $i$-star]
	\emph{2-chain to $i$-star} is defined as the family of query
        pairs $(q_{\textit{2ch}},q_{\textit{i-st}})$ with
        $q_{\textit{2ch}}$ being defined as in
Example~\ref{example:preparation:cycle_chain}
        and $q_{\textit{i-st}}\deff(\Tableau_{\textit{i-st}},\AnswerTuple_{\textit{i-st}})$ being defined via $\AnswerTuple_{\textit{i-st}}=\emptytuple$ and
	$\Tableau_{\textit{i-st}}(E)=\setc{(\vvary_0,\vvary_{k})}{1\leq
          k\leq i}$, for $i\in\NNpos$.
\end{definition}

The number of nodes in the first query determines the stepsize by which we can scale up the number of binary variables. By choosing a query with two nodes (2-cycle) and a query with 3 nodes (2-chain) as the first query, we ensure that they can be scaled up evenly and in small steps, without requiring too many additional binary variables. This is particularly important for experiments on real quantum hardware, which is still very limited in its physical resources.

As mentioned earlier, the queries were carefully crafted to not be simplified in either the preparation or the simplification step. The former is achieved by treating our queries as Boolean queries. Otherwise, for example, fixing the endpoints of the regarded graph queries would lead to trivial query pairs. The latter is achieved by having more than one edge in each first query to avoid any trivial mappings.

For graph queries, certifying the query containment is equivalent to finding a homomorphism between the two represented graphs. In each generated problem instance, two distinct homomorphisms can be found, serving as a ground truth for our examples. In the following, we describe the generation of the ground truth and the problem formula for each family.

\paragraph{2-cycle to $i$-chain}
We can find a homomorphism from an $i$-chain to a 2-cycle by
\enquote{wrapping} the chain around the 2-cycle. Depending on where we
start this (left or right node of the 2-cycle), we obtain two
different possibilities. This provides two distinct homomorphisms
that solve the containment problem for every $i\in\NN_{\geq 1}$. More
formally, the first homomorphism $h$ is given via
$h(\vvary_k)=\vvarz_{0}$ if $k$ is even and $h(\vvary_k)=\vvarz_{1}$
if $k$ is odd (for $0\leq k\leq i$). The second homomorphism $h'$ is
defined the other way round, via $h'(\vvary_k)=\vvarz_{1}$ if $k$ is
even and $h'(\vvary_k)=\vvarz_{0}$ if $k$ is odd (for
$0\leq k\leq i$). Each of these two individual solutions can be specified
by the mapping of the first vertex $\vvary_0$ of the $i$-chain to
either $\vvarz_0$ or $\vvarz_1$ of the 2-cycle, because this uniquely
determines the rest of the homomorphism;
mapping the other vertices
follows accordingly. See Figure~\ref{fig:chain-cycle-graph} for a
visualization, where the two mappings are indicated in blue and
yellow. 
Since we can find a homomorphism from the $i$-chain to the 2-cycle, the latter is contained in the former (but not vice versa).
The ground truth is given by these two homomorphisms. Their 
bit-matrix representations have $i{+}1$ rows $\vvary_0,\ldots,\vvary_i$ and $2$
columns $\vvarz_0,\vvarz_1$;
for any two consecutive rows, one is of the form 10 and the other is
of the form 01; and the bit-matrix for $h$ starts with row 10, while the
bit-matrix for $h'$ starts with row 01.
Each of these two bit-matrices is
represented by the
bitvector obtained by concatenating the rows of the matrix, i.e., $h$
is represented by the
bitvector $10011001\cdots$,  and $h'$ is represented by the bitvector $01100110\cdots$.

However, finding such a homomorphism algorithmically via
minimizing the associated polynomial $p(\ov{x})$
is not trivial, since the polynomial contains $2(i{+}1)$ binary
variables ($i{+}1$ nodes from the $i$-chain each can be mapped to 2
possible nodes in the 2-cycle), leading to a search space size of
$2^{2(i+1)}$ in the unconstrained setting and $2^{i+1}$ in the
constrained setting. Therefore, this problem combines an exponentially
large search space (expensive) with small steps of 2 in the number of
qubits (scalable).
Concerning the resulting problem polynomials, note that
\begin{enumerate}[$\bullet$]  
\item
$\pac(\ov{x})=- \sum_{k=1}^i \big( x_{\vvary_{k-1},\vvarz_0} \cdot
x_{\vvary_{k},\vvarz_1} \ + \  x_{\vvary_{k-1},\vvarz_1} \cdot
x_{\vvary_{k},\vvarz_0}\big)$,
\item
$\punique(\ov{x}) = \sum_{k=0}^i \big( x_{\vvary_k,\vvarz_0}\cdot x_{\vvary_k,\vvarz_1}\big)$,
\item
$\setsize{\Tableau_{\textit{2cy}}}{\cdot}\setsize{\Tableau_{\textit{$i$-ch}}}
+1 = 2i{+}1$, and
\end{enumerate}  
thus, the resulting polynomial for \emph{2-cycle to $i$-chain}
has the form:
\[
p_{\textit{2cy\_$i$-ch}}(\ov{x}) \ \ = \  \
\pac(\ov{x}) \hspace{-3mm}\underbrace{+ \ \  \ (2i{+}1) \cdot
  \punique(\ov{x})\ .}_{\text{only present in unconstr.\ setting}}
\]  

It can be verified that this polynomial does not get reduced by
simplification, and every binary variable actually occurs in the polynomial. Moreover, the polynomial is of degree 2, thus the problem is in QUBO form, which allows to compare annealing and QAOA.

\paragraph{2-chain to $i$-star}
In order to find a homomorphism from an $i$-star to a 2-chain, a
necessary condition is to map the central node of the star to a node
with an outgoing edge. There are two possibilities to do so, which
lead to two homomorphisms $h$ and $h'$, given via
$h(\vvary_0)=\vvarz_0$ and $h(\vvary_k)=\vvarz_1$ or
$h'(\vvary_0)=\vvarz_1$ and $h'(\vvary_k)=\vvarz_2$ for all $k$ with
$1\leq k\leq i$.
The bit-matrix representations of $h$ and $h'$ have $i{+}1$ rows
$\vvary_0,\ldots,\vvary_i$ and $3$ columns
$\vvarz_0,\vvarz_1,\vvarz_2$. Concerning $h$, the row for $\vvary_0$
is of the form 100, and all further rows are of the form
010. Concerning $h'$,  row $\vvary_0$ has the form 010, and all
further rows are of the form 001. Concatenating the rows of each
bit-matrix yields their bitvector representations $100010010\cdots{}010$ and $010001001\cdots{}001$.
Importantly, mapping the central node $\vvary_0$ already determines the solution. This is illustrated in Figure~\ref{fig:star-chain-graph}, where the two mappings are indicated in blue and yellow.

Similarly to the other family, we chose a problem with a small-stepped
number of qubits, in this case $3(i+1)$, and a reasonably large search
space of $2^{3(i+1)}$
solution candidates in the unconstrained setting and $3^{i+1}$ candidates in the constrained setting. Furthermore, this problem family also leads to polynomials that are of degree 2 and are not affected by simplification. This enables us to perform a thorough scalability analysis and ensures comparability of the two algorithmic approaches.

Concerning the resulting problem polynomials, note that
\begin{enumerate}[$\bullet$]  
\item
$\pac(\ov{x})=- \sum_{k=1}^i \big( x_{\vvary_{0},\vvarz_0} \cdot
x_{\vvary_{k},\vvarz_1} \ + \  x_{\vvary_{0},\vvarz_1} \cdot
x_{\vvary_{k},\vvarz_2}\big)$,
\item
  $\punique(\ov{x}) = \sum_{k=0}^i \big(
  x_{\vvary_k,\vvarz_0}\cdot x_{\vvary_k,\vvarz_1}
  \ + \
  x_{\vvary_k,\vvarz_0}\cdot x_{\vvary_k,\vvarz_2}
  \ + \
  x_{\vvary_k,\vvarz_1}\cdot x_{\vvary_k,\vvarz_2}
  \big)$,
\item
$\setsize{\Tableau_{\textit{2ch}}}{\cdot}\setsize{\Tableau_{\textit{$i$-st}}}
+1 = 2i{+}1$, and
\end{enumerate}  
thus, the resulting polynomial for \emph{2-chain to $i$-star}
has the form:
\[
p_{\textit{2ch\_$i$-st}}(\ov{x}) \ \ = \  \
\pac(\ov{x}) \hspace{-3mm}\underbrace{+ \ \  \ (2i{+}1) \cdot
  \punique(\ov{x})\ .}_{\text{only present in unconstr.\ setting}}
\]

\begin{figure}[th]
	\centering
	\begin{subfigure}[th]{.49\linewidth}
		\includegraphics[scale=.25]{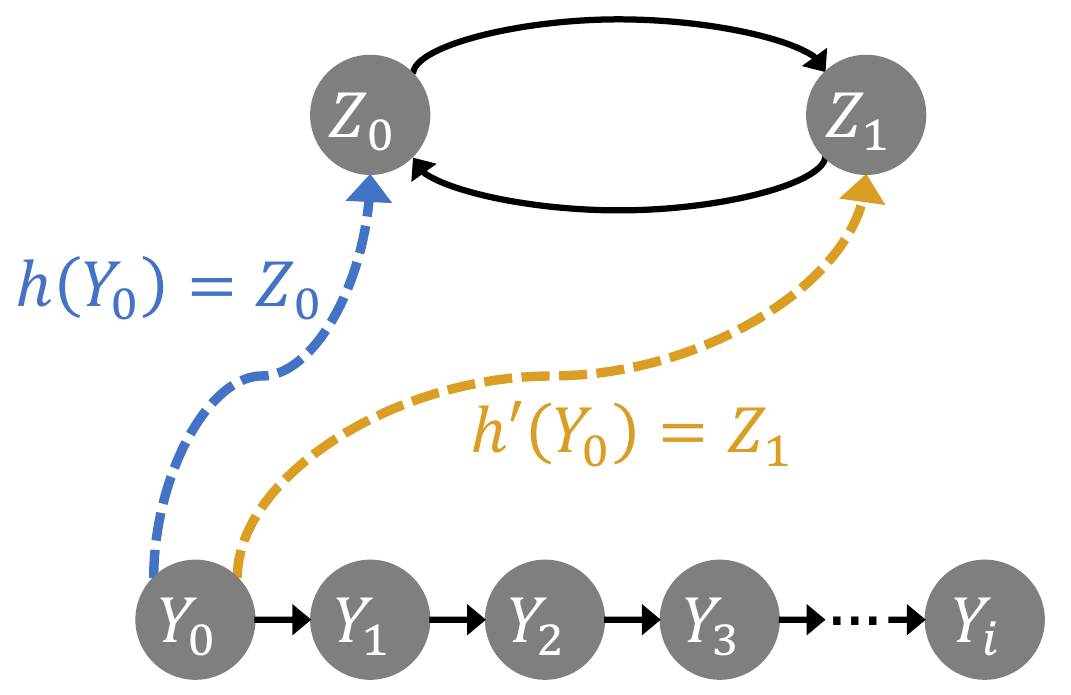}
		\caption{From $i$-chain to 2-cycle.}
		\label{fig:chain-cycle-graph}
	\end{subfigure}
	\hfill
	\begin{subfigure}[th]{.45\linewidth}
		\includegraphics[scale=.25]{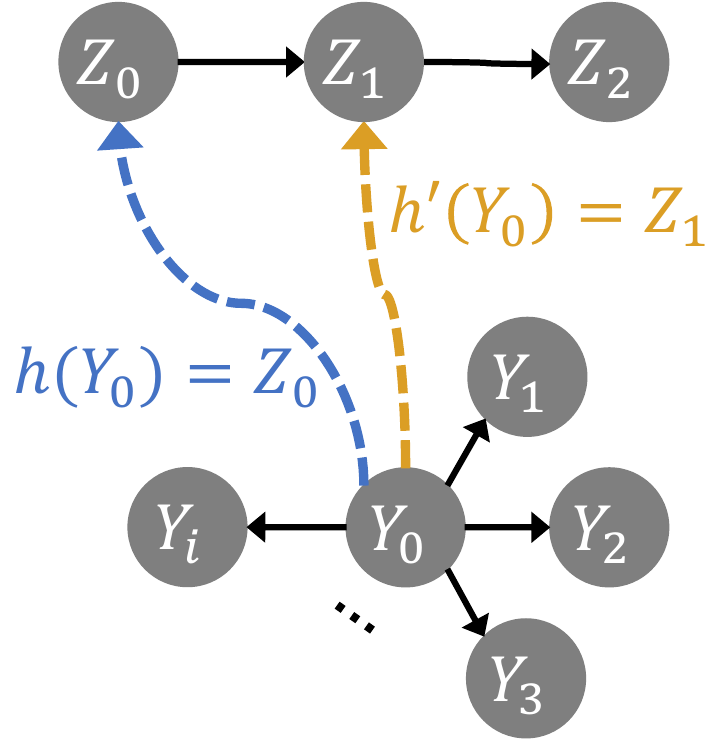}
		\caption{From $i$-star to 2-chain.}
		\label{fig:star-chain-graph}
	\end{subfigure}
	\caption{Two distinct homomorphisms, determined by blue/yellow mapping of $\vvary_0$, certify query containment.}
	\label{fig:problem_families}

\end{figure}

%% file: appendix_energy_landscapes.tex
\section{Details omitted in Section~\ref{sec:investigations}}
\label{app:energy_landscapes}

The actual performance of the algorithms considered in this paper depends on the individual problem structure.
While they have a certain expected performance on average, problem families can be constructed to precisely fit or not fit the nature of the algorithm. Such differences can become more apparent when artificial examples are considered. The course of our algorithms is influenced by the
energy landscape
that is explored during the run.
Recall that we represent a query pair $(\queryOne,\queryTwo)$, for which we want to decide whether $\queryOne\querycont\queryTwo$ holds, by a polynomial $p(\ov{x})$.
Every possible value of $p(\ov{x})$ for an assignment of the binary variables $\ov{x}$ with 0-1-values, can be regarded as an \enquote{energy level} or \enquote{value} of the objective function~$p(\ov{x})$. Thus, every possible solution candidate (i.e.,\ assignment of $\ov{x}$ with 0-1-values; henceforth they will be called \enquote{bitvectors},
\enquote{states} or \enquote{candidates}) corresponds to a point on the energy landscape.

The overall goal is to reach the global minimum by an annealing-inspired process. As explained in Section~\ref{sec:SAQA} and Section~\ref{app:SAQA_appendix}, the simulated annealing algorithm strives for states of lower or similar energy while sometimes (with exponentially decreasing probability) exploring neighboring states of higher energy than the current one (see the definition of $\textit{AcceptProb}$
in
Equation~\eqref{eq:Boltzmannprob} of
Section~\ref{app:SAQA_appendix}).

As we will see, the experimental results can be explained by differences in the distribution and connectedness of energy levels, so let us take a closer look at the corresponding energy landscapes. We first give an overview of the visible macro-effects and follow with a more detailed computation of the individual relevant sizes.

\paragraph{Overview}
Here, we focus on $(\queryOne,\queryTwo)$ being either the family of query pairs \emph{2-cycle to $i$-chain} or the family of query pairs \emph{2-chain to $i$-star}, for $i\in\NNpos$.
Our goal is to gain a better understanding of the observed scaling behavior depicted in Figure~\ref{fig:scaling_behavior}.
For both families, the energy levels can be partitioned into a very large portion of positive-valued states representing invalid
candidates, where at least one row of the represented matrix has more than one bit set to 1 (pos states), a number of zero-valued states representing incorrect mappings with respect to the tuples in $q_2$ (zero states), and an amount of negative-valued states representing at least partially correct mappings (neg states). The relative size of pos states approaches 100\% as $i$ grows, and there is an increasing energy gap of $2i{+}1$ to the zero and neg states, which is defined by the chosen \enquote{penalty weight}
$\setsize{\TableauOne}{\cdot}\setsize{\TableauTwo}+1$.
For the small portion of zero and neg states, it is worth investigating the development of their proportions towards each other. While both families start off with a relatively large zero plateau, its fraction declines towards 0 for  \emph{2-cycle  to $i$-chain} and approaches a 50/50 balance with the neg states for \emph{2-chain to $i$-star}. A summary of the development of all ratios is given in Table~\ref{tab:relative-sizes}. In order to interpret these ratios with respect to the experimental results, we also need to investigate the connectedness of the states.

\begin{table}[ht]
	\caption{Relative sizes of energy levels}
	\label{tab:relative-sizes}
	\centering
	\small
	\begin{subtable}{\linewidth}%
		\caption{2-cycle to $i$-chain}
		\centering
		\begin{tabular}{rrrrr}
			\toprule
			$i$ & \# bin. vars & $p_i(\text{pos})$ & $p_i(\text{neg}|\neg\text{pos})$ & $p_i(\text{0}|\neg\text{pos})$ \\
			\midrule
			1 & 4 & $\sim 44\%$ & $\sim 22\%$ & $\sim 78\%$ \\
			2 & 6 & $\sim 58\%$ & $\sim 37\%$ & $\sim 63\%$ \\
			3 & 8 & $\sim 68\%$ & $\sim 49\%$ & $\sim 51\%$ \\
			4 & 10 & $\sim 76\%$ & $\sim 59\%$ & $\sim 41\%$ \\
			\vdots & \vdots & \vdots & \vdots & \vdots \\
			$\infty$ & $\infty$ & 100\% & 100\% & 0\% \\
			\bottomrule
		\end{tabular}
		\label{tab:relative-sizes-chain-family}
	\end{subtable}
	\newline\vspace{3ex}\newline
	\begin{subtable}{\linewidth}%
		\caption{2-chain to $i$-star}
		\centering
		\begin{tabular}{rrrrr}
			\toprule
			$i$ & \# bin. vars & $p_i(\text{pos})$ & $p_i(\text{neg}|\neg\text{pos})$ & $p_i(\text{0}|\neg\text{pos})$ \\
			\midrule
			1 & 6 & $75\%$ & $\sim 13\%$ & $\sim 87\%$ \\
			2 & 9 & $\sim 88\%$ & $\sim 22\%$ & $\sim 78\%$ \\
			3 & 12 & $\sim 94\%$ & $\sim 29\%$ & $\sim 71\%$ \\
			4 & 15 & $\sim 97\%$ & $\sim 34\%$ & $\sim 66\%$ \\
			\vdots & \vdots & \vdots & \vdots & \vdots \\
			$\infty$ & $\infty$ & 100\% & 50\% & 50\% \\
			\bottomrule
		\end{tabular}
		\label{tab:relative-sizes-star-family}
	\end{subtable}
\end{table}

Figure~\ref{fig:energy_landscapes}
illustrates the connectedness of the energy landscapes. Each bubble represents a (set of) candidate(s) with the height indicating the corresponding energy level. The two lower-most states are the two distinct optimal solutions with their color emphasizing the distinct mapping of $\vvary_0$ to either $\vvarz_0$ (blue) or $\vvarz_1$ (yellow). An arrow between two bubbles indicates a direct connection between them, which means that they contain neighboring states. The arrows point towards the lower energy level, but there is also a low probability that the simulated annealing algorithm jumps in the reverse direction. 
\begin{figure}[th]
	\centering
	\begin{subfigure}[t]{0.48\textwidth}
		\centering
		\includegraphics[width=\textwidth]{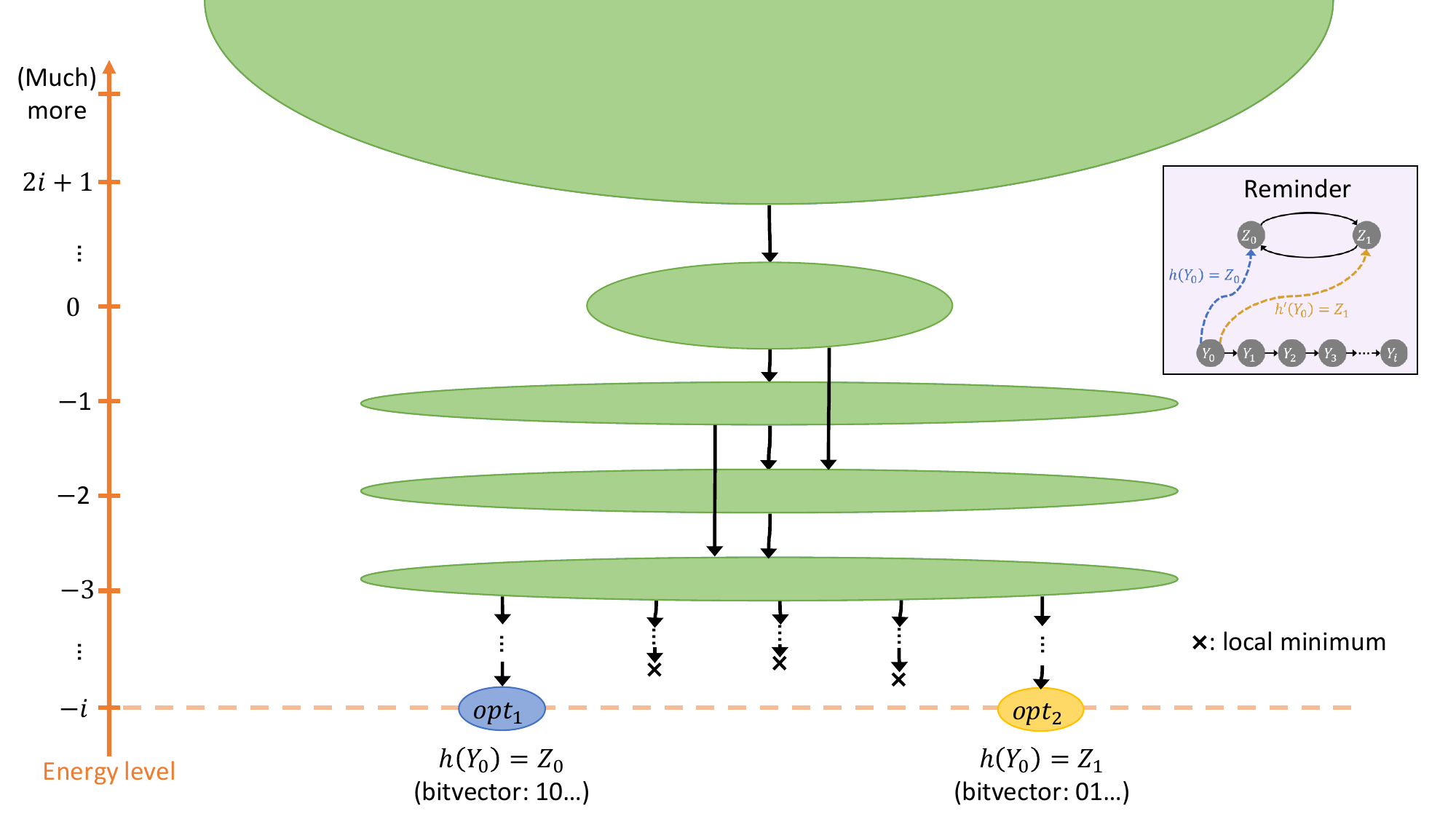}
		{\footnotesize 2-cycle-to-$i$-chain\par}
	\end{subfigure}%
	\hfill\vspace{6ex}
	\begin{subfigure}[t]{0.48\textwidth}
		\centering
		\includegraphics[width=\textwidth]{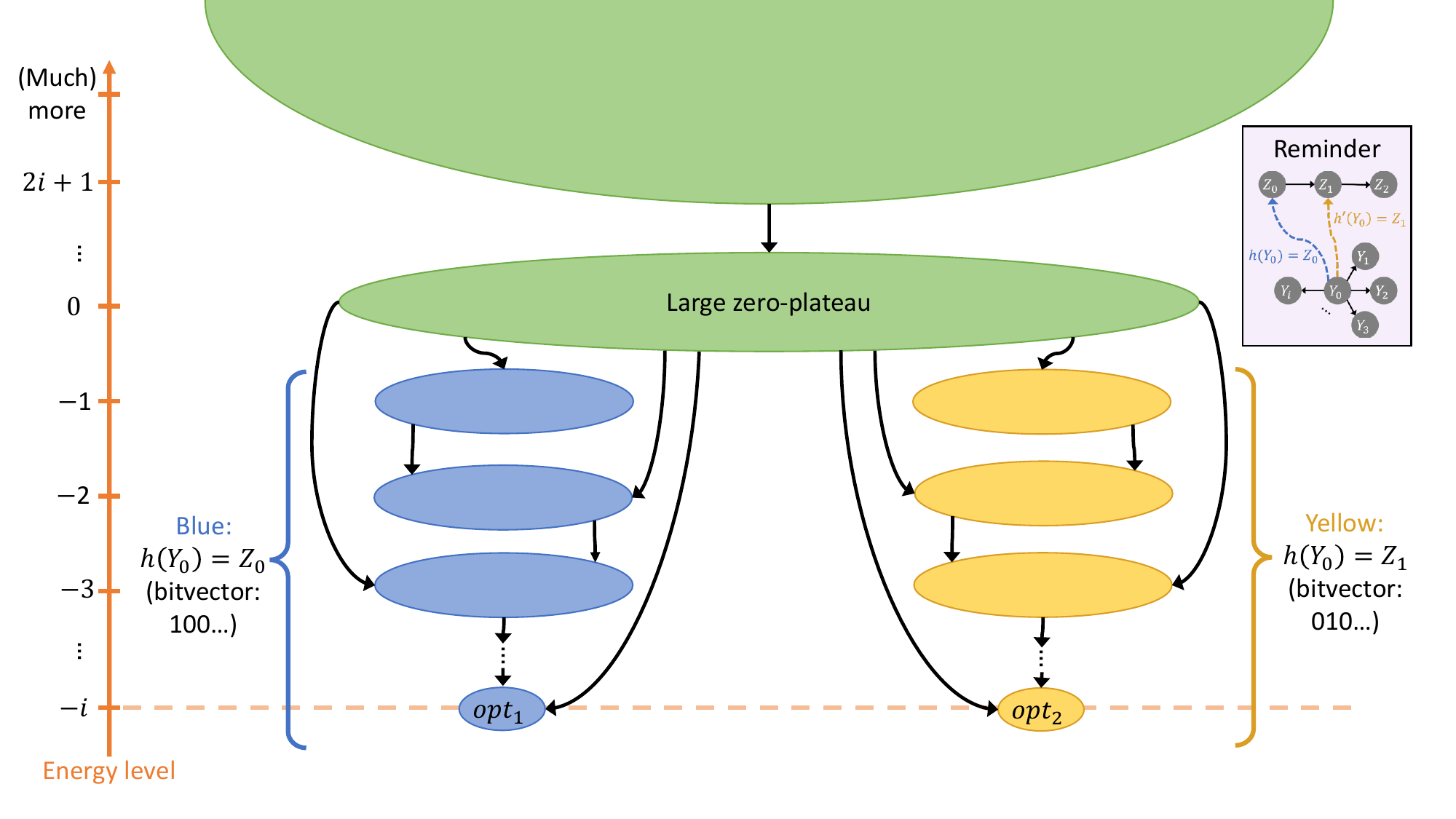}
		{\footnotesize 2-chain-to-$i$-star\par}
	\end{subfigure}
	\caption{Energy landscapes}

	\label{fig:energy_landscapes}
\end{figure}

For \emph{2-cycle to $i$-chain}, the negative energy levels are connected in a cascading/short-stepped way, such that connected neg states are at most two energy levels away from each other. This is due to the chain structure of the tuples in $q_2$. Every node of $q_2$ is part of at most two edges (and thus of at most two tuples), which means that changing one single bit can change at most two terms in $\pac(\ov{x})$. Therefore, reaching an optimal solution from zero level takes at least $\frac{i}{2}$ steps. Moreover, the energy landscape contains a number of local minima that are not optimal, which are indicated by crosses in the figure. They exist because independent parts of each of the two optimal solutions can contribute to a negative value within the same bitvector. Such a local but not global minimum could, for example, look like $LRLRLRRLRLLR$, where $L$ stands for $10$ (mapping a node from the chain to the left node $\vvarz_0$) and $R$ for $01$ (mapping it to the right node $\vvarz_1$). The example candidate consists of three parts ($LRLRLR;RLRL;LR$) that independently look like part of an optimal solution, and only the connections between the parts are off. In general,
those locally minimal (but not globally minimal) candidates look like piece-wise optimal candidates: sub-chains of size at least 2 are mapped correctly in itself (alternating between left and right) but alternate between the two optimal solutions. There is no way to decrease the value of that state without first increasing it by \enquote{unmapping} one of the nodes. These local minima occur below the energy levels of $-\frac{i}{2}$, and their number increases with growing~$i$, which plays a role in lowering the solution probabilities for that family.

For \emph{2-chain-to-$i$-star}, the neg states can be partitioned into two separate parts, half of which contain bitvectors starting with \enquote*{100} (central node $\vvary_0$ mapped to $\vvarz_0$, indicated in blue) and half of which start with \enquote*{010} (central node $\vvary_0$ mapped to $\vvarz_1$, indicated in yellow). This is because the central node of the star is part of every single tuple and thus of every negative term in the polynomial. Moreover, this means that there are no local minima except for the two global minima. Every neg state can reach the next lower energy level without increasing its energy by at most two steps. If one of the nodes in $\{\vvary_1,\ldots, \vvary_i\}$ has not been mapped correctly, it is either not mapped at all or incorrectly mapped. In the latter case, it can be unmapped without increasing the energy level and then correctly mapped in the next step. Regarding the first case, it can be directly mapped to reach the lower energy level. The absence of local-only minima is a key factor for the good performance observed in the experiments because the algorithm cannot get stuck.
\\
It remains to clarify why we observe lower solution probabilities for SA in the beginning,
i.e., for small values of $i$
(cf., Figure~\ref{fig:2-chain-to-i-star}). This can be explained as follows: The neg states are highly interconnected with the zero states, since every neg state has a direct zero neighbor where the first 1 is replaced by 0 (starting with \enquote*{000}, so the central node is unmapped). This holds specifically for the two optimal states, which also have a zero neighbor. As explained before, there is a low chance to jump out of the optimum into that higher-energy zero neighbor. For simulated annealing, we can directly compute that probability, which decreases exponentially as $i$ grows and annealing time passes by (see Equation~\eqref{eq:Boltzmannprob}). The counterprobability of staying in the optimum
is of the form $1-e^{-\gamma i}$ for a constant $\gamma>0$ (details on this are provided at the end of this section), which quickly approaches 1 for growing $i$.
Figure~\ref{fig:star_temperatures} illustrates the probability to stay in the optimum
for growing $i$ at $25\%$ of time and compares it to the measured solution probability of SA in the experiment that we showed in Figure~\ref{fig:2-chain-to-i-star}. Our hypothesis is that after 25\% of the time we are close to the critical point when the \enquote{escape probability} (i.e., the probability to escape from an optimal solution)
is still high enough to have an impact, but afterward quickly vanishes. This is backed up by the correlation between the 
two scatter plots shown in Figure~\ref{fig:star_temperatures}. In combination with the relatively large zero plateau for small $i$, jumping out of the optimum can cause the algorithm to get lost afterward, since the transition probability between states of the same energy level is always 1 (because then we have $c-c_\text{new}=0$ and $e^0=1$,
cf.\ Equation~\eqref{eq:Boltzmannprob})
and thus the algorithm can move to other zero states after escaping the optimum. Putting it all together, this explains the lower solution probabilities observed for small $i$ and their
growth as $i$ increases. For exact computations of the probabilities depicted in Figure~\ref{fig:star_temperatures}, see the end of the next paragraph.

\begin{figure}[htb]
	\centering

\includegraphics[width=0.6\linewidth]{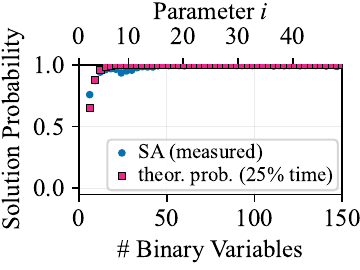}

  \caption{Measured solution probability (using SA simulator) vs.\ the theoretical probability for 2-chain-to-$i$-star to stay in optimal solution at 25\% of the annealing time}
    
 \label{fig:star_temperatures}
\end{figure}

\paragraph{Exact formulas for relative sizes and escape probabilities}

The relative sizes of the different state portions are computed
as follows.
A state $\ov{x}$ has a positive value if
$\punique(\ov{x})>0$, i.e., 
if the bit-matrix contains a row with at least two ones. For \emph{2-cycle to $i$-chain}, each row has length 2, so in three out of four possible 2-bit row-vectors, 
the row as at most one 1.
The portion of pos states is given by the counterprobability of all rows
having at most one 1:
\[
  \begin{array}{rcl}
    p_i^{(\text{chain})}(\text{pos})
 & = 
 & 1-p_i^{(\text{chain})}(\neg \text{pos})
\\
 & =
 & 1-p_i^{(\text{chain})}(\neg\text{pos}_\text{row})^{i+1}
    \ \ = \ \   1-\left(\frac{3}{4}\right)^{i+1}
\end{array}
\]
where $i{+}1$ comes from the number of rows in the bit-matrix.

Analogously, the probability for \emph{2-chain to $i$-star} can be computed with a row-length of 3, which leads to \[p_i^{(\text{star})}(\text{pos}) \ \ = \ \ \textstyle1-\left(\frac{1}{2}\right)^{i+1}\]
because in half of the eight possible 3-bit row-vectors,
the row has at most one 1.

Investigating only non-positive states (which is expressed in the following by using a conditional probability: $p(...|\neg\text{pos})$), we can further analyze their proportions. Since
each row contains at most one 1,
each row $\vvary_\ell$ can be interpreted as a mapping of the node $\vvary_\ell$, either to nothing (zero vector) or to one of the
variables
$\vvarz_k$
from \emph{2-cycle} or \emph{2-chain}, respectively.
\\
In the case of \emph{2-cycle to $i$-chain}, the state is negative if at least two consecutive nodes from the $i$-chain are mapped to opposite nodes from the 2-cycle. Abstractly, we can think of having an alphabet $\Sigma\deff\{N,L,R\}$
and the language $\mathbf{L}_{i}\deff \Sigma^{i+1}\cap(*(LR|RL)*)$ containing all words of length $i{+}1$ that contain the sub-word $LR$ or $RL$, which represents the portion of neg states. The size of that language can be computed (see \url{https://oeis.org/A193519}; accessed 2025/12/02) and is given by
\[|\mathbf{L}_{i}| \ \ = \ \ {\textstyle \frac{1}{2}}\cdot\Big(2 {\cdot} 3^{i+1}-(1{+}\sqrt{2})^{i+2}-(1{-}\sqrt{2})^{i+2}\Big).\]
The resulting fraction of all words of length $i{+}1$ is then given by
\[p_i^{(\text{chain})}(\text{neg} | \neg \text{pos}) \ = \ \frac{1}{2{\cdot} 3^{i+1}}\cdot \Big( 2 {\cdot} 3^{i+1}-(1{+}\sqrt{2})^{i+2}-(1{-}\sqrt{2})^{i+2}\Big)\]
which can also be written as
\ $p_i^{(\text{chain})}(\text{neg} | \neg \text{pos})  \ =$
\[
  1 \ - \
  \left(
    \frac{1{+}\sqrt{2}}{2} \cdot \left(\frac{1{+}\sqrt{2}}{3} \right)^{i+1} \ + \
    \frac{1{-}\sqrt{2}}{2} \cdot \left(\frac{1{-}\sqrt{2}}{3} \right)^{i+1} 
  \right)
\]  
and approaches 1 as $i$ increases.
Therefore, the opposite portion of zero states approaches 0.
\\
In the case of \emph{2-chain to $i$-star}, a state has a negative value if either the central node $\vvary_0$ of the $i$-star is mapped to $\vvarz_0$ and at least one of the other nodes is mapped to $\vvarz_1$, or if $\vvary_0$ is mapped to $\vvarz_1$ and at least one of the other nodes is mapped to $\vvarz_2$.
Given a correct mapping of the central node, each other node has a probability of~$\frac{3}{4}$ to not follow suit. So the fraction of zero states,
given that each row has at most one 1,
can be computed by
\[p_i^{(\text{star})}(0|\neg\text{pos})\ \ = \ \ \textstyle \frac{1}{2} \ + \ \frac{1}{2}\cdot\left(\frac{3}{4}\right)^{i}\ \ = \ \ \frac{1}{2}\left(1+\left(\frac{3}{4}\right)^{i}\right),\]
for which the probability of mapping the central node incorrectly is added to the probability of mapping the central node correctly, but all other nodes incorrectly.
The counterpart fraction of neg states is given by the counterprobability \[p_i^{(\text{star})}(\text{neg}|\neg\text{pos})\ \ = \ \ \textstyle\frac{1}{2}\left(1-\left(\frac{3}{4}\right)^{i}\right).\]
For growing $i$, their ratio approaches 50/50.

This explains how we computed the numbers in Table~\ref{tab:relative-sizes}.
Next, we explain the computations required to understand Figure~\ref{fig:star_temperatures}.

We assume that an optimal state has been reached and want to compute the probability to escape into the zero plateau when the algorithm encounters a zero neighbor. Following Algorithm~\ref{alg:simann}, this means that $\ov{x}$ is an
optimal state and $\ov{x}_\text{new}$ is
a zero neighbor. For their energy levels, we have $c=-i$ and $c_\text{new}=0$, so the probability of transition can be computed by
$\textit{AcceptProb}(T,i)$ with the given cooling schedule $T$. In the code implementation, the schedule's min and max are given by the values $\beta_\text{start}=\frac{1}{T_\text{high}k_b}\deff 0.5$ and $\beta_\text{end}=\frac{1}{T_\text{low}k_b}\deff 10$ and the algorithm follows a geometric schedule divided into 1000 steps, which means that for $\ell\in\NN$ with $1\leq \ell\leq 1000$,
we can compute
\[
\alpha=\sqrt[1000]{\frac{\beta_\text{end}}{\beta_\text{start}}}\quad\text{and}\quad
\beta_\ell=\beta_\text{start}\cdot\alpha^\ell.
\]
The accept probability for the energy difference of $i$ at $25\%$ of the annealing time is then given by
\[\textit{AcceptProb}(T_{25\%},i)\ \ = \ \ e^{-\beta_{250}\cdot i} \ \ = \ \ e^{-0.5\cdot\sqrt[4]{20}\cdot i},\]
where $T_{25\%}$ indicates using $\beta$ at 25\% of the time, which is (due to using 1000 steps) given by $\beta_{250}$.
Taking the counterprobability $1-e^{-\beta_{250}\cdot i}$ we have finally arrived at the probability of staying in the optimal solution after 25\% of the annealing time depending on~$i$, which is shown in Figure~\ref{fig:star_temperatures}.